\renewcommand{\P}{\mathbb{P}}
\newcommand{\E}{\mathbb{E}}
\newcommand{\R}{\mathbb{R}}
\newcommand{\N}{\mathbb{N}}
\newcommand{\indicator}{\mathbbm 1}						
\newcommand{\convd}{\overset{d}{\rightarrow}}
\newcommand{\convp}{\overset{p}{\rightarrow}}
\newcommand{\convdp}{\overset{d,p}{\longrightarrow}}
\newcommand{\convpp}{\overset{p,p}{\longrightarrow}}
\newcommand{\ep}{\epsilon}
\newcommand{\indep}{\perp\!\!\!\perp}
\newcommand{\dep}{\not\!\perp\!\!\!\perp}
\newtheorem{assumption}{Assumption}
\newtheorem{theorem}{Theorem}
\newtheorem{lemma}{Lemma}
\newtheorem{proposition}{Proposition}
\newtheorem{definition}{Definition}
\theoremstyle{definition}
\title{The permuted score test for robust \\ differential expression analysis}
\author{Timothy Barry$^1$, Ziang Niu$^2$, Eugene Katsevich$^2$, Xihong Lin$^1$ \\
$^1$Department of Biostatistics, Harvard University \\
$^2$Department of Statistics, University of Pennsylvania
}
\begin{document}
\maketitle

\begin{abstract}
Negative binomial (NB) regression is a popular method for identifying differentially expressed genes in genomics data, such as bulk and single-cell RNA sequencing data. However, NB regression makes stringent parametric and asymptotic assumptions, which can fail to hold in practice, leading to excess false positive and false negative results. We propose \textit{the permuted score test}, a new strategy for robust regression based on permuting score test statistics. The permuted score test provably controls type-I error across a much broader range of settings than standard NB regression while nevertheless approximately matching standard NB regression with respect to power (when the assumptions of standard NB regression obtain) and computational efficiency. We accelerate the permuted score test by leveraging emerging techniques for sequential Monte-Carlo testing and novel algorithms for efficiently computing GLM score tests. We apply the permuted score test to real and simulated RNA sequencing data, finding that it substantially improves upon the error control of existing NB regression implementations, including \texttt{DESeq2}. The permuted score test could enhance the reliability of differential expression analysis across diverse biological contexts.
\end{abstract}

\textbf{Keywords}: adaptive inference, differential expression, negative binomial regression, permutation test, robust inference
\\ \\ 
Differential expression (DE) analysis entails assessing whether a gene (or other feature of interest, such as a protein) exhibits variable activity across conditions. Identifying differentially expressed features is a fundamental statistical task in the analysis of many kinds of biological data, including single-cell RNA-seq \citep{Berge2020}, bulk RNA-seq \citep{Robinson2009}, ChIP-seq \citep{Eder2022}, spatial transcriptomics \citep{Cable2022}, microbiome \citep{Lutz2022}, and CRISPR screen \citep{Schmidt2022} data. The most widely used method for differential expression analysis is negative binomial (NB) regression; popular implementations of NB regression include \texttt{DESeq2} (which has accrued over 70,000 citations as of 2024; \cite{Love2014}) and \texttt{MASS} \citep{Ripley2020}. NB regression provides a simple and principled model for the count-based response, supports complex experimental designs involving multiple covariates, and offers competitive computational performance.

However, NB regression makes stringent parametric and asymptotic assumptions \citep{Nelder1972}, which can fail to hold in practice, leading to excess false positive and false negative errors. Parametric assumptions may be violated, for instance, if the NB dispersion parameter is incorrectly specified, if the response is zero-inflated (i.e., contains more zeroes than expected), or if key covariates or interaction terms are omitted from the model \citep{Li2022}. On the other hand, asymptotic assumptions may not hold if there are few samples or if the count-based response contains many small values \citep{Niu2024}. If either parametric or asymptotic assumptions break down, NB regression can produce highly misleading results. Several recent studies have demonstrated that popular NB regression implementations --- including \texttt{DESeq2} and \texttt{MASS} --- failed to control type-I error on negative control single-cell and bulk RNA-seq data, i.e.\ real data devoid of biological signal \citep{Barry2021,Li2022,Barry2024}. This loss of error control resulted from a violation of the NB regression modeling assumptions.

Our central claim is that NB regression can be made substantially more robust at minimal cost to power or computational efficiency by combining NB regression with emerging techniques for robust and adaptive inference. We propose the \textit{permuted score test}, a robust approach to NB regression based on the simple (although, to the best of our knowledge, novel) strategy of permuting NB GLM score test statistics. Extending theoretical tools developed by \citet{Diciccio2017} and \citet{Janssen1997}, we prove that the permuted score test controls type-I error across a much broader range of settings than standard NB regression. Specifically, if the ``treatment'' (i.e., the covariate of interest) is unconfounded by the nuisance covariates, then the permuted score test controls type-I error in finite samples and under arbitrary model misspecification. On the other hand, if the treatment is confounded, then the permuted score test controls type-I error in sufficiently large samples, even in situations where the dispersion parameter is incorrectly specified or inaccurately estimated. We refer to this key robustness phenomenon as ``confounder adjustment via marginal permutations,'' or CAMP.

To ensure computational efficiency, we implement several accelerations to reduce the running time of the permuted score test so that it is within about 1--3 times that of standard NB regression. First, we leverage strategies for adaptive permutation testing of multiple hypotheses via anytime-valid inference (\cite{Ramdas2023,Fischer2024a,Fischer2024b}), substantially reducing the number of permutations that must be computed across hypotheses. Second, building on the recent work of \citet{Barry2024}, we introduce a novel algorithm for efficiently computing GLM score tests, which is tens to hundreds of times faster than its classical counterpart. Finally, we implement the permuted score test in C++ for maximum speed. Extensive simulation studies demonstrate that the permuted score test maintains error control across a much broader range of settings than standard NB regression while at the same time approximately matching standard NB regression with respect to power (when the assumptions of standard NB regression are satisfied) and computational efficiency.

This paper is organized as follows. Section \ref{sec:background} reviews NB regression and introduces the genomics application that will serve as our working example throughout the paper, namely arrayed CRISPR screens with bulk RNA-seq readout \citep{Schmidt2022,Chardon2024}. Section \ref{sec:method_and_theory} studies the theoretical properties of the permuted score test and establishes the key robustness phenomenon of CAMP. Section \ref{sec:computational_accelerations} introduces the optimizations that we leverage to accelerate the method. Section \ref{sec:sim_studies} presents the results of an extensive array of simulation studies comparing the permuted score test to standard NB regression and other competitors. Finally, Section \ref{sec:real_data_analysis} discusses the analysis of real arrayed CRISPR screen dataset.

\section{Background}\label{sec:background}

Although we expect the permuted score test to apply broadly, we develop it in the context of a specific genomics application --- namely, arrayed CRISPR screens with bulk RNA-seq readout --- for concreteness. This section introduces this application and provides a brief overview of NB regression and its limitations.

\subsection{Arrayed CRISPR screens with bulk RNA-seq readout} An arrayed CRISPR screen with bulk RNA-seq readout (``arrayed CRISPR screen'' for short) is an experiment designed to identify the subset of genes whose expression changes in response to a CRISPR perturbation. To carry out an arrayed CRISPR screen experiment, we recruit $s$ donors and collect $t$ replicate tissue samples from each donor. (We collect a total of $n = st$ tissue samples across donors). Within each donor, we randomly assign half the tissue samples to the treatment condition and half to the control condition. We apply a targeting CRISPR perturbation to the treatment samples and a non-targeting CRISPR perturbation to the control samples. The non-targeting CRISPR perturbation acts as a placebo, exerting no impact on gene expression levels. We use bulk RNA-seq to profile the expression level of each of $m \approx 30,000$ genes contained within each tissue sample. Finally, we test for association between the expression level of each gene and treatment status to identify differentially expressed genes.
\subsection{Problem setup and NB regression} We formalize the differential expression testing problem as follows. Consider a given gene. We observe i.i.d.\ tuples $\{(X_i, Y_i, Z_i)\}_{i=1}^n,$ where $X_i \in \{0,1\}$ is the ``treatment'' covariate (i.e., the covariate of interest), $Y_i \in \{0, 1, 2, \dots\}$ is the expression level of the gene (quantified as the number of sequenced RNA transcripts mapping to the gene), and $Z_i \in \R^p$ is a low-dimensional vector of nuisance covariates expected to explain variability in gene expression but not of direct interest. In the context of the arrayed CRISPR screen application introduced above, each experimental unit $i$ represents a tissue sample; $X_i$ indicates whether the given sample was subjected to a targeting ($X_i = 1$) or non-targeting ($X_i = 0$) CRISPR perturbation; and $Z_i$ records sample-wise covariates, such as donor source and library size (i.e., the total number of transcripts sequenced across all genes in the sample). This setup extends beyond arrayed CRISPR screens to a wide range of other genomics applications. For example, in the context of single-cell RNA-seq, each experimental unit $i$ is an individual cell; $X_i$ may indicate whether the given cell belongs to a specified cell type ($X_i = 1$) or a different cell type ($X_i = 0$); and $Z_i$ captures cell-specific covariates, such as sequencing batch. Although we refer to $X_i$ as the ``treatment'' throughout, we allow the data to come from either a randomized controlled trial (RCT) or an observational study.

Our analysis objective is to test whether the gene is differentially expressed across treatment and control conditions. Formally, we seek to test the null hypothesis that the treatment $X_i$ is independent of gene expression $Y_i$ conditional on the nuisance covariates $Z_i$, i.e. 
\begin{equation}\label{ci_null_hyp}
H_0: X_i\indep Y_i \mid Z_i.
\end{equation}
The conditional independence null hypothesis (\ref{ci_null_hyp}) asserts that the treatment provides no information about gene expression above and beyond the information provided by the nuisance covariates. This model-free framing of the null hypothesis enables us to meaningfully define ``differential expression,'' even when the NB model is misspecified. If the treatment and nuisance covariates are independent (as they are in an RCT), the conditional independence null hypothesis (\ref{ci_null_hyp}) is equivalent to the marginal independence null hypothesis, $X_i \indep Y_i$.

The standard approach to testing the null hypothesis (\ref{ci_null_hyp}) is to perform an NB regression analysis. The NB regression model is as follows:
\begin{equation}\label{eqn:nb_glm}
Y_i \sim \textrm{NB}_{\phi}(\mu_i); \quad \log(\mu_i) = \gamma X_i + \beta^T Z_i; \quad (X_i, Z_i) \sim \mathcal{P}.
\end{equation}
Here, $\textrm{NB}_\phi(\mu_i)$ is a negative binomial distribution with mean $\mu_i$ and dispersion parameter $\phi$; $\gamma \in \R$ is the treatment coefficient; $\beta \in \R^p$ is the vector of nuisance coefficients; and $\mathcal{P}$ is an (unknown) probability distribution on $\R^{p+1}$. The variance of $Y_i$ (conditional on $X_i$ and $Z_i$) is $\mu_i + \phi\mu_i^2.$ The dispersion parameter captures overdispersion commonly present within genomics data; it either can be specified by the user or estimated from the data. The NB model (\ref{eqn:nb_glm}) reduces to a Poisson model when the dispersion parameter is zero.

To conduct an NB regression analysis, we regress $Y$ onto $X$ and $Z$ using maximum likelihood estimation, yielding estimates $\hat{\gamma}$ and $\hat{\beta}$ of $\gamma$ and $\beta$, respectively. Letting $\textrm{se}(\hat{\gamma})$ denote the standard error of $\hat{\gamma}$, we can compute a (e.g., left-tailed) Wald $p$-value for a test of the null hypothesis $\gamma = 0$ via $p = \Phi(\hat{\gamma}/\textrm{se}(\hat{\gamma})),$ where $\Phi$ is the standard Gaussian CDF. If the NB model (\ref{eqn:nb_glm}) is correctly specified, the conditional independence null hypothesis (\ref{ci_null_hyp}) holds if and only if $\gamma = 0$ \citep{Candes2018}. In this sense the Wald test can be interpreted as a test of conditional independence under correct model specification.

Typically, we are interested in testing not just a single hypothesis but rather a large number (e.g., hundreds to tens of thousands) of hypotheses, which generally correspond to different genes. Let $\mathcal{T}_1, \dots, \mathcal{T}_m$ denote $m$ sets of tuples, where the $j$th set of tuples $\mathcal{T}_j$ contains i.i.d.\ data $\{(X_i^j, Y_i^j, Z_i^j)_{i=1}^n\}$. We wish to test the conditional independence null hypothesis $\mathcal{H}^j_0 : (Y_i^j \indep X_i^j)\mid Z_i^j$ for each set of tuples $\mathcal{T}_j$, yielding a set $\mathcal{D} \subset \{\mathcal{H}^1_0, \dots, \mathcal{H}^m_0\}$ of putative discoveries (i.e., the ``discovery set''). We require that the discovery set $\mathcal{D}$ control the false discover rate (FDR) at a user-specified level $\alpha$. The standard approach to producing an FDR-controlling discovery set is to subject the NB GLM $p$-values (computed as described above) to a Benjamini-Hochberg (BH) correction \citep{Benjamini1995}.

\subsection{Shortcomings of NB regression}\label{sec:result_preview}
To assess the error control of NB regression on real data, we applied NB regression (implemented as \texttt{MASS} and \texttt{DESeq2}) to an arrayed CRISPR screen dataset \citep{Schmidt2022}. Following \citet{Li2022}, we randomly permuted the expression vector of each gene (independently of the other genes) to generate a negative control dataset for which the null hypothesis (\ref{ci_null_hyp}) held for each gene. Both \texttt{MASS} and \texttt{DESeq2} failed to control type-I error on the negative control data, as exhibited by the inflated $p$-values outputted by these methods (Figure \ref{fig:arrayed_screen_preview}a). We repeated this experiment across 1,000 Monte Carlo replicates and found that \texttt{MASS} and \texttt{DESeq2} made an average of 1,138 and 83 false discoveries (after a BH correction at level $0.1$) across replicates (Figure \ref{fig:arrayed_screen_preview}b). These results underscore the fragility of NB regression to modeling assumption violations on real data.

\begin{figure}
    \centering
    \includegraphics[width=0.75\linewidth]{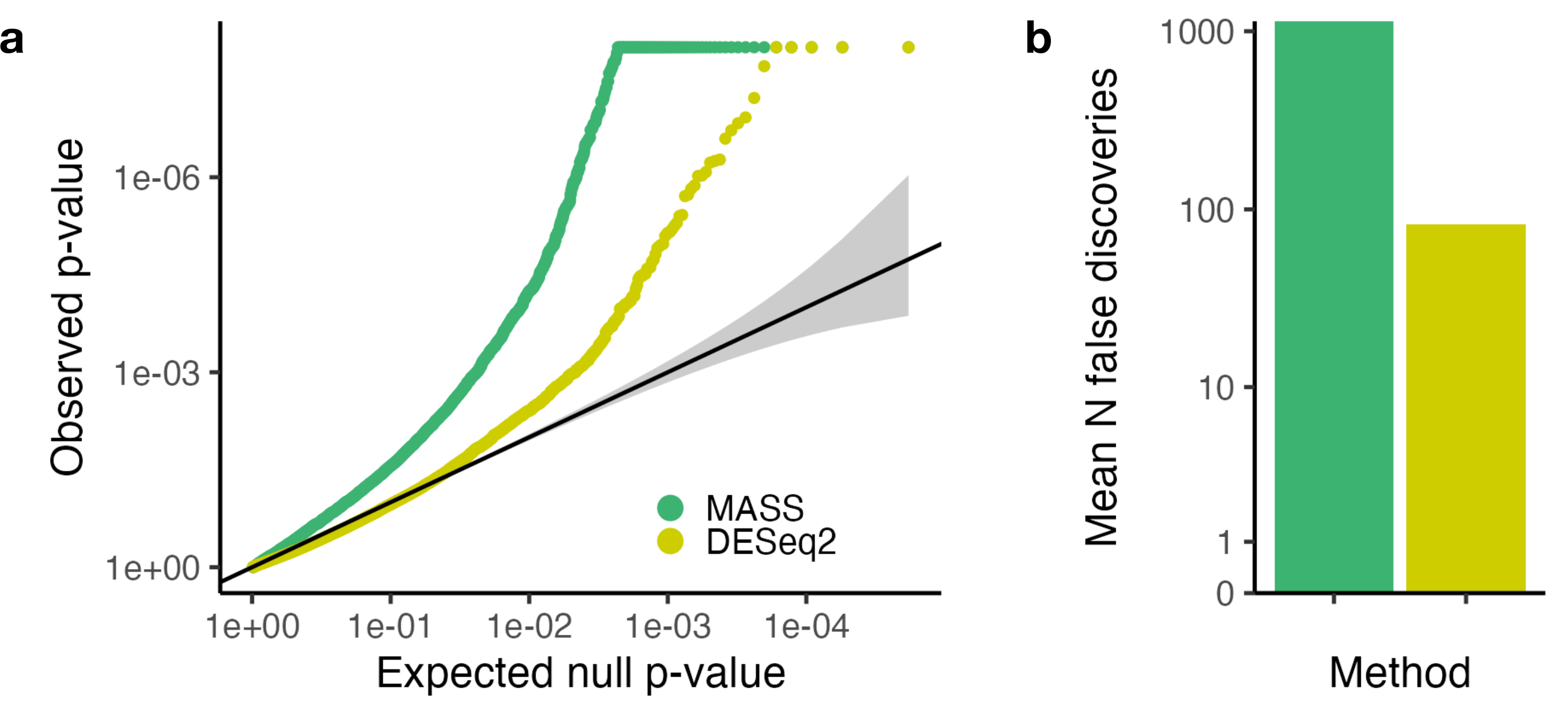}
    \caption{Application of NB regression (implemented as \texttt{MASS} and \texttt{DESeq2}) to negative control arrayed CRISPR screen data. \textbf{a}, QQ plot of $p$-values on one replicate of negative control data; \textbf{b}, average number of false discoveries across $B=1,000$ Monte Carlo replicates.}\label{fig:arrayed_screen_preview}
\end{figure}

\section{Permuted score test: method and theory}\label{sec:method_and_theory}

We propose the permuted score test, a permutation test based on an NB GLM score test statistic (Algorithm \ref{algo:permuting_score_stats}). Consider a set of i.i.d.\ tuples $\{(X_i, Y_i, Z_i)\}_{i=1}^n$. First, we regress the response vector $Y$ onto the nuisance covariate matrix $Z$ by fitting an NB GLM, producing a fit of the NB model (\ref{eqn:nb_glm}) under the null hypothesis. Next, we test the treatment vector $X$ for inclusion in the fitted model via a score test, yielding an ``original'' $z$-score $z_\textrm{orig}$. We randomly permute the treatment vector $B$ times, producing permuted treatment vectors $\tilde{X}_1, \dots, \tilde{X}_B$.  We then test each permuted treatment vector for inclusion in the fitted model, yielding ``null'' $z$-scores $z_1, \dots, z_B$. Finally, we compute a $p$-value by comparing the original $z$-score to the null $z$-scores. Importantly, the NB GLM is fitted only once and shared across the observed and permuted treatment vectors.

\begin{algorithm}
\caption{The permuted score test. Note that \texttt{fit} is shared across steps 2 and 4.}\label{algo:permuting_score_stats}
Regress $Y$ onto $Z$, yielding a fitted NB GLM, $\texttt{fit}.$

Compute $z_\textrm{orig} = T\left(\texttt{fit}, X\right)$, where $T$ is the score test statistic for adding $X$ as a covariate to the fitted model \texttt{fit}.

Randomly permute $X$, yielding permuted vectors $\tilde{X}_1, \dots, \tilde{X}_B.$

For each $b \in \{1, \dots, B\}$, compute $z_b = T(\texttt{fit}, \tilde{X}_b)$.

Compute a (e.g., left-tailed) $p$-value:
\begin{equation}\label{eqn:perm_p_value}
p = \frac{1}{B+1}\left(1 + \sum_{b=1}^B \mathbb{I}( z_b \leq z_\textrm{orig})\right).
\end{equation}
\end{algorithm}

The remainder of this paper aims to show that the permuted score test (Algorithm \ref{algo:permuting_score_stats}), although simple, enjoys highly favorable statistical and computational properties. This section proves that the permuted score test provides a valid test of conditional independence under a much broader range of settings standard NB regression. We begin by stating and proving Propositions \ref{thm:asy_distribution_misspecification} and \ref{thm:asy_permutation_distribution}, which establish the asymptotic \textit{sampling} and \textit{permutation} distributions, respectively, of the NB GLM score test statistic. We treat the dispersion parameter as specified by the user (as opposed to estimated from the data), as this permits a more precise theoretical analysis.

\subsection{Asymptotic results}\label{sec:asymptotic_results}

Denote the user-specified dispersion parameter by $\bar \phi$. Recall that the true dispersion parameter (i.e., the dispersion parameter used to generate the data (\ref{eqn:nb_glm})) is $\phi$. We do not assume that the dispersion parameter is correctly specified, i.e.\ we allow for $\bar \phi \neq \phi$. The score vector $U_{\bar \phi}(\beta)$ for the nuisance coefficients $\beta$ under the null hypothesis of $\gamma = 0$ is
\begin{align}\label{eqn:misspec_score_equation}
	U_{\bar \phi}(\beta)=\frac{1}{n}\sum_{i=1}^n \frac{Z_i(Y_i-\mu_i)}{1+\bar\phi \mu_i}; \quad \mu_i\equiv \exp(\beta^\top Z_i).
\end{align}
The score vector (\ref{eqn:misspec_score_equation}) is misspecified in the sense that $\bar \phi$ may not equal $\phi$. The estimator $\hat{\beta}_n$ is defined as a vector that satisfies $U_{\bar \phi}(\hat \beta_n) = 0.$ We can obtain $\hat{\beta}_n$ via an NB regression of $Y$ onto $Z$ using dispersion parameter $\bar \phi$. The score test statistic $T_n(X,Y,Z)$ for adding $X$ as a covariate to the fitted model (i.e., for testing $\gamma = 0$) is given by
\begin{align}\label{eq:score_statistic}
	T_n(X,Y,Z)\equiv \frac{X^\top  \hat r}{\sqrt{X^\top \hat W X- X^\top \hat WZ(Z^\top \hat WZ)^{-1}Z^\top \hat WX}},
\end{align}
where $\hat{r} \equiv [\hat{r}_1, \dots, \hat{r}_n]^T$, $\hat r_{i}\equiv (Y_{i}-\hat \mu_{i})/(1+ \bar \phi \hat \mu_{i})$, $\hat \mu_{i}\equiv \exp(\hat \beta_n^\top Z_{i}),$ $\hat W\equiv \mathrm{diag}\{\hat{W}_i \}$, and $\hat W_{i} \equiv \hat \mu_{i}/(1+ \bar\phi \hat \mu_{i}).$ We additionally define the random variables $W_i$ and $S_i$ by $W_i \equiv \mu_i /(1 + \bar \phi \mu_i)$ and $S_i \equiv \mu_i(1+\phi\mu_i)/(1+ \bar\phi\mu_i)^2$. Finally, let $R_i$ denote the expected residual from a least squares regression of $X$ onto $Z$ using weights $W_1,\dots, W_n$, i.e. $$R_i \equiv \sqrt{W_i} X_i-\E[X_iZ_i^\top W_i](\E[Z_iZ_i^\top W_i])^{-1}\sqrt{W_i}Z_i.$$

Propositions \ref{thm:asy_distribution_misspecification} and \ref{thm:asy_permutation_distribution} assume the following regularity conditions.
\begin{assumption}[Strong consistency of $\hat \beta_n$]\label{assu:consistency_beta}
	We assume $\|\hat\beta_n-\beta\| \rightarrow0$, almost surely.
\end{assumption}
\begin{assumption}[Bounded moments of treatment variable $X_{i}$]\label{assu:moment_treatment} We assume $\E[X_{i}]=0$ and $\E[X_{i}^2]=1$, without loss of generality. Furthermore, we assume $\E[X_i^{4}]<\infty$.
\end{assumption}
\begin{assumption}[Boundedness of covariate $Z_{i}$]\label{assu:boundedness_covariate}
	We assume there exists a constant $C > 0$ such that, for all $i \in \mathbb{N}$, $\|Z_{i}\| \leq C$ almost surely.
\end{assumption}
\begin{assumption}[Presence of an intercept term]\label{assu:intercept}
	We assume that $Z_i$ contains an entry of ``1'' corresponding to the intercept.
\end{assumption}
\begin{assumption}[Invertibility of covariance matrix]\label{assu:lower_bound_eigenvalue}
	We assume the population covariance matrix $\E[Z_iZ_i^\top]$ is invertible.
\end{assumption}

Assumption \ref{assu:consistency_beta} is sensible because the dispersion parameter $\phi$ and regression coefficients $\beta$ are orthogonal in the NB model (\ref{eqn:nb_glm}). Under mild regularity conditions, and assuming stochastic regressors (as in our setup), $\hat{\beta}_n$ converges strongly to $\beta$, even if the dispersion parameter is misspecified \citep{Fahrmeir1985,Cox1987}. Assumptions \ref{assu:moment_treatment}-\ref{assu:boundedness_covariate} are mild conditions on the covariates $X_i$ and $Z_i$, and Assumption \ref{assu:lower_bound_eigenvalue} always can be satisfied by removing redundant covariates from $Z_i$.

We are now prepared to state Proposition \ref{thm:asy_distribution_misspecification}, which derives the asymptotic \textit{sampling} distribution of the NB GLM score test statistic under possibly misspecified dispersion.

\begin{proposition}[Asymptotic sampling distribution of score test statistic.]\label{thm:asy_distribution_misspecification}
Suppose that the data $(X,Y,Z)$ are generated from the NB GLM (\ref{eqn:nb_glm}) and that Assumptions \ref{assu:consistency_beta}-\ref{assu:lower_bound_eigenvalue} hold. Then under the null hypothesis (of $\gamma=0$ in 
(\ref{eqn:nb_glm})), 
	\begin{align*}
		T_n(X,Y,Z)\overset{d}{\rightarrow} N(0,\sigma_s^2), \quad \textup{where} \quad \sigma_s^2\equiv\frac{\E\left[\frac{R_i^2 S_i}{W_i} \right]}{\E\left[R_i^2\right]}.
	\end{align*}
\end{proposition}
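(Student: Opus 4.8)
The plan is the standard ``linearize, then apply a central limit theorem and Slutsky'' route. Throughout, write $r_i \equiv (Y_i - \mu_i)/(1+\bar\phi\mu_i)$ for the residuals evaluated at the \emph{true} $\beta$ (with $\mu_i = \exp(\beta^\top Z_i)$), so $\hat r_i$ is the same quantity with $\hat\beta_n$ in place of $\beta$. Two conditional-moment identities drive everything. First, since $\gamma = 0$ and the data follow (\ref{eqn:nb_glm}), $\E[Y_i \mid X_i, Z_i] = \mu_i$, hence $\E[r_i \mid X_i, Z_i] = 0$. Second, the NB variance formula gives $\mathrm{Var}(Y_i \mid X_i, Z_i) = \mu_i + \phi\mu_i^2 = \mu_i(1 + \phi\mu_i)$, so
\[
\E[r_i^2 \mid X_i, Z_i] = \frac{\mu_i(1+\phi\mu_i)}{(1+\bar\phi\mu_i)^2} = S_i.
\]
This is precisely why $S_i$ — and not $W_i$ — appears in the answer: the score statistic is built to ``expect'' per-observation variance $W_i$ but the residuals truly carry variance $S_i$, and the two coincide exactly when $\bar\phi = \phi$.

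For the denominator, boundedness of $Z_i$ (Assumption \ref{assu:boundedness_covariate}) makes $\mu_i$, and hence $W_i = \mu_i/(1+\bar\phi\mu_i)$, bounded above and below by positive constants uniformly in $i$; combined with $\hat\beta_n \to \beta$ a.s.\ (Assumption \ref{assu:consistency_beta}) this gives $\max_i|\hat W_i - W_i| \to 0$ a.s. Together with the SLLN and $\E[X_i^2] < \infty$ (Assumption \ref{assu:moment_treatment}), the matrices $\tfrac1n X^\top\hat W X$, $\tfrac1n X^\top\hat W Z$, $\tfrac1n Z^\top\hat W Z$ converge a.s.\ to $\E[W_iX_i^2]$, $\E[W_iX_iZ_i^\top]$, $\E[W_iZ_iZ_i^\top]$, the last invertible because $\E[W_iZ_iZ_i^\top] \succeq (\min_i W_i)\,\E[Z_iZ_i^\top] \succ 0$ by Assumption \ref{assu:lower_bound_eigenvalue}. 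Hence the squared denominator divided by $n$ converges a.s.\ to $\E[W_iX_i^2] - \E[W_iX_iZ_i^\top](\E[W_iZ_iZ_i^\top])^{-1}\E[W_iZ_iX_i]$, and expanding the square in the definition of $R_i$ — the cross and quadratic terms collapse by idempotence of the $W$-weighted projection onto $Z$ — shows this equals $\E[R_i^2]$, which I take to be strictly positive. For the rate of $\hat\beta_n$: the mean value theorem applied coordinatewise to $0 = U_{\bar\phi}(\hat\beta_n)$ gives $\hat\beta_n - \beta = -[\nabla U_{\bar\phi}(\tilde\beta_n)]^{-1}U_{\bar\phi}(\beta)$ for some $\tilde\beta_n$ between $\hat\beta_n$ and $\beta$; a uniform-over-a-neighborhood SLLN (using bounded $Z_i$ and $\tfrac1n\sum_i Y_i \to \E[\mu_i] < \infty$) yields $\nabla U_{\bar\phi}(\tilde\beta_n) \to -\E[W_iZ_iZ_i^\top]$ a.s., while $U_{\bar\phi}(\beta) = \tfrac1n\sum_i Z_ir_i = O_p(n^{-1/2})$ by the CLT, so $\hat\beta_n - \beta = (\E[W_iZ_iZ_i^\top])^{-1}\tfrac1n\sum_i Z_ir_i + o_p(n^{-1/2})$.

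Next I linearize the numerator. Writing $g(b) \equiv \sum_i X_i(Y_i - e^{b^\top Z_i})/(1+\bar\phi e^{b^\top Z_i})$, a second-order Taylor expansion of $g$ about $\beta$ gives $X^\top\hat r = g(\hat\beta_n) = \sum_i X_ir_i + \nabla g(\beta)^\top(\hat\beta_n - \beta) + \tfrac12(\hat\beta_n - \beta)^\top\nabla^2 g(\check\beta_n)(\hat\beta_n - \beta)$. A direct computation gives $\tfrac1n\nabla g(\beta) = -\tfrac1n\sum_i X_i\tfrac{(1+\bar\phi Y_i)\mu_i}{(1+\bar\phi\mu_i)^2}Z_i \to -\E[W_iX_iZ_i^\top]$ a.s.\ (take the conditional expectation of $Y_i$ given $(X_i,Z_i)$; finite mean by Cauchy--Schwarz with $\E[X_i^2], \E[Y_i^2] < \infty$), and $\|\nabla^2 g(\check\beta_n)\| \le C\sum_i|X_i|(1+\bar\phi Y_i) = O_p(n)$ since $\mu_i$ stays bounded along the segment. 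Combining with the $O_p(n^{-1/2})$ rate above, the quadratic remainder is $O_p(n)\cdot O_p(n^{-1}) = o_p(\sqrt n)$, the error from replacing $\tfrac1n\nabla g(\beta)$ by its limit is $o_p(1)\cdot O_p(n^{-1/2})\cdot n = o_p(\sqrt n)$, and the $o_p(n^{-1/2})$ term from $\hat\beta_n$ contributes $o_p(\sqrt n)$. Hence
\[
\tfrac{1}{\sqrt n}X^\top\hat r = \tfrac{1}{\sqrt n}\sum_i\big(X_i - \E[W_iX_iZ_i^\top](\E[W_iZ_iZ_i^\top])^{-1}Z_i\big)r_i + o_p(1) = \tfrac{1}{\sqrt n}\sum_i\tfrac{R_i}{\sqrt{W_i}}\,r_i + o_p(1),
\]
the last equality being the identity $\sqrt{W_i}\big(X_i - \E[W_iX_iZ_i^\top](\E[W_iZ_iZ_i^\top])^{-1}Z_i\big) = R_i$, immediate from the definition of $R_i$.

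Finally, the summands $R_ir_i/\sqrt{W_i}$ are i.i.d.\ with mean $\E[(R_i/\sqrt{W_i})\,\E[r_i\mid X_i,Z_i]] = 0$ and variance $\E[(R_i^2/W_i)\,\E[r_i^2\mid X_i,Z_i]] = \E[R_i^2 S_i/W_i]$ (finite, since $S_i$ is bounded and $\E[X_i^2] < \infty$), so the classical CLT gives $\tfrac1{\sqrt n}\sum_i R_ir_i/\sqrt{W_i} \convd N(0, \E[R_i^2 S_i/W_i])$; dividing by the denominator, which by the second paragraph equals $\sqrt n\,(\E[R_i^2] + o(1))^{1/2}$, and invoking Slutsky's theorem gives $T_n \convd N(0,\, \E[R_i^2 S_i/W_i]/\E[R_i^2]) = N(0,\sigma_s^2)$. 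The main obstacle is the pair of linearization steps: upgrading the bare a.s.\ consistency of Assumption \ref{assu:consistency_beta} to a $\sqrt n$-rate (which needs a uniform SLLN for the Hessian $\nabla U_{\bar\phi}$ over a neighborhood of $\beta$) and then showing the remainders in the expansions of both $\hat\beta_n$ and $X^\top\hat r$ are $o_p(\sqrt n)$. These hinge on boundedness of $Z_i$ — which keeps $\mu_i, W_i, S_i$ bounded and the relevant Hessians Lipschitz — together with $\E[X_i^4] < \infty$ (Assumption \ref{assu:moment_treatment}) and $\E[Y_i^2] < \infty$ (free, from bounded $Z_i$) to dominate the terms carrying $X_i$ and the NB-distributed $Y_i$. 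By contrast, identifying the limiting variance as $\sigma_s^2$ is routine once the linearization is in hand, since $R_i/\sqrt{W_i}$ is just $X_i$ with its $W$-weighted $Z$-projection removed and $\E[r_i^2 \mid X_i, Z_i] = S_i$.
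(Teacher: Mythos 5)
Your proposal is correct and follows essentially the same route as the paper: asymptotic linearity of $\hat{\beta}_n$ via the mean value theorem applied to the misspecified score equation, a Taylor linearization of the numerator that reduces it to the i.i.d.\ sum $n^{-1/2}\sum_i \bigl(X_i - \E[W_iX_iZ_i^\top](\E[W_iZ_iZ_i^\top])^{-1}Z_i\bigr)\xi_i$ with conditional variance $S_i$, consistency of the squared denominator to $\E[R_i^2]$, and Slutsky. The only (cosmetic) difference is that you Taylor-expand the full residual map $b \mapsto (Y_i - e^{b^\top Z_i})/(1+\bar\phi e^{b^\top Z_i})$, so the weight in the denominator varies with $b$, whereas the paper expands only $\hat{\mu}_i$ while holding the weight at $1+\bar\phi\mu_i$; your variant tracks the definition of $\hat{r}_i$ at least as faithfully and changes nothing substantive.
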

If the dispersion parameter is correctly specified (i.e., $\bar \phi = \phi$), then $S_i = W_i$, implying that $\sigma_s^2 = 1$. In this case the score test statistic $T_n(X,Y,Z)$ converges weakly to a standard Gaussian, consistent with classical theory.

Next, Proposition \ref{thm:asy_permutation_distribution} derives the asymptotic \textit{permutation} distribution of the NB GLM score test statistic. To state Proposition \ref{thm:asy_permutation_distribution} carefully, we require the following notion of conditional convergence in distribution, which we borrow from \citet{Niu2022}.

\begin{definition}[Conditional convergence in distribution] \label{def:conditional-convergence-distribution}
	Let $W_n$ be a sequence of random variables and $\mathcal F_n$ a sequence of $\sigma$-algebras. We say that $W_n$ converges in distribution to a random variable $W$ conditionally on $\mathcal F_n$ if, for all $t \in \R$ at which the CDF of $W$ is continuous,
	\begin{equation*}
		\P[W_n \leq t \mid \mathcal F_n] \convp \P[W \leq t].
	\end{equation*}
	We denote this relation via $W_n \mid \mathcal F_n \convdp W$.
\end{definition}

\begin{proposition}[Asymptotic permutation distribution of score test statistic]\label{thm:asy_permutation_distribution}
Suppose that the data $(X,Y,Z)$ are generated from the NB GLM (\ref{eqn:nb_glm}) and that Assumptions \ref{assu:consistency_beta}-\ref{assu:lower_bound_eigenvalue} hold. Let $\mathcal{F}_n$ be the sequence of $\sigma$-algebras generated by the data. Let $\pi$ be a uniform draw from the permutation group of $\{1, \dots, n\}$, independent of $\mathcal{F}_n$. Then under the null hypothesis (of $\gamma=0$ in \ref{eqn:nb_glm}),
\begin{align*}
	T_n(X_{\pi},Y,Z) \mid \mathcal{F}_n \convdp N(0,\sigma_p^2); \quad \sigma_p^2 \equiv \frac{\E[S_i]}{\E[W_i]}.
\end{align*}
\end{proposition}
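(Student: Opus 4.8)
The plan is to condition on $\mathcal{F}_n$ — so that the fitted model, and in particular $\hat\beta_n$, $\hat r$, and $\hat W$, are all frozen — and to treat $T_n(X_\pi,Y,Z)$ as a function of the single random object $\pi$. Write $N_\pi \equiv X_\pi^\top \hat r = \sum_{i=1}^n X_{\pi(i)}\hat r_i$ for the numerator and $D_\pi^2 \equiv X_\pi^\top \hat W X_\pi - X_\pi^\top \hat W Z(Z^\top \hat W Z)^{-1}Z^\top \hat W X_\pi$ for the squared denominator, so $T_n(X_\pi,Y,Z) = N_\pi / D_\pi$ and $D_\pi^2 = X_\pi^\top \hat W^{1/2}(I-P)\hat W^{1/2}X_\pi \geq 0$ with $P$ the orthogonal projection onto the column space of $\hat W^{1/2}Z$. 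The goal is then to show $D_\pi^2/n \mid \mathcal{F}_n \convpp \E[W_i]$ and $N_\pi/\sqrt n \mid \mathcal{F}_n \convdp N(0,\E[S_i])$, and to combine the two by Slutsky.

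For the denominator, the first- and second-moment formulas for quadratic and linear permutation sums give $\E_\pi[X_\pi^\top \hat W X_\pi] = \tfrac 1n(\sum_i \hat W_i)(\sum_j X_j^2)$ and $\mathrm{Var}_\pi(X_\pi^\top \hat W X_\pi) = \tfrac{1}{n-1}\big(\sum_i(\hat W_i - \bar{\hat W})^2\big)\big(\sum_j(X_j^2 - \overline{X^2})^2\big) = O(n)$, where the $O(n)$ bound uses that $\hat W_i$ is bounded (Assumptions \ref{assu:consistency_beta} and \ref{assu:boundedness_covariate} force $\hat\mu_i$, hence $\hat W_i$, into a compact subset of $(0,\infty)$) and that $\tfrac 1n\sum_j X_j^4 \to \E[X_i^4] < \infty$ (Assumption \ref{assu:moment_treatment}). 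Combined with $\tfrac 1n\sum_i \hat W_i \to \E[W_i]$ and $\tfrac 1n\sum_j X_j^2 \to 1$, which hold almost surely by the same uniform-LLN arguments used to prove Proposition \ref{thm:asy_distribution_misspecification}, this yields $\tfrac 1n X_\pi^\top \hat W X_\pi \mid \mathcal{F}_n \convpp \E[W_i]$. For the cross term, $\tfrac 1n Z^\top \hat W Z \to \E[W_i Z_i Z_i^\top]$ almost surely (invertible by Assumptions \ref{assu:intercept}--\ref{assu:lower_bound_eigenvalue}), while $\E_\pi[\tfrac 1n Z^\top \hat W X_\pi] = \bar X \cdot \tfrac 1n\sum_i \hat W_i Z_i \to 0$ — here Assumption \ref{assu:moment_treatment} ($\E[X_i]=0$) is what forces $\bar X \to 0$ — with conditional variance $O(1/n)$; hence $\tfrac 1n Z^\top \hat W X_\pi \mid \mathcal{F}_n \convpp 0$, the cross term is $o_p(n)$, and $D_\pi^2/n \mid \mathcal{F}_n \convpp \E[W_i]$.

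The crux is the conditional CLT for $N_\pi$. Because $Z$ contains an intercept (Assumption \ref{assu:intercept}), the defining score equation $U_{\bar\phi}(\hat\beta_n)=0$ implies $\sum_i \hat r_i = 0$, so $\E_\pi[N_\pi] = \bar X\sum_i\hat r_i = 0$, and the variance formula for linear permutation statistics gives $\tfrac 1n\mathrm{Var}_\pi(N_\pi) = \tfrac{n}{n-1}\big(\tfrac 1n\|\hat r\|^2\big)\big(\tfrac 1n\sum_j(X_j-\bar X)^2\big)$, which converges almost surely to $\E[r_i^2]\cdot 1 = \E[S_i]$ upon noting $\E[r_i^2\mid Z_i] = \mathrm{Var}(Y_i\mid Z_i)/(1+\bar\phi\mu_i)^2 = \mu_i(1+\phi\mu_i)/(1+\bar\phi\mu_i)^2 = S_i$. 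For the asymptotic normality I would invoke a combinatorial CLT for linear permutation statistics $\sum_i a_i b_{\pi(i)}$ — Hoeffding's, in the form exploited by \citet{Janssen1997} and \citet{Diciccio2017} — whose Lyapunov-type hypothesis reduces here to boundedness of $\tfrac 1n\sum_i \hat r_i^4 / (\tfrac 1n\sum_i \hat r_i^2)^2$ and of $\tfrac 1n\sum_i X_i^4 / (\tfrac 1n\sum_i X_i^2)^2$. The second is bounded via $\E[X_i^4] < \infty$; the first converges to $\E[r_i^4]/\E[S_i]^2$, finite because the fourth central moment of an NB variable is a polynomial in its (here bounded) mean, once a uniform law of large numbers is used to absorb the dependence of $\hat r_i$ on $\hat\beta_n$. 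These conditions hold for $\mathbb{P}$-almost every path of $\mathcal{F}_n$, so the combinatorial CLT applies pathwise; the resulting a.s.\ convergence of the conditional CDF of $N_\pi/\sqrt n$ implies the convergence in probability demanded by Definition \ref{def:conditional-convergence-distribution}, i.e.\ $N_\pi/\sqrt n \mid \mathcal{F}_n \convdp N(0,\E[S_i])$.

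Finally, combining the two statements via a Slutsky-type lemma for conditional convergence in distribution (of the kind established in \citet{Niu2022}) gives $T_n(X_\pi,Y,Z) = \big(N_\pi/\sqrt n\big)\big/\sqrt{D_\pi^2/n} \mid \mathcal{F}_n \convdp N(0,\E[S_i])/\sqrt{\E[W_i]} = N(0,\sigma_p^2)$, as claimed. I expect the main obstacle to be making the conditional CLT fully rigorous: one must verify that a combinatorial CLT with a \emph{data-dependent} array $\{\hat r_i\}$ applies along almost every sample path, which rests on a uniform LLN controlling $\tfrac 1n\sum_i \hat r_i^{k}$ for $k\in\{2,4\}$ uniformly over a neighborhood of $\beta$ (the same technical device required by Proposition \ref{thm:asy_distribution_misspecification}), and on carefully reconciling pathwise CDF convergence with the mode of convergence in Definition \ref{def:conditional-convergence-distribution}.
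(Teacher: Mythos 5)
Your proposal is correct and follows essentially the same route as the paper: the same decomposition into a permutation CLT for $X_\pi^\top \hat r/\sqrt{n}$ (conditional mean zero via the intercept and the score equation $Z^\top\hat r = 0$), Hoeffding mean/variance identities giving $\tfrac{1}{n}X_\pi^\top \hat W X_\pi \to \E[W_i]$ and killing the cross term, identification of $\tfrac{1}{n}\sum_i \hat r_i^2 \to \E[\xi_i^2] = \E[S_i]$, and a conditional Slutsky step. The only divergence is minor: you invoke Hoeffding's combinatorial CLT pathwise under an almost-sure fourth-moment (Noether) condition on both arrays, whereas the paper applies the H\'ajek-type CLT (Lemma \ref{lem:hajek_clt}) with a truncation (Lindeberg) condition on the studentized residuals and a tail condition on $X$; in either case the verification reduces to exactly the bounds the paper establishes in Lemmas \ref{lem:xi_r_diff}, \ref{lem:second_moment_r_lower_bound}, and \ref{lem:truncation_condition_Hajek_CLT}, so your anticipated "main obstacle" (handling the data-dependent array $\{\hat r_i\}$ along almost every path) is resolved by the same mean-value-theorem plus strong-consistency argument the paper uses.
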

Informally, Proposition \ref{thm:asy_permutation_distribution} states $T_n(X_\pi, Y, Z)\mid (X,Y,Z) \convd N(0, \sigma_p^2),$ i.e.\ the permuted score test statistic $T_n(X_\pi,Y,Z)$ converges in distribution to a $N(0,\sigma_p^2)$ variate conditional on the original data $(X,Y,Z)$. If the dispersion parameter is correctly specified (i.e., $\bar \phi = \phi$), then $\sigma_p^2 = 1$, implying that the permuted score test statistic converges conditionally in distribution to a standard Gaussian. Importantly, Proposition \ref{thm:asy_permutation_distribution} does not make any assumptions about the dependence structure between $X_i$ and $Z_i$. The proof of Proposition \ref{thm:asy_permutation_distribution} involves extending techniques developed by \citet{Diciccio2017} for studying the asymptotic permutation distribution of the Wald statistic in a linear model. The primary technical tool is a central limit theorem for ranks due to H\'ajek \citep{Janssen1997}.

An important fact is that the standard deviation of the asymptotic sampling distribution $\sigma_s$ and that of the permutation distribution $\sigma_p$ are often quite similar in practice, even if the dispersion parameter is misspecified. To illustrate this point, we studied an NB GLM with a binary treatment $X_i$, a nuisance covariate vector $Z_i$ distributed according to a multivariate Gaussian, and a dispersion parameter $\phi$ of one. We varied the specified dispersion parameter $\bar \phi$ over the interval $[0.1, 10]$ and calculated $\sigma_s$ and $\sigma_p$ via Monte Carlo integration. The ratio $\sigma_p/\sigma_s$ remained approximately equal to one over all values of $\bar \phi$ (Figure \ref{fig:misspec_disp}a). On the other hand, $\sigma_s$ equaled one only when the specified dispersion coincided with the true dispersion (i.e., $\bar \phi = \phi = 1$) and deviated from one otherwise (Figure \ref{fig:misspec_disp}b). Given that the reference distribution of the standard score test is a standard Gaussian, setting the dispersion parameter to a value that is too small (resp., too large) causes the score test to produce an anti-conservative (resp., conservative) $p$-value.

\begin{figure}
    \centering
    \includegraphics[width=0.85\linewidth]{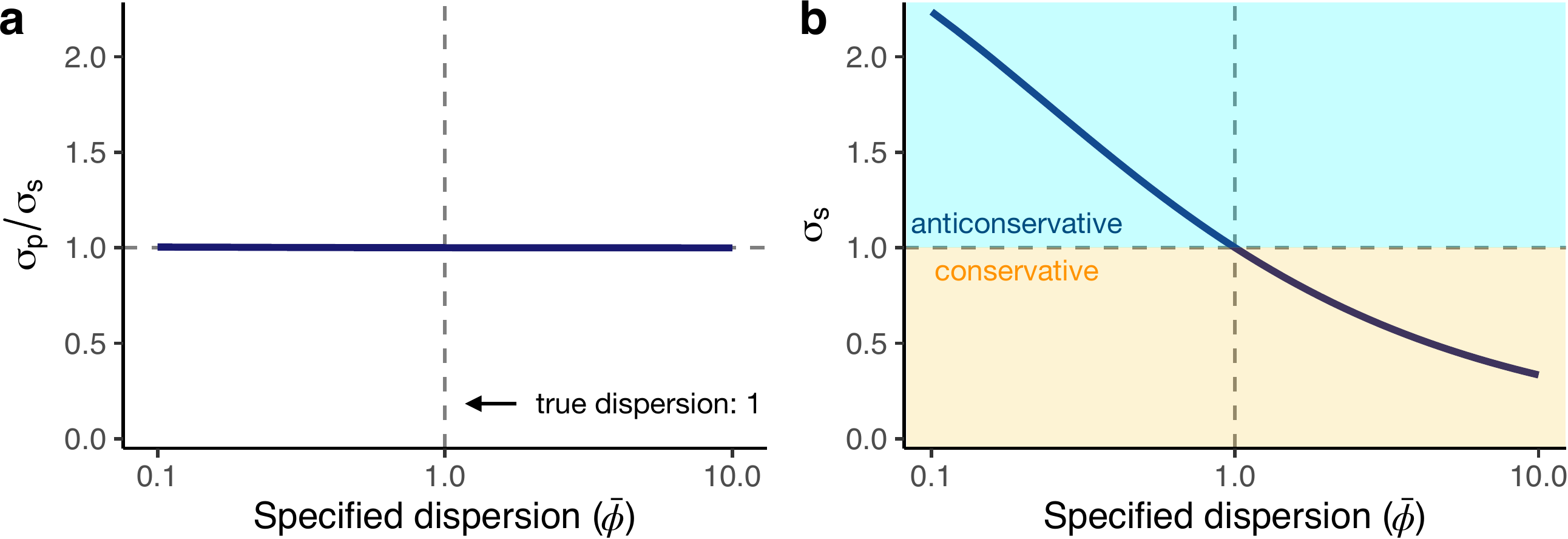}
    \caption{Effect of dispersion parameter misspecification on  $\sigma_p/\sigma_s$ and $\sigma_s$. \textbf{a} (resp., \textbf{b}), $\sigma_p/\sigma_s$ (resp., $\sigma_s$) as a function of the specified dispersion parameter $\bar \phi$ in an example NB GLM. The true dispersion $\phi$ is one.}\label{fig:misspec_disp}
\end{figure}

\subsection{Confounder adjustment via marginal permutations (CAMP)}\label{sec:camp} The permuted score test satisfies a key robustness property that we term ``confounder adjustment via marginal permutations,'' or CAMP. Informally, CAMP states that we have two separate chances to control type-I error. First, if $X_i$ is unconfounded (i.e., if $X_i$ is independent of $Z_i$), then type-I error control obtains in finite samples and under arbitrary misspecification of the NB model. On the other hand, if the data are generated by the NB GLM (\ref{eqn:nb_glm}) and Assumptions \ref{assu:consistency_beta}-\ref{assu:lower_bound_eigenvalue} hold, then type-I error control obtains in large samples. The role of the dispersion parameter in the second (i.e., large-sample) prong of CAMP is somewhat subtle. We formalize CAMP below.

\begin{theorem}[Confounder adjustment via marginal permutations]\label{thm:camp} Let $\phi_n(X,Y,Z)$ be the level-$\alpha$ test function of the permutation test based on the NB GLM score test statistic. Let $\E_\mathcal{L}\left[\phi_n(X,Y,Z)\right]$ denote the expectation of $\phi_n$ under distribution $\mathcal{L}$. Let $\mathcal{N}$ be the set of distributions for which $X_i \indep Y_i \mid Z_i$. Let $\mathcal{K}$ be the set of distributions for which $X_i \indep Z_i$. The following finite-sample type-I error bound holds under no further assumptions:
\begin{align}\label{eqn:camp_1}
\sup_{\mathcal{L} \in \mathcal{N} \cap \mathcal{K}} \E_{\mathcal{L}} \left[\phi_n(X,Y,Z) \right] \leq \alpha.    
\end{align} Next, let $\mathcal{R}$ be the set of distributions for which the data are generated by the NB GLM (\ref{eqn:nb_glm}) and Assumptions \ref{assu:consistency_beta}-\ref{assu:lower_bound_eigenvalue} hold.  The following asymptotic excess type-I error bound holds:
\begin{align}\label{eqn:camp_2}
\sup_{\mathcal{L} \in \mathcal{N} \cap \mathcal{R}} \limsup_{n\to\infty} \text{  } \E_{\mathcal{L}} [\phi_n(X, Y, Z)] \leq \alpha + \frac{1}{\sqrt{2\pi}}\Phi^{-1}(1-\alpha)(1-\sigma_p/\sigma_s),
\end{align}
where $\Phi^{-1}$ denotes the inverse of the standard Gaussian CDF.
\end{theorem}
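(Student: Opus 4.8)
I would establish the two displays by entirely separate routes: (\ref{eqn:camp_1}) is a finite-sample randomization argument, whereas (\ref{eqn:camp_2}) is an asymptotic comparison of the sampling distribution of $T_n$ (Proposition \ref{thm:asy_distribution_misspecification}) with its permutation distribution (Proposition \ref{thm:asy_permutation_distribution}). For (\ref{eqn:camp_1}), fix $\mathcal L\in\mathcal N\cap\mathcal K$. Combining $X_i\indep Y_i\mid Z_i$ with $X_i\indep Z_i$ yields $X_i\indep(Y_i,Z_i)$ (since $\P(X_i\mid Y_i,Z_i)=\P(X_i\mid Z_i)=\P(X_i)$), so by i.i.d.\ sampling the vector $X=(X_1,\dots,X_n)$ is independent of $(Y,Z)$ and, being i.i.d., exchangeable; hence $(X_\sigma,Y,Z)\overset{d}{=}(X,Y,Z)$ for every permutation $\sigma$. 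The standard randomization-test argument (append an auxiliary uniform permutation to see that the collection $(z_\textrm{orig},z_1,\dots,z_B)$ is exchangeable, then bound the rank of $z_\textrm{orig}$ among the $B+1$ statistics, with ties counted toward the $p$-value) gives $\P_\mathcal L(p\le\alpha)\le\alpha$ for the $p$-value in (\ref{eqn:perm_p_value}), i.e.\ $\E_\mathcal L[\phi_n]\le\alpha$. Taking the supremum over $\mathcal N\cap\mathcal K$ proves (\ref{eqn:camp_1}); note that no model or regularity assumptions are invoked.

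For (\ref{eqn:camp_2}), fix $\mathcal L\in\mathcal N\cap\mathcal R$. Because $\mathcal L$ obeys the NB GLM (\ref{eqn:nb_glm}) and $X_i\indep Y_i\mid Z_i$, the treatment coefficient must satisfy $\gamma=0$, so Propositions \ref{thm:asy_distribution_misspecification} and \ref{thm:asy_permutation_distribution} both apply: $z_\textrm{orig}=T_n(X,Y,Z)\convd N(0,\sigma_s^2)$, and the (random) conditional permutation CDF $\hat F_n(t):=\P[T_n(X_\pi,Y,Z)\le t\mid\mathcal F_n]$ satisfies $\hat F_n(t)\convp\Phi(t/\sigma_p)$ for every $t$. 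Since the limiting CDF is continuous, a P\'olya-type argument upgrades this to $\sup_t|\hat F_n(t)-\Phi(t/\sigma_p)|\convp 0$. Working with the exact permutation $p$-value $p=\hat F_n(z_\textrm{orig})$ (the finite-$B$ Monte Carlo version of (\ref{eqn:perm_p_value}) is handled by a routine additional argument), I would write $\hat F_n(z_\textrm{orig})=\Phi(z_\textrm{orig}/\sigma_p)+\varepsilon_n$ with $|\varepsilon_n|\le\sup_t|\hat F_n(t)-\Phi(t/\sigma_p)|\convp 0$; combining the continuous mapping theorem applied to $\Phi(z_\textrm{orig}/\sigma_p)\convd\Phi(\sigma_s W/\sigma_p)$ (with $W\sim N(0,1)$) with Slutsky's lemma gives $\hat F_n(z_\textrm{orig})\convd\Phi(\sigma_s W/\sigma_p)$.

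As $\sigma_s>0$, the CDF of $\Phi(\sigma_s W/\sigma_p)$ is continuous on $(0,1)$, so $\alpha$ is a continuity point and
\[
\E_\mathcal L[\phi_n]=\P_\mathcal L\big(\hat F_n(z_\textrm{orig})\le\alpha\big)\;\longrightarrow\;\P\!\left(W\le\frac{\sigma_p}{\sigma_s}\Phi^{-1}(\alpha)\right)=\Phi\!\left(\frac{\sigma_p}{\sigma_s}\,\Phi^{-1}(\alpha)\right).
\]
It then remains to check the deterministic inequality $\Phi\!\big((\sigma_p/\sigma_s)\Phi^{-1}(\alpha)\big)\le\alpha+\tfrac{1}{\sqrt{2\pi}}\Phi^{-1}(1-\alpha)(1-\sigma_p/\sigma_s)$. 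Put $a=\Phi^{-1}(\alpha)<0$, so $-a=\Phi^{-1}(1-\alpha)$, and $\rho=\sigma_p/\sigma_s$; in the relevant (anti-conservative) regime $\rho\le 1$ we have $\rho a\ge a$, and bounding the standard Gaussian density $\varphi$ by its maximum $1/\sqrt{2\pi}$,
\[
\Phi(\rho a)-\alpha=\int_{a}^{\rho a}\varphi(u)\,du\;\le\;\frac{\rho a-a}{\sqrt{2\pi}}=\frac{(-a)(1-\rho)}{\sqrt{2\pi}}=\frac{\Phi^{-1}(1-\alpha)(1-\sigma_p/\sigma_s)}{\sqrt{2\pi}},
\]
which is exactly the claimed excess; when $\rho\ge 1$ the limiting rejection probability is already at most $\alpha$, so the bound holds with the correction read as its nonnegative part. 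Taking $\limsup_n$ and then the supremum over $\mathcal L$ yields (\ref{eqn:camp_2}).

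The asymptotic half is where the work lies. Propositions \ref{thm:asy_distribution_misspecification} and \ref{thm:asy_permutation_distribution} control $z_\textrm{orig}$ and the permutation distribution, both of which are measurable with respect to the \emph{same} data $\mathcal F_n$, so their limits cannot simply be juxtaposed; the crux is converting the conditional-in-probability convergence of Proposition \ref{thm:asy_permutation_distribution} into uniform convergence of $\hat F_n$ and then, via Slutsky, into distributional convergence of $\hat F_n(z_\textrm{orig})$ to the random limit $\Phi(\sigma_s W/\sigma_p)$. The finite-$B$ Monte Carlo $p$-value, ties in the discrete permutation distribution, and the measure-zero sub-models in which $X_i$ is a function of $Z_i$ are additional but routine matters; the closing estimate is a one-line mean-value bound, clean only when $\sigma_p\le\sigma_s$.
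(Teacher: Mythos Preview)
Your proposal is correct and, for display (\ref{eqn:camp_1}), identical in substance to the paper: both invoke the contraction property $X_i\indep Z_i$ together with $X_i\indep Y_i\mid Z_i$ to get $X_i\indep(Y_i,Z_i)$, after which the exchangeability of $X$ makes the permutation test finite-sample valid.

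For display (\ref{eqn:camp_2}) the two arguments reach the same limiting rejection probability $1-\Phi(z_{1-\alpha}\sigma_p/\sigma_s)=\Phi\big((\sigma_p/\sigma_s)\Phi^{-1}(\alpha)\big)$ but by dual routes. The paper works on the \emph{quantile} side: it scales by $\sigma_p$, invokes Lemma~\ref{lem:conditional-convergence-to-quantile} to get $\mathbb{Q}_{1-\alpha}[T_n(X_\pi,Y,Z)/\sigma_p\mid\mathcal F_n]\convp z_{1-\alpha}$, and then applies the asymptotic-equivalence-of-tests device (Lemma~\ref{lem:equivalence-lemma}) to replace the random permutation cutoff by the deterministic cutoff $z_{1-\alpha}$, reducing the problem to a fixed-critical-value test whose size follows immediately from Proposition~\ref{thm:asy_distribution_misspecification}. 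You instead work on the \emph{CDF} side: upgrade the pointwise conditional convergence of Proposition~\ref{thm:asy_permutation_distribution} to uniform convergence of $\hat F_n$ via a P\'olya argument, write the $p$-value as $\Phi(z_{\mathrm{orig}}/\sigma_p)+o_p(1)$, and push through with continuous mapping and Slutsky. Your route is a bit more self-contained (no appeal to the equivalence lemma or the quantile-convergence lemma), while the paper's route is more modular, reusing general lemmas stated earlier. The closing excess-error estimate is the same in both proofs---yours via the integral bound $\int_a^{\rho a}\varphi\le(\rho a-a)/\sqrt{2\pi}$, the paper's via the mean value theorem---and you are right to flag that the inequality as written is only clean when $\sigma_p\le\sigma_s$; the paper's mean-value step has the same tacit restriction.
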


Statement (\ref{eqn:camp_1}) follows from the contraction property of conditional independence\footnote{Contraction property of conditional independence: $(X_i \indep Z_i) \land (X_i \indep Y_i | Z_i) \implies X_i \indep (Y_i, Z_i).$} \citep{Koller2009}, while Statement (\ref{eqn:camp_2}) follows from Propositions \ref{thm:asy_distribution_misspecification} and \ref{thm:asy_permutation_distribution} and a result regarding the asymptotic equivalence of tests based on the same statistic but different critical values \citep{Niu2022}. The tightness of the asymptotic excess type-I error bound (\ref{eqn:camp_2}) depends on the closeness of the specified dispersion parameter to the true dispersion parameter. If the dispersion is correctly specified (i.e., $\bar \phi = \phi$), then the asymptotic variance of the sampling and permutation distributions coincide (i.e., $\sigma_p/\sigma_s = 1$), implying that exact type-I error control obtains in large samples\footnote{In fact, the following slightly stronger statement holds: $\sup_{\mathcal{L} \in \mathcal{N} \cap \mathcal{R}} \limsup_{n\to\infty} \E_{\mathcal{L}} [\phi_n(X, Y, Z)] = \alpha$.}:
    $$\sup_{\mathcal{L} \in \mathcal{N} \cap \mathcal{R}} \limsup_{n\to\infty} \E_{\mathcal{L}} [\phi_n(X, Y, Z)] \leq \alpha.$$ On the other hand, if the dispersion parameter is misspecified (i.e., $\bar \phi \neq \phi$), then the terms $\sigma_p$ and $\sigma_s$ (in general) disagree. However, $\sigma_p$ and $\sigma_s$ tend to be quite similar in practice (i.e., $\sigma_p/\sigma_s \approx 1$), even under severe dispersion parameter misspecification (Figure \ref{fig:misspec_disp}). If $\sigma_s$ and $\sigma_p$ are approximately equal, then excess type-I error inflation is minimal in large samples.

Although we formulated CAMP in terms of the NB GLM score test statistic, we conjecture that CAMP holds more generally for test statistics whose asymptotic distribution is invariant to the dependence structure between $X_i$ and $Z_i$. In support of this conjecture, Proposition \ref{thm:lm_asy_perm_distribution} (in Appendix \ref{sec:lin_model}) proves that the permutation distribution of the score test statistic in a \textit{linear} model converges weakly to a standard Gaussian, from which CAMP follows. CAMP is a nonparametric extension of a robustness phenomenon identified by \citet{Diciccio2017} in the context a linear model Wald statistic. See Appendices \ref{sec:notation}-\ref{sec:camp_proof} for a proof of Propositions \ref{thm:asy_distribution_misspecification} and \ref{thm:asy_permutation_distribution} and Theorem \ref{thm:camp}.

\subsection{Conceptual comparison to existing methods}

The permuted score test overcomes several limitations of standard NB regression by virtue of CAMP. First, when the treatment is unconfounded, the permuted score test protects against arbitrary model misspecification (resulting from, e.g., zero inflation, missing covariates or interaction terms, misspecification of the dispersion parameter, etc.) and asymptotic breakdown (caused by, e.g., small sample sizes or low counts). Standard NB regression, by contrast, lacks robustness to either parametric or asymptotic violations. Thus, we should expect the permuted score test to exhibit superior error control in RCTs or observational studies in which dependence between the treatment and nuisance covariates is minimal. Second, the permuted score test is less sensitive to dispersion parameter misspecification when the treatment is confounded. The ratio $\sigma_p/\sigma_s$ can approximately equal one even if the dispersion is misspecified, enabling the permuted score test to control type-I error under dispersion parameter misspecification. Standard NB regression, on the other hand, yields inflated $p$-values when the specified dispersion is too small and conservative $p$-values when the specified dispersion is too large. In summary the permuted score test, in contrast to standard NB regression, yields (essentially) assumption-free inference when the treatment is unconfounded and protects against dispersion parameter misspecification when the treatment is confounded.

Another popular method for DE testing is the the Mann-Whitney (MW) test \citep{Mann1947}, a nonparametric two-sample test. The MW test is the default method for DE testing in the popular single-cell analysis package \texttt{Seurat} \citep{Stuart2019} and was recommended by \citet{Li2022} for bulk RNA-seq analysis in lieu of NB regression (given the stringent parametric assumptions of the latter). The MW test is a finite-sample valid, nonparametric test of marginal independence between the treatment $X_i$ and response $Y_i$. Like all nonparametric two-sample tests, the MW test can be calibrated via permutations \citep{Lehmann2022}. The MW test does not adjust for nuisance covariates $Z_i$; it yields a valid test of conditional independence (\ref{ci_null_hyp}) only when the treatment is unconfounded by the nuisance covariates. The permuted score test (Algorithm \ref{algo:permuting_score_stats}) also controls type-I error when the treatment is unconfounded, but unlike the MW test, the permuted score test can control type-I error when the treatment is confounded as well (under the conditions of Theorem \ref{thm:camp}). In this sense the permuted score test can be understood as a blend of standard NB regression and the MW test, inheriting strengths of both approaches.



\section{Computational accelerations}\label{sec:computational_accelerations}

Permutation tests typically are computationally expensive; this issue is especially pronounced in multiple testing scenarios, where we must conduct a separate permutation test for each hypothesis and compute sufficiently many permutations per test to produce a $p$-value small enough to pass the multiple testing threshold. Despite this computational challenge, the permuted score test (Algorithm \ref{algo:permuting_score_stats}) is fast: as we will show empirically, its runtime is within about 5--10\% that of \texttt{MASS} and 3-4 times that of \texttt{DESeq2}. One reason for this efficiency is that the permuted score test is based on a score test statistic as opposed to a more common Wald or likelihood ratio statistic. Use of the score statistic eliminates the need to refit the NB GLM on each permuted dataset, substantially reducing compute. We implemented two additional optimizations to accelerate the method: an algorithm for efficiently computing GLM score tests and a strategy for adaptive permutation testing based on anytime-valid inference. Taken together, these optimizations accelerated the permuted score test by a factor of roughly $45,000\times$.

\paragraph{Efficiently computing GLM score tests.} Building on the work of \citet{Barry2024}, we developed an algorithm for efficiently computing GLM score tests. Both our algorithm and the classical algorithm \citep{Dunn2018} begin with a QR decomposition of the matrix $\hat{W}^{1/2}Z$, which is computed as part of the GLM fitting procedure. The classical algorithm leverages the $Q$ matrix from this QR decomposition to compute the score test statistic; our algorithm, by contrast, exploits the $R$ matrix for this purpose. We refer to the classical algorithm as ``Algorithm $Q$'' and our algorithm as ``Algorithm $R$''. Algorithm $R$ generally requires fewer operations than Algorithm $Q$ and is especially well-suited to settings in which the treatment vector (i.e., the vector to be tested for inclusion in the fitted model) contains many zeros. (Note that the treatment vector contains at least 50\% zeros in a two-sample testing problem, possibly after label swapping.)

We compared Algorithm $R$ to Algorithm $Q$ from both theoretical and empirical perspectives. First, we analytically tallied the total number of floating point operations (i.e., additions, subtractions, multiplications, and divisions) required of both algorithms as a function of the sample size $n$, the number of covariates $p$, the fraction $\pi$ of nonzero entries within the treatment vector, and the number $B$ of permuted treatment vectors to test for inclusion in the fitted model. (We excluded the GLM fitting step --- which is shared between both algorithms --- from this calculation.) We noted that the total number of floating point operations grew more rapidly for Algorithm $Q$ than Algorithm $R$ as a function of $n$, $p$, $\pi$, and $B$. As an illustration, Figure \ref{fig:lin_alg_main}a displays the operation count of both algorithms as a function of sample size $n$ (with fixed $p$, $B$, and $\pi$).

Next, we benchmarked the empirical runtime of Algorithm $R$ and Algorithm $Q$, varying the sample size $n$ while holding fixed $p$, $B$, and $\pi$. We used the implementation of Algorithm $Q$ provided by the \texttt{statmod} R package \citep{Dunn2018}. We replicated this experiment $40$ times for each value of $n$ and calculated the mean runtime across replicates. Algorithm $R$ was considerably faster than Algorithm $Q$, especially for large values of $n$ (Figure \ref{fig:lin_alg_main}b). The empirical runtime plot (Figure \ref{fig:lin_alg_main}b) closely mirrored the operation count plot (Figure \ref{fig:lin_alg_main}a), indicating good agreement between theoretical and empirical efficiency. When we included the time required to fit the NB GLM (using \texttt{MASS}) in the empirical runtime calculation, the results did not change significantly (Figure \ref{fig:lin_alg_main}c). Finally, the $z$-scores outputted by Algorithm $R$ and Algorithm $Q$ were highly concordant (Figure \ref{fig:lin_alg_main}d), indicating no degradation in quality on the part of Algorithm $R$. See Appendix \ref{sec:computing_glm_score_tests} for a more detailed comparison of Algorithm $R$ and Algorithm $Q$.

\begin{figure}[htbp]
	\begin{center}
		\includegraphics[width=0.7\textwidth]{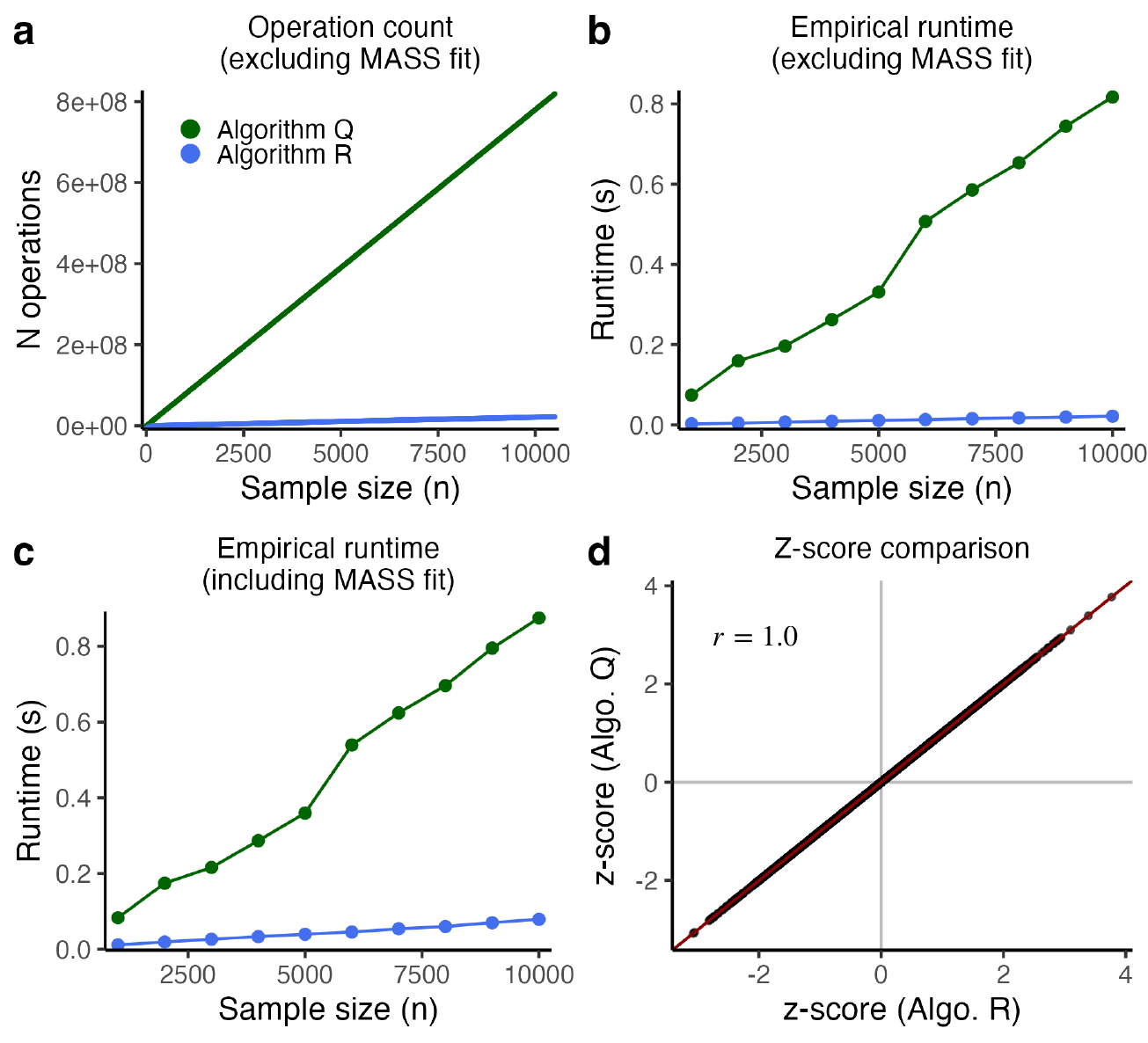}
		\caption{Algorithm $R$ and Algorithm $Q$ for computing GLM score tests. \textbf{a}, floating point operation count; \textbf{b} empirical runtime (excluding GLM fit); \textbf{c} empirical runtime (including GLM fit); \textbf{d} comparison of $z$-scores outputted by Algorithms $R$ and $Q$.}\label{fig:lin_alg_main}
	\end{center}
\end{figure}

\paragraph{Adaptive permutation testing via anytime-valid inference.} One typically sets the number of permutations $B$ to a large, predetermined number. However, this approach can be inefficient, as one may be able to terminate the permutation test early if the evidence against the null hypothesis is either very strong or very weak. Recently, \citet{Fischer2024a} developed a framework for selecting the number of permutations in a data-adaptive manner, leveraging techniques from anytime-valid inference \citep{Howard2021,Ramdas2023}. Their approach allows for a significant reduction in the number of permutations while maintaining type-I error control and sacrificing minimal power. Subsequently, \citet{Fischer2024b} extended these ideas to the multiple testing context, proposing a method for generating an FDR-controlling discovery set via adaptive permutation testing of multiple hypotheses. We applied the method of \citet{Fischer2024b} to accelerate the permuted score test. See Appendix \ref{sec:adaptive_permutation_testing} for more information about this method.

\section{Simulation studies}\label{sec:sim_studies}

We conducted a series of simulation studies to compare the permuted score test to three competing methods: standard NB regression, the MW test, and a permutation test based on NB GLM residuals. The latter method --- which we term the ``residual permutation test'' --- entails (i) regressing $Y$ onto $Z$ by fitting an NB GLM; (ii) extracting the vector $e$ of residuals from the fitted model; and (iii) testing for association between $e$ and $X$ via permutations (Algorithm \ref{algo:permuting_residuals}). We included the residual permutation test in our experiments to investigate whether our score-based permutation test might improve over this simpler alternative. We calibrated the MW test via permutations to ensure finite-sample validity. Additionally, we applied the adaptive permutation testing scheme of \citet{Fischer2024b} (Section \ref{sec:computational_accelerations}) to accelerate the MW test, the residual permutation test, and (as discussed earlier) the permuted score test. Finally, we implemented the NB regression-based methods using \texttt{MASS}; we refer to standard NB regression as ``\texttt{MASS}'' our ``robustified'' version of \texttt{MASS} as ``robust \texttt{MASS}'' throughout our exposition.

\begin{algorithm}
	\caption{Residual permutation test. The test statistic $S : \{0,1\}^n \times \R^n \to \R$ is defined as $S(X, e) = s^{-1/2} \sum_{i : X_i = 1} e_i$, where $s = \sum_{i=1}^n X_i$ is the number of ones contained within the binary vector $X$.}\label{algo:permuting_residuals}
 Regress $Y$ onto $Z$ via an NB GLM; extract the residual vector $e = Y - \hat{\mu}$.
	
Compute $t_\textrm{orig} = S(X, e).$

Randomly permute $X$, yielding permuted vectors $\tilde{X}_1, \dots, \tilde{X}_B$

For each $b \in \{1, \dots, B\}$, compute $t_b = S(\tilde{X}_b, e).$

Compute a $p$-value by comparing $t_\textrm{orig}$ to $t_1, \dots, t_B$ via (\ref{eqn:perm_p_value}).
\end{algorithm}

Our simulation study design was motivated by bulk RNA-seq experiments, such as arrayed CRISPR screens with bulk RNA-seq readout. We considered two main settings for the data-generating process: ``Setting 1'' (Figure  \ref{fig:sim_1}), where $X_i$ and $Z_i$ were independent, and ``Setting 2,'' (Figure \ref{fig:sim_2}), where $X_i$ and $Z_i$ were dependent. Within each setting we conducted three numerical experiments, which involved: (\textit{a}) generating data from the NB GLM (\ref{eqn:nb_glm}) with a \textit{large} sample size; (\textit{b}) generating data from the NB GLM with a \textit{small} sample size; and (\textit{c}) generating data from a zero-inflated NB GLM with a large sample size. (The latter experiment aimed to explore the impact of model misspecification on method performance.) In each experiment we varied an experiment-specific parameter of the data-generating process (to be described later) over a grid of values. Next, for each parameter configuration, we simulated a treatment vector $X \in \{0, 1\}^n$, a design matrix $Z \in \mathbb{R}^{n \times p}$, and $m = 500$ gene expression vectors $Y^1, \dots, Y^m \in \mathbb{Z}^{n}.$ Of these gene expression vectors, 90\% (i.e., 450) were generated under the null hypothesis and 10\% (i.e., 50) under the alternative. We conducted 1,000 (or more) Monte Carlo replicates for each parameter configuration to estimate the FDR, the mean number of true discoveries, and the mean running time of each method. Importantly, we did not specify the dispersion parameter; rather, we estimated it using \texttt{MASS}, as this is more common in practice.

\textbf{Simulation 1\textit{a}}: We generated data from the NB GLM (\ref{eqn:nb_glm}) with an unconfounded treatment and a large ($n = 1,000$) sample size. We varied the effect size $\gamma$ under the alternative hypothesis from $0.05$ to $1.5$. The left (resp., middle) panel plots the FDR (resp., mean number of true discoveries) of each method versus effect size, while the right panel displays the mean runtime of each method averaged across parameter settings. As expected, all methods controlled the FDR at the nominal level of 10\%. Moreover, the methods made monotonically more discoveries as the effect size increased. \texttt{MASS}, robust \texttt{MASS}, and the residual permutation test exhibited similar power and computational efficiency. The MW test was less powerful, as it did not adjust for covariates; however, it was fastest. 

\textbf{Simulation 1\textit{b}}: We generated data from the NB GLM (\ref{eqn:nb_glm}) with an unconfounded treatment, varying the sample size from a small value ($n = 100$) to a large value ($n = 1,000$). \texttt{MASS} exhibited severe type-I error inflation when the sample size was small, yielding an FDR exceeding $35\%$ for $n = 100$. This loss of error control likely resulted from inaccurate dispersion parameter estimation in small samples.\footnote{However, NB regression can fail to control type-I error in small samples \textit{even if} the dispersion is known exactly \citep{Niu2024}.} The other three methods, by contrast, maintained error control across all sample sizes, as these methods yield valid inference when the treatment is unconfounded. The three NB regression-based methods exhibited approximately equal power for large sample sizes. 

\textbf{Simulation 1\textit{c}}: We sought to study zero inflation, a canonical example of model misspecification \citep{Jiang2022}. We generated data from the NB GLM (\ref{eqn:nb_glm}), corrupting the response $Y_i$ by setting it to zero with probability $\psi \in [0,1]$. We generated the treatment such that it was unconfounded, and we set the sample size to a large value ($n = 1,000$). We varied the zero inflation probability over the grid $\psi \in \{0.0, 0.02, \dots, 0.16\}$. All methods controlled the FDR; however, \texttt{MASS} lost power more rapidly than the other methods as the extent of zero inflation increased. These results illustrate that violating the assumptions of NB regression can lead not only to false positive but also false negative error inflation. 

\begin{figure}
	\begin{center}
    \includegraphics[width=1\textwidth]{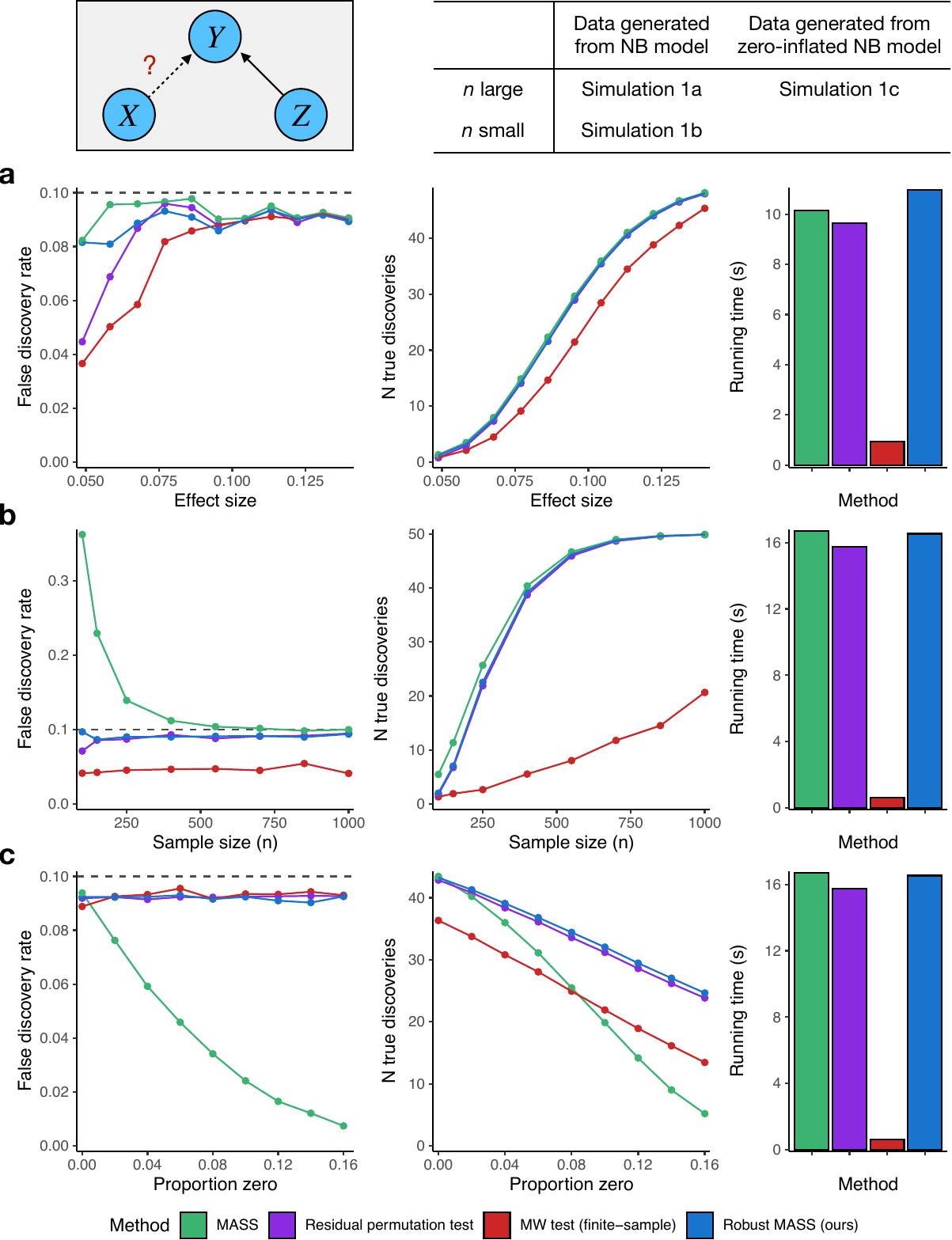}
		\caption{Simulation setting 1: $X_i \indep Z_i$.}\label{fig:sim_1}
	\end{center}
\end{figure}

\textbf{Simulation 2\textit{a}}: We generated data from the NB GLM (\ref{eqn:nb_glm}) with a confounded treatment and a large sample size. We modeled the treatment $X_i$ as a logistic function of the nuisance covariates $Z_i$, i.e. $X_i \sim \textrm{Bern}(\pi_i)$, $\textrm{logit}(\pi_i) = \delta^\top Z_i$, and $\delta \in \R^p.$ We varied the strength of dependence between the treatment and nuisance covariates, ranging from independence ($\left\|\delta\right\|_\infty = 0$) to moderate correlation ($\left\|\delta \right\|_\infty = 0.8$). As confounding strength increased, the MW test lost type-I error control, as it did not adjust for the nuisance covariates. The other three methods, by contrast, maintained error control; robust \texttt{MASS} controlled type-I error by virtue of CAMP (Theorem \ref{thm:camp}), and we hypothesize that the residual permutation test controlled type-I error as a result CAMP adapted to the residual-based test statistic. \texttt{MASS} and robust \texttt{MASS} exhibited approximately equal power; the residual permutation test, on the other hand, was less powerful, especially when dependence between the treatment and nuisance covariates was moderate. This power loss occurred because the residual-based test statistic lacks the key normalization term present in the denominator of the score test statistic (\ref{eq:score_statistic}). 

\begin{figure}
	\begin{center}
\includegraphics[width=1\textwidth]{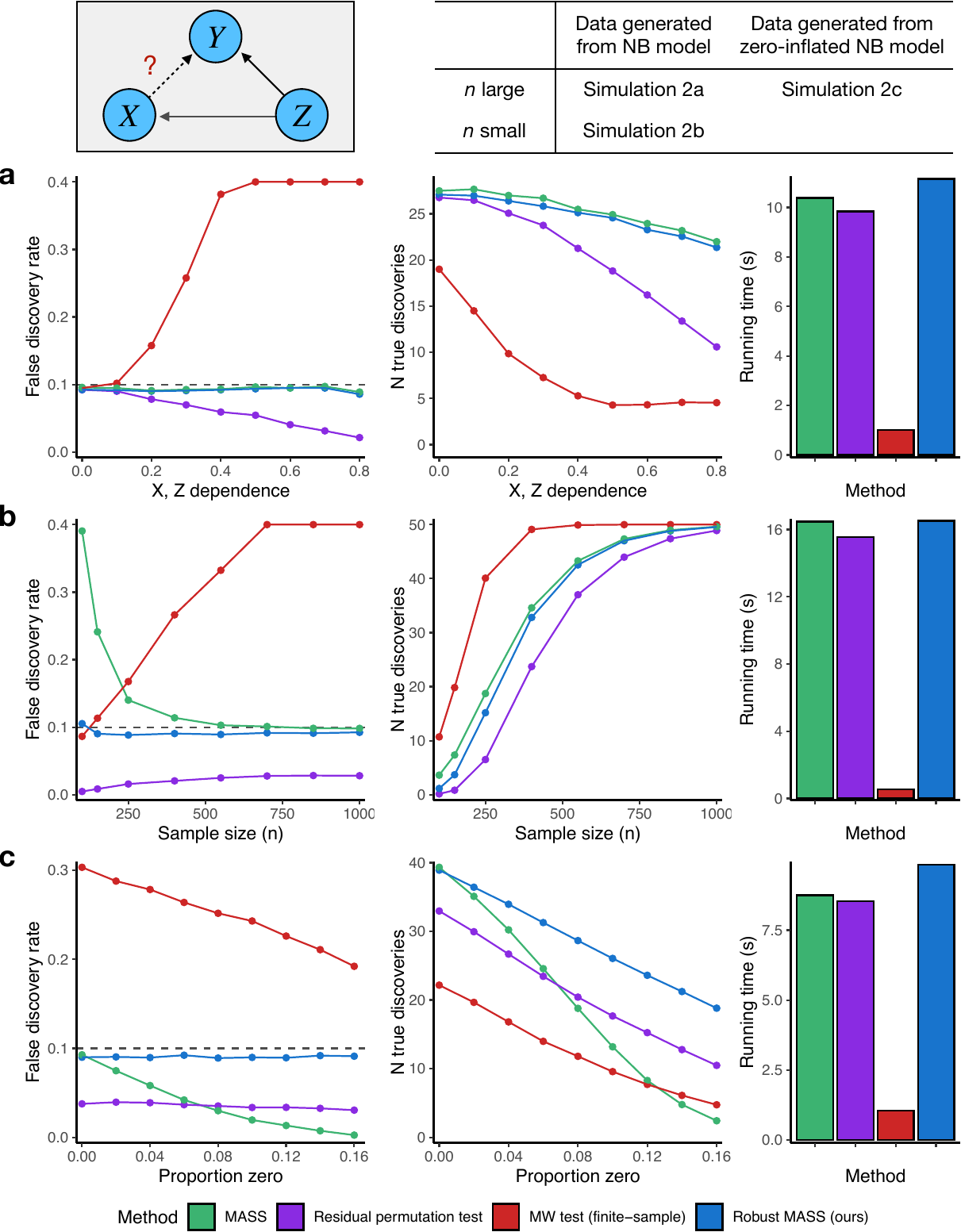}
		\caption{Simulation setting 2: $X_i \dep Z_i$.}\label{fig:sim_2}
	\end{center}
\end{figure}

\textbf{Simulation 2\textit{b}}: We generated data from the NB GLM (\ref{eqn:nb_glm}) with a confounded treatment, varying the sample size from a small value ($n = 100$) to a large value ($n = 1,000$). Again, the MW test failed to control type-I error due to confounding. Additionally, \texttt{MASS} exhibited type-I error inflation when the sample size was small, likely due to inaccurate dispersion parameter estimation. By contrast, robust \texttt{MASS} controlled type-I error over all sample sizes, as robust \texttt{MASS} is considerably less sensitive to inaccurate dispersion parameter specification than \texttt{MASS} (Section \ref{sec:camp}). Moreover, robust \texttt{MASS} matched the power of \texttt{MASS} in large sample sizes and exhibited greater power than the residual permutation test across all sample sizes. 

\textbf{Simulation 2c}: We generated data from a zero-inflated NB GLM (as in \textbf{Simulation 1c}) but this time with a confounded treatment. As before, the MW test exhibited type-I error inflation due to confounding. The three NB regression-based methods, by contrast, controlled type-I error over all values of $\delta$, possibly due to an analogue of CAMP for misspecified NB GLMs. \texttt{MASS} lost power more rapidly than any other method, while robust \texttt{MASS} was the most powerful method by a large margin. This simulation study suggests that the permuted score test has favorable performance in scenarios beyond those covered by our theoretical results.

\textbf{Additional simulation studies}: We conducted several additional simulation studies (Appendix \ref{sec:additional_simulation_studies}). Simulation S1 studied the effect of a low signal-to-noise ratio; Simulation S2 more carefully explored the impact of setting the dispersion parameter to an incorrect value; Simulation S3 investigated use of Pearson or deviance residuals in place of response residuals in the residual permutation test; and Simulation S4 evaluated \texttt{DESeq2}. These additional studies confirmed the favorable performance of the permuted score test.

\section{Real data analysis}\label{sec:real_data_analysis}

We developed a robust version of \texttt{DESeq2} based on the permuted score test (``robust \texttt{DESeq2}'') and applied this method to analyze a real arrayed CRISPR screen dataset with bulk RNA-seq readout \citep{Schmidt2022}. We compared robust \texttt{DESeq2} to \texttt{MASS}, \texttt{DESeq2}, the finite-sample MW test (calibrated via permutations; Section \ref{sec:sim_studies}), and the asymptotic MW test (implemented via the \texttt{wilcox.test()} function in R). We included the latter method to evaluate the relative performance of permutations and asymptotic approximations on the real data. To assess the type-I error control of the methods, we generated a negative control dataset by randomly permuting the expression vector of each gene (independently of the other genes), as in Section \ref{sec:result_preview}. \texttt{MASS} and \texttt{DESeq2} produced inflated $p$-values, the asymptotic MW test deflated $p$-values, and robust \texttt{DESeq2} and the finite-sample MW test well-calibrated $p$-values (Figure \ref{fig:arrayed_screen}a). We repeated this experiment across $B = 1,000$ Monte Carlo replicates, finding that \texttt{MASS} (resp., \texttt{DESeq2}) made an average of 1,138 (resp., 83) false discoveries across replicates; robust \texttt{DESeq2}, the finite-sample MW test, and the asymptotic MW test, on the other hand, yielded zero false discoveries in each replicate (Figure \ref{fig:arrayed_screen}b).

Next, we analyzed the original (i.e., unpermuted) CRISPR screen data (Figure \ref{fig:arrayed_screen}c). Robust \texttt{DESeq2} made about $3,500$ discoveries (out of $\approx 27,000$ genes). \texttt{MASS} and \texttt{DESeq2}, meanwhile, made approximately $8,000$ and $4,000$ discoveries, respectively. Given the inflation of \texttt{MASS} and \texttt{DESeq2} on the negative control data, we reasoned that many of these discoveries likely were false positives. Next, the finite-sample MW test and the asymptotic MW test made 292 and zero discoveries, respectively. The MW test was underpowered because it did not adjust for key covariates, such as donor source. As an orthogonal point, the asymptotic MW test likely was less powerful than its finite-sample counterpart because of inaccurate asymptotic approximations, highlighting the utility of finite-sample methods for bulk RNA-seq data.

Encouragingly, all methods --- excluding the asymptotic MW test --- discovered gene \textit{FOXQ1}, a positive control gene targeted for increased expression by the CRISPR perturbation. We benchmarked the computational efficiency of the methods (across 500 runs), finding that the runtime of robust \texttt{DESeq2} (mean runtime 49.7\textit{s}) was within 3.5 times that of \texttt{DESeq2} (mean runtime 14.4\textit{s}). \texttt{MASS} (mean runtime 345\textit{s}), on the other hand, was considerably slower than either of these methods. Finally, the discovery set of \texttt{DESeq2} and robust \texttt{DESeq2} diverged substantially (Figure \ref{fig:arrayed_screen}d): of the 4,347 discoveries made by either method, \texttt{DESeq2} and robust \texttt{DESeq2} agreed on only 3,132 genes, or 72\%.

\begin{figure}
	\begin{center}
\includegraphics[width=0.85\textwidth]{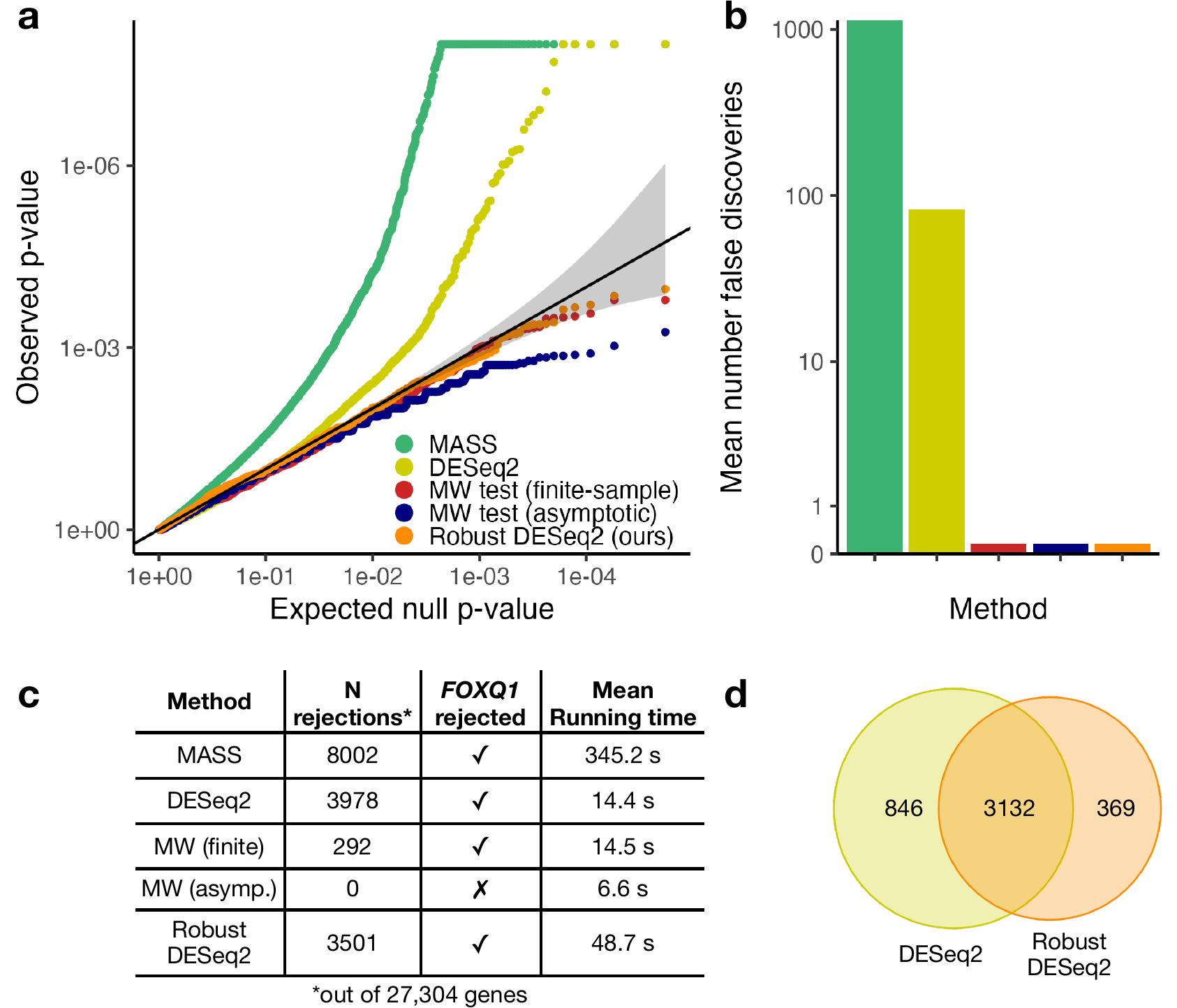}
		\caption{Analysis of a real arrayed CRISPR screen dataset. \textbf{a}, QQ plot of negative control $p$-values on a single negative control dataset; \textbf{b}, average number of false discoveries (after a BH correction at level $0.1$) across Monte Carlo replicates; \textbf{c}, table summarizing the original (i.e., unpermuted) data analysis; \textbf{d}, Venn diagram comparing the discovery set of \texttt{DESeq2} and robust \texttt{DESeq2}.}\label{fig:arrayed_screen}
	\end{center}
\end{figure}

\section{Additional related literature}

Our work intersects three main literatures, each vast: differential expression analysis, resampling-based testing, and robust inference for GLMs. We highlight the works from these literatures most relevant to our own.

\paragraph{Differential expression analysis.}

The most popular methods for bulk RNA-seq DE analysis are built on parametric models: \texttt{DESeq2} \citep{Love2014} and \texttt{EdgeR} \citep{Robinson2009} are based an NB model, and \texttt{limma} is based on a Gaussian model for transformed counts. Nonparametric methods, while less popular, also are available: \citet{Li2013} developed a DE test based on resampling MW statistics, and \citet{Gauthier2020} introduced a variance component score test robust to arbitrary model misspecification. DE methods for single-cell data are less established, and consensus regarding which is ``best'' remains elusive \citep{Heumos2023}. The most common approach is probably the MW test as implemented by \texttt{Seurat} \citep{Stuart2019}. Other options include \texttt{MAST} \citep{Finak2015}, which is based on a hurdle regression model, and \texttt{Memento}, which is based on bootstrapping a method-of-moments estimator derived from a hypergeometric model \citep{Kim2024}. Finally, the method proposed in this work is a more sophisticated version our own \texttt{sceptre} method for single-cell CRISPR screen (i.e., perturb-seq) analysis \citep{Barry2024}.

\paragraph{Resampling-based testing.} Several methods are available for testing conditional independence (without reference to any particular model) via resampling, including the conditional randomization test (CRT; \cite{Candes2018}), the conditional permutation test (CPT; \cite{Berrett2020}), and local permutation test (LPT; \cite{Kim2022}). The CRT and CPT involve resampling or permuting the treatment vector conditional on the nuisance covariates; these methods yield finite-sample valid inference when the conditional distribution of the treatment given the nuisance covariates is known exactly, which may be a reasonable assumption in some applications \citep{Sesia2019,Ham2024}. 
Next, the LPT entails permuting the treatment within strata or ``bins'' of the nuisance covariate; it yields (approximately) finite-sample valid inference when (i) $Y_i \indep X_i | \tilde{Z}_i$ and (ii) $(X_i, Y_i, \tilde{Z}_i)$ is ``close to'' $(X_i, Y_i,Z_i)$, where $\tilde{Z}_i$ is the binned version of $Z_i$. An advantage of the CRT, CPT, and LPT is that these methods are model-agnostic and are compatible with arbitrary test statistics, such as statistics derived from machine learning models \citep{Bates2020,Liu2022}.

Permutation-based methods also have been developed for testing the more granular hypothesis of whether a coefficient in a linear model is equal to zero. The classical Freedman-Lane (FL) procedure and related methods are based on permuting a test statistic constructed from the residuals of a linear model \citep{Freedman1983,Winkler2014}. Although it generally performs well in practice, the FL procedure does not provide finite-sample guarantees under any scenario \citep{Diciccio2017}. Recently, \citet{Lei2021} developed the cyclic permutation test, a test for linear hypotheses that assumes exchangeable (i.e., not necessarily Gaussian) errors, achieving finite-sample validity while sidestepping parametric assumptions. \citet{Wen2024} and \citet{Guan2024} later improved the cyclic permutation test along several dimensions, although it is unclear whether these techniques can be applied to GLMs.

Romano, Janssen, and others have demonstrated that certain permutation test statistics are asymptotically robust in a spirit similar to CAMP \citep{Romano1990,Janssen1997,Chung2013}. For example, the two-sample permutation test based on an appropriately studentized sample mean (or median) test statistic yields an asymptotically valid test of mean (or median) equality, even if the data are not exchangeable. \citet{Diciccio2017} extended these ideas to study the linear model Wald statistic, and we extended \citet{Diciccio2017} to study the NB GLM score test statistic. Finally, our framework leverages techniques for adaptive permutation testing. Classical adaptive permutation testing methods (e.g., \citet{Besag1991}) allow for early stopping under the null, while newer methods, such as the multi-armed bandit method of \citet{Zhang2019} and the anytime-valid method of \citet{Fischer2024a}, allow for early stopping under both null \textit{and} alternative, saving compute.

\paragraph{Robust inference and estimation in GLMs.}
Several approaches have been proposed for robust inference and estimation in GLMs, including sandwich estimators, resampling procedures, and M-estimators. The classical sandwich estimator \citep{Huber1967,White1982} yields robust standard errors for the MLE but does not appear to allow finite-sample inference under any circumstances. Several authors have proposed resampling-based techniques, including bootstrapping \citep{Moulton1991} and random sign-flipping \citep{Hemerik2020}, which can improve upon the accuracy of the sandwich standard errors in small samples. However, these methods also do not appear to provide valid finite-sample inference in any scenario. In parallel, M-estimation approaches have been developed to obtain robust estimates of GLM coefficients while mitigating the effects of outliers \citep{Cantoni2001,Bianco2013}. \citet{Aeberhard2014} applied this idea to negative binomial (NB) regression, carefully handling the dispersion parameter. This line of work is complementary to our effort, as our framework is compatible with arbitrary NB GLM fitting procedures, including those based on robust M-estimation methods.

\section{Discussion}

NB regression is the most popular tool for differential expression testing, a fundamental statistical task in the analysis of genomics data. We proposed the permuted score test, a robust approach to NB regression based on adaptively permuting score test statistics. By virtue of CAMP, the permuted score test provably controls type-I error across a broader range of settings than standard NB regression, protecting against model misspecification (e.g., dispersion parameter misspecification, zero inflation, missing covariates, etc.) and asymptotic breakdown (caused by small sample sizes or low counts). We conducted extensive simulation studies and a real data analysis, demonstrating that the permuted score test substantially enhances the robustness of NB regression while nevertheless approximately matching NB regression with respect to power (when the assumptions of NB regression obtain) and computational efficiency. We expect the permuted score test to bolster the reliability of differential expression analysis across diverse biological contexts.

CAMP is related to --- but distinct from --- double robustness, a robustness phenomenon common among methods for estimating causal effects and testing conditional independence \citep{Chernozhukov2018,Kennedy2024}. A quintessential doubly robust method for conditional independence testing is the generalized covariance measure test (GCM; \cite{Shah2020}). The GCM constructs a test statistic using residuals from an ``outcome'' regression ($\E[Y_i | Z_i]$) and a ``treatment'' regression ($\E[X_i | Z_i]$). Suppose for simplicity that the outcome and treatment regression functions are modeled parametrically. If either model is correctly specified, then the GCM yields an asymptotically valid test of conditional independence \citep{Smucler2019}. In this sense the GCM and CAMP each provide ``two chances'' for obtaining valid inference. However, CAMP differs from the GCM in several respects. First, CAMP in effect replaces the treatment regression with a permutation mechanism, requiring estimation of only the outcome regression function. Furthermore, CAMP offers a mix of asymptotic and non-asymptotic guarantees, while the GCM provides strictly asymptotic guarantees. Carefully studying the interplay between CAMP and double robustness --- as well as characterizing CAMP for more general test statistics (e.g., statistics derived from nonparametric regression estimators) --- are potentially exciting directions for future work.


\printbibliography

\newpage
\appendix

\textbf{Organization of Appendix}: The Appendix is organized as follows. Section \ref{sec:notation} introduces notation that we will use throughout. Section \ref{sec:general_lemmas} states several general lemmas. Section \ref{sec:nb_lemmas} states and proves lemmas specific to the NB GLM. Section \ref{sec:prop_asy_proof} proves Proposition \ref{thm:asy_distribution_misspecification}, Section \ref{sec:asy_permutation_distribution_proof} proves Proposition \ref{thm:asy_permutation_distribution}, and Section \ref{sec:camp_proof} proves Theorem \ref{thm:camp}. Section \ref{sec:lin_model} derives the asymptotic permutation distribution of the score test statistic in the linear model. Section \ref{sec:additional_simulation_studies} reports the results of several additional simulation studies. Section \ref{sec:computing_glm_score_tests} explicates Algorithm $R$ and Algorithm $Q$ in greater detail. Finally, Section \ref{sec:adaptive_permutation_testing} provides a brief primer on adaptive permutation testing via anytime-valid inference.

\section{Notation}\label{sec:notation}

We will find it convenient to parameterize the NB model in terms of its ``size parameter'' as opposed to its dispersion parameter. The size parameter $\theta$ is the inverse of the dispersion parameter, i.e.\ $\theta = \phi^{-1}$. An NB random variable $Y_i$ with mean $\mu_i$ and size parameter $\theta$ has probability mass function
$$ p(y_i; \mu_i, \theta) = \binom{y_i - \theta - 1}{y_i}\left( \frac{\mu_i}{\mu_i+k} \right)^{y_i} \left( \frac{\theta}{\theta + \mu_i}\right)^\theta.$$ We do not assume that the user-specified size parameter $\bar \theta$ coincides with the true size parameter; in other words, we allow $\bar \theta \neq \theta$. Let $U_{\bar \theta}(\beta)$ be the (misspecified) score equation for $\beta$ under $\gamma = 0$, i.e.
$$U_{\bar \theta}(\beta) = \frac{1}{n} \sum_{i=1}^n \frac{Z_i(Y_i - \mu_i)}{1 + \mu_i/\bar \theta}.$$ Let the estimator $\hat{\beta}_n$ be a solution to the score equation, i.e.\ $U_{\bar \theta}(\hat \beta_n) = 0.$ We define the random quantities $T_n(X,Y,Z)$, $\hat \beta_n$ $\hat r$, $\hat r_i$, $\hat \mu_i$, $\hat W$, $\hat W_i$, $W_i$, $S_i$, $\mu_i$, and $R_i$ in the same way as in Section \ref{sec:asymptotic_results}, but with $\bar \phi$ replaced by $1/\bar \theta$ and $\phi$ replaced by $1/\theta$. Finally, we define the random variable $\xi_i$ as the theoretical analogue of $\hat{r}_i$, i.e.\
$$
\xi_{i}\equiv \frac{Y_i-\mu_{i}}{1+\mu_{i}/\bar{\theta}}.
$$

For a natural number $n \in \N$, let $[n]$ denote the set $\{1, 2, \dots, n\}$. Next, let $A$ be a matrix and $x$ a vector. Throughout the Appendix, $\left\|x\right\|$ denotes the 2-norm of $x$ and $\left\|A\right\|$ the spectral norm of $A$. Let $x_1, x_2, \dots$ be a sequence of random vectors (of fixed dimension $p$) and $A_1, A_2, \dots$ a sequence of random matrices (of fixed dimension $p \times p$). We define $x_n \convp x$ by
\begin{equation}\label{eqn:def_vector_conv}
		\left\|x_n - x\right\| \convp 0
\end{equation}
and $A_n \convp A$ by
\begin{equation}\label{eqn:def_matrix_conv}
	\left\| A_n - A \right\| \convp 0.
\end{equation}
Note that if $(x_j)_n \convp x_j$ for all $j \in [p]$, then (\ref{eqn:def_vector_conv}) follows; likewise, if $(A_{jk})_n \convp A_{jk}$ for all $j,k \in [p]$, then (\ref{eqn:def_matrix_conv}) follows.

\section{General lemmas}\label{sec:general_lemmas}

We borrow Definitions \ref{def:conditional-convergence-distribution}--\ref{def:conditional-convergence-probability} and Lemmas \ref{lem:cond_slutsky}--\ref{lem:conditional-convergence-to-quantile} from \citet{Niu2022}. Additionally, we borrow Lemmas \ref{lem:hajek_clt}-\ref{lem:hoeffding_identity} from \citet{Diciccio2017} and \citet{Janssen1997}.

\begin{definition}[Conditional convergence in probability] \label{def:conditional-convergence-probability}
	Let $W_n$ be a sequence of random variables and $\mathcal F_n$ a sequence of $\sigma$-algebras. We say that $W_n$ converges in probability to a constant $c$ conditionally on $\mathcal F_n$ if $W_n$ converges in distribution to the delta mass at $c$ conditionally on $\mathcal F_n$ (recall Definition~\ref{def:conditional-convergence-distribution}). We denote this convergence by ${W_n \mid \mathcal F_n \convpp c}$. In symbols, 
	\begin{equation*}
		W_n \mid \mathcal F_n \convpp c \quad \text{if} \quad W_n \mid \mathcal F_n \convdp \delta_c.
	\end{equation*}
\end{definition}

\begin{lemma}[Conditional Slutsky's theorem]\label{lem:cond_slutsky}
	Let $W_n$ be a sequence of random variables and $\mathcal F_n$ a sequence of $\sigma$-algebras. Suppose $a_n$ and $b_n$ are sequences of random variables such that $a_n \convp 1$ and $b_n \convp 0$. If $W_n \mid \mathcal F_n \convdp W$ for some random variable $W$ with continuous CDF, then
	\begin{equation*}
		a_n W_n + b_n \mid \mathcal F_n \convdp W.
	\end{equation*}
\end{lemma}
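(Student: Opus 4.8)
The plan is to reduce the conditional Slutsky statement to two self-contained facts about conditional convergence and then assemble them. Throughout I write $F_W$ for the CDF of $W$, which is continuous by hypothesis, so every $t \in \R$ is a continuity point and I never have to track exceptional points. Writing $a_n W_n + b_n = W_n + D_n$ with $D_n \equiv (a_n - 1)W_n + b_n$, the whole result follows once I show that (i) $D_n \mid \mathcal F_n \convpp 0$, and (ii) adding a conditionally negligible term to $W_n$ does not change its conditional limit.

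First I would establish a \textbf{lifting lemma}: unconditional convergence in probability upgrades for free to conditional convergence in probability. If $c_n \convp c$, then for each $\epsilon > 0$ the random variable $\P[|c_n - c| > \epsilon \mid \mathcal F_n]$ is nonnegative with expectation $\P[|c_n - c| > \epsilon] \to 0$, hence it converges to $0$ in $L^1$ and therefore in probability; translating this into the CDF formulation of Definition \ref{def:conditional-convergence-probability} (via $\{c_n > t\} \subseteq \{|c_n - c| > t - c\}$ for $t > c$, and symmetrically for $t < c$) gives $c_n \mid \mathcal F_n \convpp c$. Applying this to $a_n$ and $b_n$ yields $a_n \mid \mathcal F_n \convpp 1$ and $b_n \mid \mathcal F_n \convpp 0$. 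This step is precisely what lets me avoid assuming that $a_n$ and $b_n$ are $\mathcal F_n$-measurable. Next I would prove an \textbf{additive conditional Slutsky} step: if $D_n \mid \mathcal F_n \convpp 0$ and $W_n \mid \mathcal F_n \convdp W$ with continuous $F_W$, then $W_n + D_n \mid \mathcal F_n \convdp W$. Fixing $t$ and $\epsilon > 0$, the inclusions $\{W_n + D_n \le t\} \subseteq \{W_n \le t + \epsilon\} \cup \{|D_n| > \epsilon\}$ and $\{W_n \le t - \epsilon\} \subseteq \{W_n + D_n \le t\} \cup \{|D_n| > \epsilon\}$ sandwich the conditional probability:
\[
\P[W_n \le t-\epsilon \mid \mathcal F_n] - \P[|D_n|>\epsilon \mid \mathcal F_n] \le \P[W_n + D_n \le t \mid \mathcal F_n] \le \P[W_n \le t+\epsilon \mid \mathcal F_n] + \P[|D_n|>\epsilon \mid \mathcal F_n].
\]
The outer terms converge in probability to $F_W(t \pm \epsilon)$ and the $|D_n|$ terms vanish in probability, so letting $\epsilon \downarrow 0$ and invoking continuity of $F_W$ forces $\P[W_n + D_n \le t \mid \mathcal F_n] \convp F_W(t)$.

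The main obstacle, and the heart of the proof, is controlling the \textbf{multiplicative term} $(a_n - 1)W_n$ conditionally, since $a_n$ need not be $\mathcal F_n$-measurable and so cannot be treated as frozen given $\mathcal F_n$. The device is truncation combined with conditional tightness of $W_n$. First, applying the definition of $W_n \mid \mathcal F_n \convdp W$ at the continuity points $\pm M$ gives $\P[|W_n| > M \mid \mathcal F_n] \convp (1 - F_W(M)) + F_W(-M)$, which can be made arbitrarily small by taking $M$ large. Then, for any $\gamma, M$ with $\gamma M \le \epsilon$, the deterministic inclusion $\{|(a_n-1)W_n| > \epsilon\} \subseteq \{|a_n - 1| > \gamma\} \cup \{|W_n| > M\}$ yields
\[
\P[|(a_n - 1)W_n| > \epsilon \mid \mathcal F_n] \le \P[|a_n - 1| > \gamma \mid \mathcal F_n] + \P[|W_n| > M \mid \mathcal F_n].
\]
Choosing $M$ first (to control the tail), then $\gamma = \epsilon / M$, and applying the lifting lemma to the first summand shows $(a_n - 1)W_n \mid \mathcal F_n \convpp 0$.

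Finally, a union bound combines this with $b_n \mid \mathcal F_n \convpp 0$ to give $D_n \mid \mathcal F_n \convpp 0$, and the additive conditional Slutsky step then delivers $a_n W_n + b_n \mid \mathcal F_n \convdp W$, completing the argument. I expect the only delicate bookkeeping to be the ordering of quantifiers ($\eta$, then $M$, then $\gamma$, then $n$) in the multiplicative estimate, where the two summands must be driven below a common threshold before letting $n \to \infty$.
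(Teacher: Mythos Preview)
The paper does not supply its own proof of this lemma: in Section~\ref{sec:general_lemmas} it explicitly states that Lemmas~\ref{lem:cond_slutsky}--\ref{lem:conditional-convergence-to-quantile} are borrowed from \citet{Niu2022}, so there is no in-paper argument to compare against. Your proposal is a correct, self-contained proof. The decomposition $a_nW_n+b_n=W_n+D_n$ with $D_n=(a_n-1)W_n+b_n$, the ``lifting'' observation that $c_n\convp c$ forces $\P[|c_n-c|>\epsilon\mid\mathcal F_n]\convp 0$ (via $L^1$ convergence of a nonnegative variable with vanishing mean), the conditional-tightness-plus-truncation argument for the multiplicative piece, and the sandwich bound for the additive step are all sound and are in fact the standard way one proves such a statement. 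The only places to be slightly more explicit when writing it out are (i) in the tightness step, using the upper bound $\P[|W_n|>M\mid\mathcal F_n]\le 1-\P[W_n\le M\mid\mathcal F_n]+\P[W_n\le -M\mid\mathcal F_n]$ rather than asserting equality with $(1-F_W(M))+F_W(-M)$ in the limit, and (ii) in the final squeeze, phrasing the $\epsilon\downarrow 0$ passage as: for every $\eta>0$ there exists $\epsilon$ with $F_W(t+\epsilon)-F_W(t-\epsilon)<\eta$, after which the two outer probabilities are eventually within $\eta$ of $F_W(t)$ with probability tending to one. You already flag exactly this quantifier ordering at the end of your proposal.
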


\begin{lemma}[Conditional convergence in probability implies marginal convergence in probability]\label{lem:cond_conv_prob_marginal_conv_prob}
	Let $W_n$ be a sequence of random variables and $\mathcal F_n$ a sequence of $\sigma$-algebras. If $W_n | \mathcal{F}_n \convpp 0$, then  $W_n \convp 0$.
\end{lemma}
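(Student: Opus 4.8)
The plan is to unwind the two definitions in play and reduce everything to the bounded convergence theorem. Recall that the hypothesis $W_n \mid \mathcal F_n \convpp 0$ means, by Definition \ref{def:conditional-convergence-probability}, that $W_n \mid \mathcal F_n \convdp \delta_0$, and hence by Definition \ref{def:conditional-convergence-distribution} that at every continuity point $t$ of the CDF of $\delta_0$ — that is, at every $t \neq 0$ — one has $\P[W_n \leq t \mid \mathcal F_n] \convp \indicator(t \geq 0)$. To prove $W_n \convp 0$ it suffices to show $\P[|W_n| > \epsilon] \to 0$ for each fixed $\epsilon > 0$.

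First I would fix $\epsilon > 0$ and control the conditional probability of $\{|W_n| > \epsilon\}$. Using the inclusion $\{|W_n| > \epsilon\} \subseteq \{W_n > \epsilon\} \cup \{W_n \leq -\epsilon\}$ together with additivity of conditional probability,
\begin{equation*}
0 \leq \P[|W_n| > \epsilon \mid \mathcal F_n] \leq \bigl(1 - \P[W_n \leq \epsilon \mid \mathcal F_n]\bigr) + \P[W_n \leq -\epsilon \mid \mathcal F_n].
\end{equation*}
Since $\epsilon > 0$ and $-\epsilon < 0$ are both continuity points of the CDF of $\delta_0$, the hypothesis yields $\P[W_n \leq \epsilon \mid \mathcal F_n] \convp 1$ and $\P[W_n \leq -\epsilon \mid \mathcal F_n] \convp 0$, so the right-hand side converges to $0$ in probability. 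By the squeeze principle for convergence in probability, $\P[|W_n| > \epsilon \mid \mathcal F_n] \convp 0$.

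The final step is to pass from the conditional to the unconditional statement. Writing $V_n \equiv \P[|W_n| > \epsilon \mid \mathcal F_n]$, I have $V_n \convp 0$ with $0 \leq V_n \leq 1$, so by the bounded convergence theorem for convergence in probability — which follows, e.g., from the bound $\E[V_n] \leq \delta + \P[V_n > \delta]$ valid for every $\delta > 0$, or from a subsequence argument — we get $\E[V_n] \to 0$. The tower property gives $\P[|W_n| > \epsilon] = \E[V_n]$, so $\P[|W_n| > \epsilon] \to 0$; since $\epsilon > 0$ was arbitrary, $W_n \convp 0$.

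This lemma is elementary and I do not expect a genuine obstacle; the only points needing care are (i) correctly identifying $\{t \neq 0\}$ as the continuity set of the degenerate limiting CDF, so that $t = \epsilon$ and $t = -\epsilon$ may both legitimately be invoked, and (ii) noting that the concluding averaging step is valid even though the conditional probabilities $V_n$ converge only \emph{in probability} (not almost surely) — this causes no difficulty because the $V_n$ are uniformly bounded.
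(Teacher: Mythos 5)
Your proof is correct. Note that the paper itself gives no proof of this lemma --- it is stated as borrowed from \citet{Niu2022} --- so there is nothing paper-internal to compare against; your argument (identify $\{t\neq 0\}$ as the continuity set of the CDF of $\delta_0$, squeeze the conditional probability $\P[|W_n|>\epsilon\mid\mathcal F_n]$ between $0$ and $(1-\P[W_n\leq\epsilon\mid\mathcal F_n])+\P[W_n\leq-\epsilon\mid\mathcal F_n]$, then pass to the unconditional statement via the tower property and bounded convergence for uniformly bounded variables converging in probability) is the standard route and handles the only delicate points, namely that the convergence of the conditional probabilities is only in probability and that $\pm\epsilon$ are legitimate evaluation points.
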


\begin{lemma}[Asymptotic equivalence of tests] \label{lem:equivalence-lemma}
	
	Consider two hypothesis tests based on the same test statistic $S_n(X, Y, Z)$ but different critical values:
	\begin{equation*}
		\phi_n^1(X, Y, Z) \equiv \indicator(S_n(X, Y, Z) > C_n(X, Y, Z)); \quad \phi_n^2(X, Y, Z) \equiv \indicator(S_n(X, Y, Z) > z_{1-\alpha}). 
	\end{equation*}
	If the critical value of the first converges in probability to that of the second:
	\begin{equation}
		C_n(X, Y, Z) \convp z_{1-\alpha}
		\label{eq:convergence-of-critical-value}
	\end{equation}
	and the test statistic does not accumulate near the limiting critical value:
	\begin{equation*}
		\lim_{\delta \rightarrow 0}\limsup_{n \rightarrow \infty}\ \P[|S_n(X, Y, Z)-z_{1-\alpha}| \leq \delta] = 0,
		\label{eq:non-accumulation-app}
	\end{equation*}
	then the two tests are asymptotically equivalent:
	\begin{equation*}
		\lim_{n \rightarrow \infty}\P[\phi^1_n(X, Y, Z) = \phi^2_n(X, Y, Z)] = 1.
	\end{equation*}
\end{lemma}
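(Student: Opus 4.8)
The plan is to show directly that the disagreement event $\{\phi_n^1(X,Y,Z) \neq \phi_n^2(X,Y,Z)\}$ has probability tending to zero, by arguing that disagreement forces the test statistic $S_n(X,Y,Z)$ to be trapped strictly between the two critical values $C_n(X,Y,Z)$ and $z_{1-\alpha}$. Concretely, I would first establish the set inclusion
\[
\{\phi_n^1 \neq \phi_n^2\} \ \subseteq\ \bigl\{ \lvert S_n - z_{1-\alpha}\rvert \le \lvert C_n - z_{1-\alpha}\rvert \bigr\}.
\]
This follows because, on the disagreement event, exactly one of the strict inequalities $S_n > C_n$ and $S_n > z_{1-\alpha}$ holds: in the first sub-case we have $C_n < S_n \le z_{1-\alpha}$, and in the second $z_{1-\alpha} < S_n \le C_n$, so in either sub-case $S_n$ lies between $z_{1-\alpha}$ and $C_n$, yielding the displayed bound (keeping track of which endpoint is the larger).

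Next I would fix an arbitrary $\delta > 0$ and split according to whether $\lvert C_n - z_{1-\alpha}\rvert$ exceeds $\delta$, obtaining
\[
\P[\phi_n^1 \neq \phi_n^2] \ \le\ \P\bigl[\lvert S_n - z_{1-\alpha}\rvert \le \delta\bigr] + \P\bigl[\lvert C_n - z_{1-\alpha}\rvert > \delta\bigr].
\]
Taking $\limsup_{n\to\infty}$ and using the hypothesis $C_n \convp z_{1-\alpha}$ to kill the second term gives $\limsup_{n\to\infty}\P[\phi_n^1 \neq \phi_n^2] \le \limsup_{n\to\infty}\P[\lvert S_n - z_{1-\alpha}\rvert \le \delta]$. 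Since the left-hand side does not depend on $\delta$, I can then send $\delta \to 0$ and invoke the non-accumulation hypothesis $\lim_{\delta \to 0}\limsup_{n\to\infty}\P[\lvert S_n - z_{1-\alpha}\rvert \le \delta] = 0$ to conclude that $\limsup_{n\to\infty}\P[\phi_n^1 \neq \phi_n^2] = 0$, which is the desired equivalence $\lim_{n\to\infty}\P[\phi_n^1 = \phi_n^2] = 1$.

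I do not anticipate a serious obstacle here: the argument is essentially routine real analysis once the set inclusion is in hand. The two points that need mild care are (i) handling the strict-versus-weak inequalities in the definitions of $\phi_n^1$ and $\phi_n^2$ when verifying the inclusion, and (ii) respecting the order of the two limiting operations — first $n \to \infty$, then $\delta \to 0$ — since this is exactly the form in which the non-accumulation assumption is supplied.
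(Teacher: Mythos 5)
Your argument is correct: the inclusion of the disagreement event in $\{\lvert S_n - z_{1-\alpha}\rvert \le \lvert C_n - z_{1-\alpha}\rvert\}$ holds in both sub-cases, and the $\delta$-splitting plus the order of limits ($n \to \infty$ first, then $\delta \to 0$) is exactly what the non-accumulation hypothesis is designed for. The paper itself does not prove this lemma --- it is quoted from \citet{Niu2022} --- and your proof is essentially the standard argument given there, so there is nothing to add.
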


\begin{lemma}[Conditional convergence implies quantile convergence] \label{lem:conditional-convergence-to-quantile} 
	Let $W_n$ be a sequence of random variables, $\mathcal F_n$ a sequence of $\sigma$-algebras, and $\alpha \in (0,1)$ a real number. If $W_n \mid \mathcal F_n \convdp W$ for some random variable $W$ whose CDF is continuous and strictly increasing at $\mathbb Q_{\alpha}[W]$, then
	\begin{equation*}
		\mathbb Q_{\alpha}[W_n \mid \mathcal F_n] \convp \mathbb Q_{\alpha}[W].
	\end{equation*}
\end{lemma}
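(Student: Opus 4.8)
The plan is to establish the conditional analogue of the classical fact that weak convergence of distributions forces convergence of quantiles at any point where the limiting CDF increases strictly. Write $F$ for the CDF of $W$, abbreviate $q \equiv \mathbb Q_\alpha[W]$, and let $G_n(t) \equiv \P[W_n \le t \mid \mathcal F_n]$ denote the (regular) conditional CDF of $W_n$, so that $Q_n \equiv \mathbb Q_\alpha[W_n \mid \mathcal F_n] = \inf\{t : G_n(t) \ge \alpha\}$. The single structural fact I would lean on is the generalized-inverse relationship
\begin{equation*}
Q_n \le t \iff G_n(t) \ge \alpha,
\end{equation*}
which holds pathwise because $G_n(\cdot)$ is nondecreasing and right-continuous. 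This converts every event about the random quantile $Q_n$ into an event about the conditional CDF $G_n$ evaluated at a fixed threshold, which is precisely the object that the hypothesis $W_n \mid \mathcal F_n \convdp W$ controls.

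With that in hand, I would fix $\epsilon > 0$ and show $\P[|Q_n - q| > \epsilon] \to 0$ by bounding the two tails separately. Because the continuity points of $F$ are dense, I can pick $\epsilon_1, \epsilon_2 \in (0,\epsilon]$ such that $q + \epsilon_1$ and $q - \epsilon_2$ are both continuity points of $F$; the hypothesis that $F$ is strictly increasing at $q$ then yields $F(q - \epsilon_2) < \alpha < F(q + \epsilon_1)$. For the upper tail, the generalized inverse gives $\{Q_n > q + \epsilon\} \subseteq \{Q_n > q + \epsilon_1\} = \{G_n(q + \epsilon_1) < \alpha\}$; for the lower tail, $\{Q_n < q - \epsilon\} \subseteq \{Q_n \le q - \epsilon_2\} = \{G_n(q - \epsilon_2) \ge \alpha\}$. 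Evaluating the definition of conditional convergence in distribution at the two continuity points gives $G_n(q + \epsilon_1) \convp F(q + \epsilon_1)$ and $G_n(q - \epsilon_2) \convp F(q - \epsilon_2)$, each a constant strictly separated from $\alpha$. A one-line concentration step then shows that both tail probabilities vanish: writing $c = F(q + \epsilon_1) > \alpha$, the event $\{G_n(q+\epsilon_1) < \alpha\}$ is contained in $\{|G_n(q + \epsilon_1) - c| > c - \alpha\}$, whose probability tends to zero by $\convp$, and symmetrically for the lower tail. Summing the two bounds delivers $\P[|Q_n - q| > \epsilon] \to 0$, i.e.\ $Q_n \convp q$.

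The hard part will not be any analytic estimate but rather the measure-theoretic bookkeeping around the generalized inverse. I would need to make explicit that $G_n$ admits a version that is pathwise a bona fide CDF (nondecreasing and right-continuous), so that the equivalence $Q_n \le t \iff G_n(t) \ge \alpha$ holds for almost every outcome; this is standard under the regular-conditional-distribution setup inherited from \citet{Niu2022}, but it is the one place where care is required. The second subtlety is that the prescribed points $q \pm \epsilon$ may be atoms of $F$, at which the definition of $\convdp$ asserts nothing---hence the retreat to the nearby continuity points $q + \epsilon_1$ and $q - \epsilon_2$, which is legitimate because the strict-monotonicity hypothesis at $q$ keeps $F$ strictly on the correct side of $\alpha$ throughout a full neighborhood. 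Once these two points are pinned down, the remainder is the routine conversion of the pointwise-in-probability convergence of $G_n$ into vanishing threshold-crossing probabilities.
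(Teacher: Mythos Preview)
Your argument is correct and is the standard quantile-convergence proof: reduce the quantile event to a CDF-threshold event via the generalized inverse, retreat to nearby continuity points so that the hypothesis $W_n \mid \mathcal F_n \convdp W$ applies, and use strict monotonicity at $q$ to keep $F$ strictly on the right side of $\alpha$. The paper does not supply its own proof of this lemma; it simply records it (together with the surrounding conditional-convergence lemmas) as borrowed from \citet{Niu2022}, so there is no alternative argument in the paper to compare against. Your write-up is self-contained and would serve as a complete proof where the paper gives only a citation.
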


\begin{lemma}[H\'ajek central limit theorem]\label{lem:hajek_clt} 
	Let $X_1, X_2, \dots, $ be an i.i.d.\ sequence of random variables such that $\E[X_{i}] = 0$ and $\E[X_{i}^2] = 1.$ Let $c_1, c_2, \dots$ be a (possibly dependent) sequence of random variables. Suppose that the $c_{i}$s satisfy the following conditions:
	\begin{enumerate}
		\item \textbf{Sum zero:} $\sum_{i=1}^n c_{i} = 0$ almost surely;
		\item \textbf{Sum one:} $\sum_{i=1}^n c_{i}^2 = 1$ almost surely;
		\item \textbf{Truncation condition:} $$\lim_{d\to\infty} \limsup_{n\to\infty} \sum_{i=1}^n c_{i}^2 \indicator (|\sqrt{n} c_{i}| \geq d)=0$$ almost surely;
		\item \textbf{Tail assumption:} $\max_{1\leq i\leq n}|X_{i}|/\sqrt{n}\convp 0$.
	\end{enumerate}
	Let $\mathcal{F}_n$ denote the $\sigma$-algebra generated by $X_1,\ldots, X_n$ and $c_1,\ldots, c_n$. Let $\pi$ denote a uniform draw from the permutation group of $[n]$ independent of $\mathcal{F}_n$. Then
	\begin{equation*}\label{eq_permuted_inner_product_1}
	\sum_{i=1}^n X_{\pi(i)} c_{i}\mid \mathcal{F}_n \convdp N(0,1).
	\end{equation*}
\end{lemma}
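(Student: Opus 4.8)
The plan is to recognize Lemma~\ref{lem:hajek_clt} as a conditional permutation central limit theorem and to reduce it, via the subsequence characterization of convergence in probability, to the classical combinatorial CLT for deterministic score arrays (Hoeffding, Motoo, H\'ajek). Unwinding Definition~\ref{def:conditional-convergence-distribution}, the claim is that $\P\big[\sum_{i=1}^n X_{\pi(i)} c_i \le t \mid \mathcal F_n\big] \convp \Phi(t)$ for every $t \in \R$, so it suffices to show that every subsequence admits a further subsequence along which this convergence holds almost surely. Fix such a subsequence; by the strong law of large numbers $\bar X_n := n^{-1}\sum_i X_i \to 0$ and $n^{-1}\sum_i X_i^2 \to 1$ almost surely, and by passing to a further subsequence we may also assume the tail condition $\max_{i\le n}|X_i|/\sqrt n \to 0$ holds almost surely; the sum-zero, sum-one, and truncation conditions on the $c_i$ hold almost surely by hypothesis. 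Freezing a sample point $\omega$ in the intersection of these almost-sure events, and using $\pi \indep \mathcal F_n$, the conditional probability becomes $\P_\pi\big[\sum_i a_{\pi(i)} c_i \le t\big]$ with $a_i := X_i(\omega)$ and $c_i := c_i(\omega)$ now fixed real numbers --- a purely combinatorial statement.

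It then remains to show that $L_n := \sum_i a_{\pi(i)} c_i$ is asymptotically standard normal under the uniform permutation $\pi$. Its mean is exactly zero: $\E_\pi[a_{\pi(i)}] = n^{-1}\sum_j a_j = \bar X_n$, so $\E_\pi[L_n] = \bar X_n\sum_i c_i = 0$; and the standard variance identity for permutation statistics gives
\[
\mathrm{Var}_\pi[L_n] = \frac{1}{n-1}\Big(\sum_i (a_i - \bar a)^2\Big)\Big(\sum_i c_i^2\Big) = \frac{1}{n-1}\sum_i (X_i - \bar X_n)^2 \longrightarrow 1,
\]
using $\sum_i c_i^2 = 1$ and $\sum_i(X_i - \bar X_n)^2 = \sum_i X_i^2 - n\bar X_n^2 = n(1+o(1))$. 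To conclude $L_n/\sqrt{\mathrm{Var}_\pi[L_n]} \convd N(0,1)$, I would invoke (or, if needed, reprove via a Lindeberg-type swapping argument or Stein's method of exchangeable pairs) the Lindeberg-form combinatorial CLT for $\sum_i a_{\pi(i)} c_i$. This applies because one score array, $\{X_i - \bar X_n\}$, satisfies Noether's condition $\max_i(X_i - \bar X_n)^2 / \sum_i (X_i - \bar X_n)^2 = o(n)/n \to 0$ (by the tail condition together with $\bar X_n \to 0$), while the other, $\{c_i\}$, satisfies the uniform-integrability/Lindeberg condition $\lim_{d\to\infty}\limsup_{n} \sum_i c_i^2\, \indicator(n c_i^2 \ge d^2) = 0$ --- which is precisely the truncation hypothesis. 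These two facts force the product array $\{(X_i - \bar X_n) c_j\}$ to satisfy Lindeberg's condition, yielding asymptotic normality; combined with $\mathrm{Var}_\pi[L_n] \to 1$ this gives the claim along the subsequence, and hence in the sense of Definition~\ref{def:conditional-convergence-distribution}.

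The one genuinely delicate point is identifying the \emph{right} combinatorial CLT. Hoeffding's classical sufficient condition here reduces to $\max_i(X_i - \bar X_n)^2 \cdot \max_j c_j^2 \to 0$, which need not hold: the tail condition only yields $\max_i(X_i - \bar X_n)^2 = o(n)$, and the truncation condition only yields $\max_j c_j^2 \to 0$, so the product of an $o(n)$ term with an $o(1)$ term may diverge. One must therefore appeal to the sharper Lindeberg-type version (Motoo, H\'ajek), and the content of the truncation hypothesis is exactly the uniform integrability of the rescaled array $\{\sqrt n\, c_i\}$ that this version demands of one of the two score sequences. Everything else --- the subsequence reduction, the moment computations, and the concluding Slutsky step --- is routine, and for the conditional-convergence formalism it draws on the lemmas imported from \citet{Niu2022}. (Since the statement is taken from \citet{Janssen1997} and \citet{Diciccio2017}, the most economical route is simply to cite those sources and observe that their hypotheses match Conditions~1--4.)
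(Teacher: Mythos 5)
The paper offers no proof of this lemma at all --- it is imported from \citet{Diciccio2017} and \citet{Janssen1997} (see the opening sentence of the paper's Appendix~\ref{sec:general_lemmas}) --- so your closing remark that the most economical route is simply to cite those sources and match hypotheses is exactly the paper's own treatment. Your sketch is nonetheless a faithful and correct reconstruction of the cited argument (subsequence/freezing reduction, Hoeffding mean--variance identity, then the Motoo--H\'ajek Lindeberg-form combinatorial CLT rather than Hoeffding's max-product condition, which you rightly note can fail here), and the one step you assert without detail --- that the truncation hypothesis on the $c_j$ together with the tail condition on the $X_i$ forces the product array to satisfy Lindeberg --- is indeed routine: split on $\sqrt{n}\,|c_j|\ge d$, whose contribution is negligible by the truncation condition, versus $\sqrt{n}\,|c_j|<d$, where $|c_j(X_i-\bar X_n)|\le d\max_i|X_i-\bar X_n|/\sqrt{n}\to 0$ so the Lindeberg indicator eventually vanishes.
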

	
\begin{lemma}[Hoeffding's identity]\label{lem:hoeffding_identity}
	Let $c_1,\ldots,c_n$ and $X_1,\ldots,X_n$ be (possibly dependent) random variables. 	Let $\mathcal{F}_n$ denote the $\sigma$-algebra generated by $X_1,\ldots, X_n$ and $c_1,\ldots, c_n$. Suppose that $\pi$ is a uniform draw from the permutation group of $[n]$ independent of $\mathcal{F}_n$. Defining the random variable $S_n\equiv \sum_{i=1}^n c_i X_{\pi(i)}/n$, we have 
	\begin{align*}
		\E[S_n|\mathcal{F}_n]=\frac{1}{n}\sum_{i=1}^n c_i\frac{1}{n}\sum_{i=1}^n X_i
	\end{align*} 
	and 
	\begin{align*}
		\mathrm{Var}[S_n|\mathcal{F}_n]&=\frac{1}{n-1}\left(\frac{1}{n}\sum_{i=1}^n c_i^2-\left(\frac{1}{n}\sum_{i=1}^n c_i\right)^2\right)\cdot \left(\frac{1}{n}\sum_{i=1}^n X_i^2-\left(\frac{1}{n}\sum_{i=1}^n X_i\right)^2\right) \\ &\leq \frac{1}{n-1} \left( \frac{1}{n} \sum_{i=1}^n c_i^2 \right) \cdot \left( \frac{1}{n} \sum_{i=1}^n X_i^2 \right).
	\end{align*}
\end{lemma}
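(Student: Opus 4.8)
The plan is to condition on $\mathcal{F}_n$ throughout, so that the values $c_1,\dots,c_n$ and $X_1,\dots,X_n$ are treated as fixed constants and the only remaining randomness is the uniform permutation $\pi$. The entire computation then reduces to evaluating first and second moments of the random index map $i \mapsto \pi(i)$, which is exactly a sampling-without-replacement calculation. The two facts I would rely on are: (i) for any fixed $i$, the index $\pi(i)$ is uniform on $[n]$, so $\E[X_{\pi(i)}\mid\mathcal{F}_n] = \frac{1}{n}\sum_{j=1}^n X_j$; and (ii) for any fixed $i \neq k$, the pair $(\pi(i),\pi(k))$ is uniform over the $n(n-1)$ ordered pairs of distinct indices. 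Everything follows from these two observations together with linearity and elementary algebra.

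For the conditional mean, I would apply linearity of conditional expectation to $S_n = \frac{1}{n}\sum_i c_i X_{\pi(i)}$ and substitute (i): each term contributes $c_i \bar X$, where $\bar X = \frac{1}{n}\sum_j X_j$, and factoring out $\bar X$ gives $\E[S_n\mid\mathcal{F}_n] = \big(\frac{1}{n}\sum_i c_i\big)\big(\frac{1}{n}\sum_j X_j\big)$ immediately. This step is routine.

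The substantive step is the conditional variance, which I would compute as $\E[S_n^2\mid\mathcal{F}_n] - (\E[S_n\mid\mathcal{F}_n])^2$. Expanding $S_n^2 = \frac{1}{n^2}\sum_{i,k} c_i c_k X_{\pi(i)}X_{\pi(k)}$ and splitting into the diagonal ($i=k$) and off-diagonal ($i\neq k$) blocks, I would evaluate the diagonal second moment using $\E[X_{\pi(i)}^2\mid\mathcal{F}_n] = \frac{1}{n}\sum_j X_j^2$ and the off-diagonal one using (ii), namely $\E[X_{\pi(i)}X_{\pi(k)}\mid\mathcal{F}_n] = \frac{1}{n(n-1)}\big[(\sum_j X_j)^2 - \sum_j X_j^2\big]$. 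After also using $\sum_{i\neq k} c_i c_k = (\sum_i c_i)^2 - \sum_i c_i^2$, the expression for $\E[S_n^2\mid\mathcal{F}_n]$ becomes a combination of the four sums $\sum c_i$, $\sum c_i^2$, $\sum X_j$, $\sum X_j^2$. I expect the main obstacle to be purely the bookkeeping in the ensuing algebraic collapse: one must verify that, after subtracting the squared mean, all cross terms reorganize into the clean product $\frac{1}{n-1}\big(\frac{1}{n}\sum_i c_i^2 - \bar c^2\big)\big(\frac{1}{n}\sum_j X_j^2 - \bar X^2\big)$, with the finite-population correction $\frac{1}{n-1}$ emerging from the without-replacement factor $\frac{1}{n(n-1)}$. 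I would organize this by writing $A=\sum c_i$, $B=\sum c_i^2$, $P=\sum X_j$, $Q=\sum X_j^2$, reducing the target to the single polynomial identity $n^2 BQ + A^2P^2 - nA^2Q - nBP^2 = (nB - A^2)(nQ - P^2)$, which is checked by direct expansion.

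Finally, the stated upper bound is immediate once the exact variance is in hand. Both factors $\frac{1}{n}\sum_i c_i^2 - \bar c^2$ and $\frac{1}{n}\sum_j X_j^2 - \bar X^2$ are nonnegative sample variances, so dropping the subtracted squared-mean terms can only enlarge each factor; since both factors and their upper bounds are nonnegative, their product is dominated by $\frac{1}{n}\sum_i c_i^2 \cdot \frac{1}{n}\sum_j X_j^2$, and multiplying through by $\frac{1}{n-1} > 0$ (for $n \geq 2$) yields the claimed inequality.
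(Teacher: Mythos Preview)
Your argument is correct and is the standard derivation of Hoeffding's identity via the two elementary facts about uniform permutations (marginal uniformity of $\pi(i)$ and pairwise uniformity of $(\pi(i),\pi(k))$ over ordered distinct pairs). The paper, however, does not supply its own proof of this lemma: it explicitly borrows Lemmas~\ref{lem:hajek_clt}--\ref{lem:hoeffding_identity} from \citet{Diciccio2017} and \citet{Janssen1997} and uses the result as a black box. So there is no paper-proof to compare against; your write-up is a complete and correct verification of the cited fact.
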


\begin{lemma}[Law of large numbers]\label{lem:SLLN}
	Let $X_1,\ldots, X_n$ be i.i.d.\ random variables such that $\E[|X_i|]<\infty$. Then
	\begin{align*}
		\lim_{n\to\infty} \frac{1}{n}\sum_{i=1}^n X_i = \E[X_i],\ \text{almost surely}.
	\end{align*}
	As a corollary, $(1/n)\sum_{i=1}^n X_i \xrightarrow{p} \E[X_i].$
\end{lemma}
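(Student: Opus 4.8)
The plan is to prove Kolmogorov's strong law via the standard truncation-plus-series argument, after which the stated corollary (convergence in probability) is immediate, since almost sure convergence always implies convergence in probability. First I would center the variables: writing $\mu \equiv \E[X_i]$ and replacing each $X_i$ by $X_i - \mu$, it suffices to treat the case $\E[X_i] = 0$ and show $\frac{1}{n}\sum_{i=1}^n X_i \to 0$ almost surely. All moment assumptions are preserved under this shift because $\E[|X_i - \mu|] \le \E[|X_i|] + |\mu| < \infty$.

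Next I would truncate, setting $Y_k \equiv X_k \indicator(|X_k| \le k)$. The key observation is that $\sum_{k=1}^\infty \P(X_k \ne Y_k) = \sum_{k=1}^\infty \P(|X_1| > k) \le \E[|X_1|] < \infty$, using the i.i.d.\ hypothesis together with the elementary bound $\sum_{k \ge 1}\P(|X_1| > k) \le \E[|X_1|]$. By the first Borel--Cantelli lemma, $X_k = Y_k$ for all large $k$ almost surely, so the partial sums of the two sequences differ by an almost-surely bounded quantity, and it suffices to prove the law for the truncated sequence $Y_k$. I would then control the truncated second moments: bounding $\mathrm{Var}(Y_k) \le \E[Y_k^2] = \E[X_1^2 \indicator(|X_1| \le k)]$ and summing, an interchange of summation and expectation together with the tail bound $\sum_{k \ge m} k^{-2} \le C/m$ yields $\sum_{k=1}^\infty k^{-2}\mathrm{Var}(Y_k) \le C'\,\E[|X_1|] < \infty$.

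By the Khintchine--Kolmogorov one-series theorem (equivalently, an application of Kolmogorov's maximal inequality), this summability implies that the centered weighted series $\sum_{k=1}^\infty \frac{Y_k - \E[Y_k]}{k}$ converges almost surely, since the summands are independent, mean zero, and have summable variances $\mathrm{Var}(Y_k)/k^2$. Kronecker's lemma then upgrades convergence of this series to $\frac{1}{n}\sum_{k=1}^n (Y_k - \E[Y_k]) \to 0$ almost surely. To dispatch the centering terms, I would note that $\E[X_1] = 0$ and dominated convergence give $\E[Y_k] = \E[X_1 \indicator(|X_1| \le k)] \to 0$, so the Ces\`aro averages $\frac{1}{n}\sum_{k=1}^n \E[Y_k] \to 0$ as well. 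Combining the last two displays yields $\frac{1}{n}\sum_{k=1}^n Y_k \to 0$ almost surely, and the truncation step transfers this to the original $X_k$, completing the proof; the corollary then follows immediately.

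I expect the only genuinely delicate step to be the second-moment summability estimate, specifically justifying the interchange of sum and expectation and collapsing the resulting double sum to a bound by $\E[|X_1|]$; the remaining ingredients (Borel--Cantelli, Kronecker's lemma, and the one-series theorem) are standard assembly. As an alternative route that sidesteps the maximal inequality entirely, I note that Etemadi's argument reaches the same conclusion under the weaker hypothesis of merely pairwise independence, by reducing to nonnegative summands, applying Chebyshev's inequality along a geometrically spaced subsequence $n_j = \lfloor \alpha^j \rfloor$, and exploiting monotonicity of the partial sums to interpolate between subsequence points; either proof would suffice for our purposes.
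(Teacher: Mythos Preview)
Your argument is correct and is exactly the classical Kolmogorov truncation proof of the strong law. However, the paper does not actually prove this lemma: it is simply stated among the ``general lemmas'' as a standard result from probability theory, with no proof given (in contrast to, e.g., Lemma~\ref{lem:conditional_convergence}, which is proved). So there is nothing to compare against; your proposal supplies a full, self-contained proof where the paper merely cites the result.
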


\begin{lemma}[Conditional convergence in quadratic mean implies marginal convergence in probability]\label{lem:conditional_convergence}
	Let $X_n$ be a sequence of random variables and $\mathcal F_n$ a sequence of $\sigma$-algebras. Suppose the following conditions hold:
	\begin{enumerate}
		\item $\mathrm{Var}[X_n|\mathcal{F}_n]\convp 0$;
		\item $\E[X_n|\mathcal{F}_n]\convp 0$.
	\end{enumerate}
	Then $X_n\convp 0$.
\end{lemma}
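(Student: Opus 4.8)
The plan is to reduce both hypotheses to a single statement about the conditional second moment, then transfer a conditional tail bound to an unconditional one via the tower property. First I would observe that the conditional variance and conditional mean combine through the conditional variance decomposition,
\begin{equation*}
\E[X_n^2 \mid \mathcal{F}_n] = \mathrm{Var}[X_n \mid \mathcal{F}_n] + \left(\E[X_n \mid \mathcal{F}_n]\right)^2.
\end{equation*}
By hypothesis $\mathrm{Var}[X_n \mid \mathcal{F}_n] \convp 0$, and $\left(\E[X_n \mid \mathcal{F}_n]\right)^2 \convp 0$ because the square of a sequence tending to zero in probability again tends to zero in probability (continuous mapping theorem). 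Adding the two, $\E[X_n^2 \mid \mathcal{F}_n] \convp 0$.

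Next I would fix an arbitrary $\epsilon > 0$ and apply the conditional Markov inequality to the nonnegative random variable $X_n^2$:
\begin{equation*}
\P\!\left[|X_n| > \epsilon \mid \mathcal{F}_n\right] = \P\!\left[X_n^2 > \epsilon^2 \mid \mathcal{F}_n\right] \leq \frac{\E[X_n^2 \mid \mathcal{F}_n]}{\epsilon^2}.
\end{equation*}
Since any conditional probability is bounded by one, I can combine this with the trivial bound to obtain $\P[|X_n| > \epsilon \mid \mathcal{F}_n] \leq \min\{1, \E[X_n^2 \mid \mathcal{F}_n]/\epsilon^2\}$. The right-hand side is a sequence of random variables, uniformly bounded by one, that converges to zero in probability, by the previous step together with the continuous mapping theorem applied to the map $x \mapsto \min\{1, x/\epsilon^2\}$.

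Finally, I would take expectations and use the tower property to recover the marginal tail probability, $\P[|X_n| > \epsilon] = \E\!\left[\P[|X_n| > \epsilon \mid \mathcal{F}_n]\right] \leq \E\!\left[\min\{1, \E[X_n^2 \mid \mathcal{F}_n]/\epsilon^2\}\right]$. The main (and essentially only) subtlety lies in evaluating the limit of this last expectation: I need that a uniformly bounded sequence of random variables converging to zero in probability has expectation converging to zero. This is the bounded convergence theorem in its convergence-in-probability form; equivalently, boundedness supplies uniform integrability, so convergence in probability upgrades to $L^1$ convergence of the expectations. Invoking it gives $\P[|X_n| > \epsilon] \to 0$, and since $\epsilon > 0$ was arbitrary, $X_n \convp 0$, as claimed.
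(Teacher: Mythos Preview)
Your proof is correct and follows essentially the same route as the paper: combine the two hypotheses via conditional Markov/Chebyshev to obtain $\P[|X_n|>\epsilon\mid\mathcal F_n]\convp 0$, then pass from conditional to marginal convergence. The only difference is packaging: the paper states this as ``Markov's inequality shows $X_n\mid\mathcal F_n\convpp 0$, then apply Lemma~\ref{lem:cond_conv_prob_marginal_conv_prob},'' whereas you unpack the content of that lemma directly via the tower property and bounded convergence.
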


\begin{proof}[Proof of Lemma \ref{lem:conditional_convergence}] An application of Markov's inequality shows that $X_n | \mathcal{F}_n \convpp 0$. Next, the result follows from Lemma \ref{lem:cond_conv_prob_marginal_conv_prob}.
	
\end{proof}

\section{Lemmas for the NB model}\label{sec:nb_lemmas}

Section \ref{sec:nb_lemma_statements} states several lemmas related to the NB regression model, and Section \ref{sec:nb_lemma_proofs} proves these Lemmas. Throughout, we assume that the data were generated by the NB GLM (\ref{eqn:nb_glm}) and that Assumptions \ref{assu:consistency_beta}--\ref{assu:lower_bound_eigenvalue} hold. We do not assume that the size parameter is correctly specified; in other words, we allow $\bar \theta \neq \theta$.

\subsection{Lemmas}\label{sec:nb_lemma_statements}

\begin{lemma}[Bounded moments of $Y_i$, $\xi_i$, and $W_i$]\label{lem:NB_model_moments_bound}
	Let $m \in \mathbb{N}$ be given. We have $\E[Y_{i}^m] < \infty, \E[\xi_{i}^m] < \infty,$ and $\E[W_i^m] < \infty$.
\end{lemma}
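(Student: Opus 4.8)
The plan is to exploit Assumption \ref{assu:boundedness_covariate} to reduce everything to the boundedness of the mean $\mu_i$, and then to invoke standard facts about negative binomial moments. First I would note that, under the null hypothesis $\gamma = 0$ in (\ref{eqn:nb_glm}), $\mu_i = \exp(\beta^\top Z_i)$, so by Cauchy--Schwarz and Assumption \ref{assu:boundedness_covariate}, $|\beta^\top Z_i| \leq \|\beta\|\,\|Z_i\| \leq \|\beta\| C$, whence $0 < \mu_i \leq \exp(\|\beta\| C) =: C_\mu < \infty$ almost surely. Thus $\mu_i$ is bounded above (and below) by positive constants. The bound for $W_i$ is then immediate: since $W_i = \mu_i/(1 + \mu_i/\bar\theta) \in [0, C_\mu]$, the random variable $W_i$ is bounded, so $\E[W_i^m] \leq C_\mu^m < \infty$ for every $m \in \N$.

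For $\E[Y_i^m]$, the key step is that the negative binomial law has moments of all orders, controlled polynomially by its mean. Concretely, conditionally on $(X_i, Z_i)$ one may write $Y_i$ as a gamma--Poisson mixture: $Y_i \mid \Lambda_i \sim \mathrm{Poisson}(\Lambda_i)$ with $\Lambda_i \sim \mathrm{Gamma}$ of shape $\theta$ and mean $\mu_i$. The $m$-th Poisson moment is a Touchard (Bell) polynomial $\sum_{k=0}^m S(m,k)\lambda^k$ in the rate $\lambda$, and the gamma distribution has all moments finite, so $\E[Y_i^m \mid X_i, Z_i]$ equals a polynomial in $\mu_i$ of degree $m$ whose coefficients depend only on $m$ and $\theta$. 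Because $\mu_i \leq C_\mu$, this conditional moment is bounded by a constant depending only on $(m,\theta,C_\mu)$; taking expectations gives $\E[Y_i^m] < \infty$. (Equivalently, one may simply invoke the fact that the negative binomial moment generating function is finite in a neighborhood of the origin, which yields the same uniform control once $\mu_i$ is bounded.)

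Finally, for $\xi_i = (Y_i - \mu_i)/(1 + \mu_i/\bar\theta)$, I would observe that the denominator is at least $1$, so $|\xi_i| \leq |Y_i - \mu_i| \leq Y_i + \mu_i \leq Y_i + C_\mu$. By convexity, $\E[|\xi_i|^m] \leq 2^{m-1}\bigl(\E[Y_i^m] + C_\mu^m\bigr) < \infty$ using the previous step, and in particular $\E[\xi_i^m] \leq \E[|\xi_i|^m] < \infty$. There is no real obstacle here; the only point meriting a sentence of justification is the claim that the negative binomial distribution has finite moments of every order that are controlled by its mean, which follows from the gamma--Poisson representation (or the existence of its moment generating function near zero). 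Everything else is a direct consequence of the a.s.\ boundedness of $\mu_i$ supplied by Assumption \ref{assu:boundedness_covariate}.
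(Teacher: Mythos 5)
Your proposal is correct and follows essentially the same route as the paper: reduce everything to the almost-sure bound $\mu_i \leq e^{C\|\beta\|}$ from Assumption \ref{assu:boundedness_covariate}, use the fact that the $m$-th NB moment is a polynomial in the mean with coefficients depending on $\theta$ (which the paper simply cites, whereas you derive it via the gamma--Poisson mixture), and handle $\xi_i$ and $W_i$ by noting the denominator is at least one. Your pointwise bound $|\xi_i| \leq Y_i + C_\mu$ plus the $c_r$-inequality is a slightly more direct variant of the paper's binomial-expansion-plus-Cauchy--Schwarz step, but the argument is the same in substance.
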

	
	\begin{lemma}[Invertibility of information matrix]\label{lem:invert_info_matrix}
		The matrix $\E\left[ W_i Z_i Z_i^\top  \right]$ is invertible.
	\end{lemma}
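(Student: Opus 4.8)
The plan is to show that $\E[W_i Z_i Z_i^\top]$ is positive definite by establishing a strictly positive lower bound on its smallest eigenvalue, which I will get by combining (i) an almost-sure positive lower bound on the weight $W_i$ and (ii) the invertibility of $\E[Z_i Z_i^\top]$ guaranteed by Assumption \ref{assu:lower_bound_eigenvalue}. Recall $W_i = \mu_i/(1 + (1/\bar\theta)\mu_i)$ with $\mu_i = \exp(\beta^\top Z_i)$. Since $\|Z_i\| \le C$ almost surely (Assumption \ref{assu:boundedness_covariate}) and $\beta$ is a fixed vector, $\mu_i$ lies almost surely in a compact interval $[e^{-\|\beta\|C}, e^{\|\beta\|C}] \subset (0,\infty)$. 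The map $\mu \mapsto \mu/(1 + \mu/\bar\theta)$ is continuous and strictly positive on this interval, so there is a deterministic constant $c_W > 0$ (depending only on $\beta$, $C$, $\bar\theta$) with $W_i \ge c_W$ almost surely.

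Given this, for any fixed unit vector $v \in \R^p$,
\begin{equation*}
v^\top \E[W_i Z_i Z_i^\top] v = \E[W_i (v^\top Z_i)^2] \ge c_W \, \E[(v^\top Z_i)^2] = c_W \, v^\top \E[Z_i Z_i^\top] v \ge c_W \, \lambda_{\min}\!\left(\E[Z_i Z_i^\top]\right),
\end{equation*}
where $\lambda_{\min}(\E[Z_i Z_i^\top]) > 0$ because $\E[Z_i Z_i^\top]$ is invertible and positive semidefinite by construction. Taking the infimum over unit vectors $v$ shows $\lambda_{\min}(\E[W_i Z_i Z_i^\top]) \ge c_W \lambda_{\min}(\E[Z_i Z_i^\top]) > 0$, hence the matrix is positive definite and therefore invertible. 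I should also note in passing that all entries of $\E[W_i Z_i Z_i^\top]$ are finite: $|W_i (Z_i)_j (Z_i)_k| \le C^2 W_i$ and $\E[W_i] < \infty$ by Lemma \ref{lem:NB_model_moments_bound} (indeed $W_i$ is bounded), so the expectation is well-defined.

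I do not anticipate a genuine obstacle here; the only mild care needed is to make the lower bound on $W_i$ deterministic rather than merely almost-sure pointwise, which the compactness of the range of $\mu_i$ (via Assumption \ref{assu:boundedness_covariate}) handles cleanly. An alternative route — direct contradiction, supposing $v^\top \E[W_i Z_i Z_i^\top] v = 0$ forces $v^\top Z_i = 0$ a.s.\ since $W_i > 0$ a.s., contradicting invertibility of $\E[Z_i Z_i^\top]$ — avoids eigenvalue bookkeeping but yields only qualitative invertibility; I would prefer the quantitative version above since the explicit lower bound $c_W \lambda_{\min}(\E[Z_i Z_i^\top])$ is likely reused in the downstream consistency and CLT arguments.
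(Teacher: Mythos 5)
Your proposal is correct and follows essentially the same route as the paper: the paper's proof likewise lower-bounds $W_i$ almost surely by a positive constant (explicitly, $W_i \geq \exp(-C\|\beta\|)/(1 + \exp(C\|\beta\|)/\bar\theta) \geq \delta > 0$, using Assumption \ref{assu:boundedness_covariate}) and then bounds $\lambda_{\min}(\E[W_i Z_i Z_i^\top])$ below by $\delta \inf_{\|a\|=1}\E[(a^\top Z_i)^2] > 0$ via Assumption \ref{assu:lower_bound_eigenvalue}. The only cosmetic difference is that you invoke compactness and continuity to obtain the constant $c_W$, whereas the paper writes the bound out directly using monotonicity of $\mu \mapsto \mu/(1+\mu/\bar\theta)$.
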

	
	\begin{lemma}[Uniform bounds on $f_{y,\bar \theta}$ and $f'_{y,\bar \theta}$ ]\label{lem:glm_f_bound} Define the function $f_{y, \bar{\theta}} : \R \to \R$ by
		$$f_{y, \bar{\theta}}(x)\equiv \frac{y-e^x}{1+ e^x/\bar{\theta}},$$ where $y \geq 0$ and $\bar \theta > 0$ are constants. We have
		$$\sup_{x \in \R} |f_{y, \bar \theta}(x)| \leq y + \bar \theta; \quad \sup_{x \in \R} |f'_{y, \bar{\theta}}(x)| \leq y +\bar{\theta}.$$
	\end{lemma}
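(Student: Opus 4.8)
The statement to prove is Lemma~\ref{lem:glm_f_bound}: for fixed $y \geq 0$ and $\bar\theta > 0$, the function $f_{y,\bar\theta}(x) = (y - e^x)/(1 + e^x/\bar\theta)$ satisfies $\sup_x |f_{y,\bar\theta}(x)| \leq y + \bar\theta$ and $\sup_x |f'_{y,\bar\theta}(x)| \leq y + \bar\theta$.

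This is an elementary calculus bound, so the plan is direct. For the bound on $f$ itself, the plan is to substitute $u = e^x \in (0,\infty)$ and write $f = (y-u)/(1 + u/\bar\theta) = \bar\theta(y-u)/(\bar\theta + u)$. Then I would split the numerator: $\bar\theta(y-u) = \bar\theta(y + \bar\theta) - \bar\theta(\bar\theta + u)$, giving $f = \bar\theta(y+\bar\theta)/(\bar\theta+u) - \bar\theta$. As $u$ ranges over $(0,\infty)$, the first term ranges over $(0, y+\bar\theta)$ (it is positive and decreasing), so $f$ ranges over $(-\bar\theta, y)$. Hence $|f| \leq \max(\bar\theta, y) \leq y + \bar\theta$ since both $y,\bar\theta \geq 0$. (Strictly one should note $y \le y + \bar\theta$ and $\bar\theta \le y+\bar\theta$, both obvious.)

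For the derivative, I would compute $f'$ via the quotient rule. With $g(x) = y - e^x$ and $h(x) = 1 + e^x/\bar\theta$, we have $g' = -e^x$, $h' = e^x/\bar\theta$, so $f' = (g'h - gh')/h^2 = \left[-e^x(1 + e^x/\bar\theta) - (y - e^x)e^x/\bar\theta\right]/(1+e^x/\bar\theta)^2$. The numerator simplifies: $-e^x - e^{2x}/\bar\theta - y e^x/\bar\theta + e^{2x}/\bar\theta = -e^x - y e^x/\bar\theta = -e^x(1 + y/\bar\theta)$. So $f'(x) = -e^x(1 + y/\bar\theta)/(1 + e^x/\bar\theta)^2$. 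Again substituting $u = e^x$ and multiplying through, $|f'| = u(1 + y/\bar\theta)/(1+u/\bar\theta)^2 = \bar\theta u (\bar\theta + y)/(\bar\theta + u)^2$. It remains to bound $\bar\theta u/(\bar\theta+u)^2 \leq 1/4$ for $u > 0$ — or more crudely, $u/(\bar\theta+u)^2 \le u/u^2 \cdot \ldots$; actually the cleanest is the AM–GM bound $(\bar\theta + u)^2 \geq 4\bar\theta u$, which gives $\bar\theta u/(\bar\theta+u)^2 \le 1/4$. Therefore $|f'(x)| \leq (\bar\theta + y)/4 \leq y + \bar\theta$.

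There is essentially no obstacle here; the only thing requiring minor care is the algebraic simplification of the derivative's numerator (making sure the $e^{2x}/\bar\theta$ terms cancel) and remembering that both $y$ and $\bar\theta$ are nonnegative so that the crude final inequalities $\max(y,\bar\theta) \le y+\bar\theta$ and $(y+\bar\theta)/4 \le y + \bar\theta$ hold. I would present the two bounds as two short paragraphs, each ending with the explicit supremum computation over $u = e^x \in (0,\infty)$.
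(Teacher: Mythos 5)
Your proof is correct, but it bounds the two quantities by a different, more algebraic route than the paper. For $|f_{y,\bar\theta}|$ the paper computes $f'_{y,\bar\theta}$, observes it is strictly negative so that $f_{y,\bar\theta}$ is decreasing, and reads off the range $[-\bar\theta, y]$ from the limits at $x = \pm\infty$; you instead rewrite $f_{y,\bar\theta} = \bar\theta(y+\bar\theta)/(\bar\theta+u) - \bar\theta$ with $u = e^x$, which exhibits the same range $(-\bar\theta, y)$ with no calculus at all. For $|f'_{y,\bar\theta}|$ both arguments start from the same quotient-rule formula $f'_{y,\bar\theta}(x) = -\bar\theta e^x(\bar\theta+y)/(\bar\theta+e^x)^2$, but the paper then computes $f''_{y,\bar\theta}$, locates the unique critical point at $x=\log\bar\theta$, and evaluates there, whereas you bound $\bar\theta u/(\bar\theta+u)^2 \le 1/4$ directly by AM--GM. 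Both routes produce the same sharp intermediate constants ($\max\{y,\bar\theta\}$ and $(y+\bar\theta)/4$) before relaxing to $y+\bar\theta$, and nothing elsewhere in the paper uses the extra sign information (that $f$ is decreasing and $f'\le 0$) which the paper's calculus argument happens to provide, so your shorter argument suffices for all downstream uses (notably the mean-value-theorem step in the Lipschitz-type Lemma \ref{lem:xi_r_diff}).
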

	
	\begin{lemma}[Bound on $| \hat{r}_i^m - \xi_i^m|$ and $|\hat{W}_i^m - W_i^m|$]\label{lem:xi_r_diff}
		For all $i \in [n],$ $m \in \{1, 2, 4\},$ and $n \in \mathbb{N}$,
		$$ \Big| f_{Y_i, \bar \theta}^m( \hat{\beta}_n^\top Z_i) - f_{Y_i, \bar \theta}^m(\beta^\top Z_i) \Big| \leq m C (Y_i + \bar{\theta})^m ||\hat{\beta}_n - \beta||.$$ As corollaries,
		$$ | \hat{r}_i^m - \xi_i^m| \leq m C (Y_i + \bar{\theta})^m ||\hat{\beta}_n - \beta||; \quad |\hat{W}^m_i - W^m_i| \leq m C \bar{\theta}^m ||\hat{\beta}_n - \beta||.$$
	\end{lemma}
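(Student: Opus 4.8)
The plan is to reduce everything to a single mean value theorem estimate for the scalar function $g(x) \equiv f_{Y_i,\bar\theta}^m(x)$, after observing that $\hat r_i$, $\xi_i$, $\hat W_i$, and $W_i$ are all values of $f_{y,\bar\theta}$ (for an appropriate constant $y$) evaluated at either $\hat\beta_n^\top Z_i$ or $\beta^\top Z_i$. Recalling that $\bar\phi = 1/\bar\theta$, $\hat\mu_i = \exp(\hat\beta_n^\top Z_i)$, and $\mu_i = \exp(\beta^\top Z_i)$, one checks directly that $\hat r_i = f_{Y_i,\bar\theta}(\hat\beta_n^\top Z_i)$, $\xi_i = f_{Y_i,\bar\theta}(\beta^\top Z_i)$, $\hat W_i = -f_{0,\bar\theta}(\hat\beta_n^\top Z_i)$, and $W_i = -f_{0,\bar\theta}(\beta^\top Z_i)$; in particular $\hat W_i^m - W_i^m = (-1)^m\big( f_{0,\bar\theta}^m(\hat\beta_n^\top Z_i) - f_{0,\bar\theta}^m(\beta^\top Z_i)\big)$, so both corollaries follow from the main inequality by taking $y = Y_i$ and $y = 0$ respectively.

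For the main inequality I would note that $f_{y,\bar\theta}$ is smooth (its denominator $1 + e^x/\bar\theta$ is bounded away from $0$ since $\bar\theta > 0$), hence so is $g = f_{y,\bar\theta}^m$, with $g'(x) = m f_{y,\bar\theta}^{m-1}(x) f'_{y,\bar\theta}(x)$. Lemma \ref{lem:glm_f_bound} supplies the uniform bounds $|f_{y,\bar\theta}(x)| \le y + \bar\theta$ and $|f'_{y,\bar\theta}(x)| \le y + \bar\theta$, so $|f_{y,\bar\theta}^{m-1}(x)| \le (y+\bar\theta)^{m-1}$ and therefore $\sup_x |g'(x)| \le m(y+\bar\theta)^m$. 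The mean value theorem then yields
\begin{align*}
\big| f_{y,\bar\theta}^m(\hat\beta_n^\top Z_i) - f_{y,\bar\theta}^m(\beta^\top Z_i) \big| \le m(y+\bar\theta)^m \, \big| (\hat\beta_n - \beta)^\top Z_i \big|,
\end{align*}
and Cauchy--Schwarz together with Assumption \ref{assu:boundedness_covariate} ($\|Z_i\| \le C$) bounds $|(\hat\beta_n - \beta)^\top Z_i| \le C\|\hat\beta_n - \beta\|$. Substituting $y = Y_i$ gives the displayed bound on $|f_{Y_i,\bar\theta}^m(\hat\beta_n^\top Z_i) - f_{Y_i,\bar\theta}^m(\beta^\top Z_i)|$; the $\hat r_i$ corollary is then immediate from the identifications above, and the $\hat W_i$ corollary follows by taking $y = 0$, using $(0+\bar\theta)^m = \bar\theta^m$ and the fact that the power $m$ acts identically on $\hat W_i = -f_{0,\bar\theta}(\hat\beta_n^\top Z_i)$ and $W_i = -f_{0,\bar\theta}(\beta^\top Z_i)$.

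I do not expect a genuine obstacle here: once $f_{y,\bar\theta}$, its derivative bound from Lemma \ref{lem:glm_f_bound}, and the identifications of $\hat r_i, \xi_i, \hat W_i, W_i$ as values of $f_{y,\bar\theta}$ are in hand, the proof is a one-line mean value theorem estimate. The only points needing a modicum of care are (i) checking that the derivative bound propagates through the $m$th power with the constant $m(y+\bar\theta)^m$ — immediate from $|f_{y,\bar\theta}^{m-1}| \le (y+\bar\theta)^{m-1}$ — and (ii) bookkeeping the signs when expressing $\hat W_i$ and $W_i$ through $f_{0,\bar\theta}$, which is harmless. The restriction $m \in \{1,2,4\}$ plays no role in the argument; it merely records the exponents used downstream.
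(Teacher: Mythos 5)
Your proof is correct and rests on the same ingredients as the paper's: the uniform bounds of Lemma \ref{lem:glm_f_bound}, the mean value theorem, and Cauchy--Schwarz with Assumption \ref{assu:boundedness_covariate}, with the identifications $\hat r_i = f_{Y_i,\bar\theta}(\hat\beta_n^\top Z_i)$, $\xi_i = f_{Y_i,\bar\theta}(\beta^\top Z_i)$, $\hat W_i = -f_{0,\bar\theta}(\hat\beta_n^\top Z_i)$, $W_i = -f_{0,\bar\theta}(\beta^\top Z_i)$ handled just as in the paper. The only (immaterial) difference is that you apply the mean value theorem directly to $f_{y,\bar\theta}^m$ via the chain-rule bound $|(f^m)'| \le m(y+\bar\theta)^m$, whereas the paper applies it only for $m=1$ and then bootstraps to $m=2,4$ through difference-of-powers factorizations; both routes produce the identical constant $mC(y+\bar\theta)^m$.
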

	
	\begin{lemma}[Consistency of observed information matrix]\label{lem:inverse_matrix_convergence}
		We have
		$$ \frac{1}{n} Z^\top  \hat{W} Z \xrightarrow{p} \E\left[ W_i Z_i Z_i^\top \right].$$ 
		As a corollary,
		$$ \left(\frac{1}{n} Z^\top  \hat{W} Z\right)^{-1} \xrightarrow{p} \left( \E\left[ W_i Z_i Z_i^\top  \right]  \right)^{-1},$$ where convergence in probability of a sequence of random matrices is defined as (\ref{eqn:def_matrix_conv}).
	\end{lemma}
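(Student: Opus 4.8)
The plan is to argue entrywise. Fix $j,k \in [p]$ and write the $(j,k)$ entry of $\tfrac1n Z^\top \hat W Z$ as $\tfrac1n \sum_{i=1}^n \hat W_i (Z_i)_j (Z_i)_k$; the goal is to show this converges in probability to $\E[W_i (Z_i)_j (Z_i)_k]$, since by the remark at the end of Section \ref{sec:notation} entrywise convergence in probability yields convergence of the full matrix in the sense of (\ref{eqn:def_matrix_conv}). I would decompose
\begin{equation*}
\frac1n \sum_{i=1}^n \hat W_i (Z_i)_j (Z_i)_k
= \frac1n \sum_{i=1}^n W_i (Z_i)_j (Z_i)_k + \frac1n \sum_{i=1}^n (\hat W_i - W_i)(Z_i)_j (Z_i)_k,
\end{equation*}
an ``oracle'' term plus a ``plug-in error'' term, and handle them separately.

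For the oracle term I would invoke the strong law of large numbers (Lemma \ref{lem:SLLN}); its integrability hypothesis $\E|W_i (Z_i)_j (Z_i)_k| < \infty$ follows from $\|Z_i\| \le C$ almost surely (Assumption \ref{assu:boundedness_covariate}) together with $\E[W_i] < \infty$ (Lemma \ref{lem:NB_model_moments_bound}), giving $\tfrac1n \sum_{i=1}^n W_i (Z_i)_j (Z_i)_k \convp \E[W_i (Z_i)_j (Z_i)_k]$. For the plug-in error I would bound $|(Z_i)_j (Z_i)_k| \le C^2$ and apply Lemma \ref{lem:xi_r_diff} with $m=1$ to get $|\hat W_i - W_i| \le C \bar\theta \, \|\hat\beta_n - \beta\|$ uniformly in $i$, so the error term is at most $C^3 \bar\theta \, \|\hat\beta_n - \beta\|$ in absolute value; by Assumption \ref{assu:consistency_beta} this vanishes almost surely, hence in probability. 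Adding the two pieces proves the main statement.

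For the corollary, invertibility of the limit $\E[W_i Z_i Z_i^\top]$ is Lemma \ref{lem:invert_info_matrix}, and matrix inversion is continuous on the open set of invertible matrices, so the continuous mapping theorem gives $(\tfrac1n Z^\top \hat W Z)^{-1} \convp (\E[W_i Z_i Z_i^\top])^{-1}$. The only point needing a word of care is that $\tfrac1n Z^\top \hat W Z$ need not be invertible for every $n$; but since it converges in probability to an invertible matrix, it is invertible with probability tending to one, which suffices. Indeed, this lemma is largely bookkeeping: the substantive work — uniform control of $\hat W_i - W_i$ — is already delivered by Lemma \ref{lem:xi_r_diff} and strong consistency of $\hat\beta_n$, and the eventual-invertibility caveat in the corollary is the only mild subtlety.
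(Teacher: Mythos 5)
Your proposal is correct and follows essentially the same route as the paper's proof: the same entrywise decomposition into an LLN term plus a plug-in error controlled by Lemma \ref{lem:xi_r_diff} and Assumption \ref{assu:consistency_beta}, and the same use of Lemma \ref{lem:invert_info_matrix} with the continuous mapping theorem for the corollary. The only (harmless) differences are that the paper invokes $\E[W_i^2]<\infty$ where the first moment already suffices, and your eventual-invertibility remark makes explicit a point the paper leaves implicit.
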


	\begin{lemma}[Consistency of quadratic form]\label{lem:unconditional_convergence_term2_1}
		We have 
		\begin{align*}
			\frac{1}{n}X^\top\hat WX\convp \E[X_i^2W_i].
		\end{align*}
	\end{lemma}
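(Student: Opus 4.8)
The plan is to expand $\tfrac1n X^\top \hat W X = \tfrac1n\sum_{i=1}^n \hat W_i X_i^2$ and decompose it into a ``population'' term involving the true weights $W_i$ and a ``perturbation'' term capturing the error incurred by replacing $\beta$ with $\hat\beta_n$:
$$\frac1n X^\top \hat W X = \frac1n\sum_{i=1}^n W_i X_i^2 + \frac1n\sum_{i=1}^n (\hat W_i - W_i) X_i^2.$$
I would then show the first term converges to $\E[X_i^2 W_i]$ by the law of large numbers and that the second term is negligible.

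For the first term, I would apply Lemma \ref{lem:SLLN} to the i.i.d.\ sequence $W_i X_i^2$. The only hypothesis to verify is integrability, $\E[|W_i X_i^2|] < \infty$, which follows from the Cauchy--Schwarz inequality together with $\E[W_i^2] < \infty$ (Lemma \ref{lem:NB_model_moments_bound}) and $\E[X_i^4] < \infty$ (Assumption \ref{assu:moment_treatment}). This gives $\tfrac1n\sum_{i=1}^n W_i X_i^2 \convp \E[W_i X_i^2]$.

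For the perturbation term, I would use the Lipschitz-type bound of Lemma \ref{lem:xi_r_diff} with $m = 1$, namely $|\hat W_i - W_i| \leq C\bar\theta\,\|\hat\beta_n - \beta\|$, to get
$$\left|\frac1n\sum_{i=1}^n (\hat W_i - W_i) X_i^2\right| \leq C\bar\theta\,\|\hat\beta_n - \beta\| \cdot \frac1n\sum_{i=1}^n X_i^2.$$
Here $\tfrac1n\sum_{i=1}^n X_i^2 \convp \E[X_i^2] = 1$ by the law of large numbers (Assumption \ref{assu:moment_treatment}), while $\|\hat\beta_n - \beta\| \to 0$ almost surely, hence in probability, by Assumption \ref{assu:consistency_beta}. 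Thus the right-hand side is $o_p(1)$ by Slutsky's theorem, and adding the two pieces yields $\tfrac1n X^\top \hat W X \convp \E[X_i^2 W_i]$.

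I do not anticipate a substantive obstacle: this is a routine ``leading population term plus asymptotically negligible estimation error'' argument. The one point that warrants mild care is checking the moment condition needed to apply the law of large numbers to $W_i X_i^2$; this is immediate given Lemma \ref{lem:NB_model_moments_bound} (and in fact $W_i \leq \bar\theta$ is bounded), and everything else is a direct invocation of the cited lemmas and Assumptions \ref{assu:consistency_beta}--\ref{assu:moment_treatment}.
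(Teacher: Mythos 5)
Your proposal is correct and follows essentially the same route as the paper: both decompose $\tfrac1n\sum_i \hat W_i X_i^2$ into $\tfrac1n\sum_i W_i X_i^2$ plus a perturbation, control the perturbation via the bound $|\hat W_i - W_i| \le C\bar\theta\|\hat\beta_n-\beta\|$ from Lemma \ref{lem:xi_r_diff} together with $\tfrac1n\sum_i X_i^2 = O_p(1)$ and Assumption \ref{assu:consistency_beta}, and apply the law of large numbers (with the Cauchy--Schwarz moment check using $\E[W_i^2]<\infty$ and $\E[X_i^4]<\infty$) to the leading term. No gaps.
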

	
	\begin{lemma}[Consistency of $(1/n) X^\top \hat{W} Z$]\label{lem:unconditional_convergence_term2_2}
		We have 
		\begin{align*}
			\frac{1}{n}X^\top \hat{W} Z \convp \E[X_i W_iZ_i],
		\end{align*}
		where convergence in probability of a sequence of random vectors is defined as (\ref{eqn:def_vector_conv}).
	\end{lemma}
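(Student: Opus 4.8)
The plan is to split $\frac{1}{n}X^\top \hat W Z = \frac{1}{n}\sum_{i=1}^n X_i \hat W_i Z_i$ into a ``population'' piece and a ``plug-in error'' piece, exactly as in the proofs of Lemmas \ref{lem:inverse_matrix_convergence} and \ref{lem:unconditional_convergence_term2_1}. Writing $W \equiv \mathrm{diag}\{W_i\}$, we have
\[
\frac{1}{n}X^\top \hat W Z = \frac{1}{n}\sum_{i=1}^n X_i W_i Z_i + \frac{1}{n}\sum_{i=1}^n X_i(\hat W_i - W_i)Z_i,
\]
and it suffices to show that the first term converges in probability to $\E[X_iW_iZ_i]$ and the second converges in probability to $0$; the conclusion then follows by adding the two (a sum of terms converging in probability converges in probability to the sum of the limits).

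For the first term I would argue componentwise, recalling from Section \ref{sec:notation} that coordinatewise convergence in probability implies convergence in the vector-norm sense (\ref{eqn:def_vector_conv}). For each coordinate $j$, the summands $X_i W_i Z_{i,j}$ are i.i.d., and $\E[|X_iW_iZ_{i,j}|] \leq C\,\E[|X_i|W_i] \leq C\sqrt{\E[X_i^2]\,\E[W_i^2]} < \infty$ by Assumption \ref{assu:boundedness_covariate} (bounded $Z_i$), Cauchy--Schwarz, Assumption \ref{assu:moment_treatment} ($\E[X_i^2]=1$), and Lemma \ref{lem:NB_model_moments_bound} ($\E[W_i^2]<\infty$). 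The strong law of large numbers (Lemma \ref{lem:SLLN}) then gives $\frac{1}{n}\sum_i X_iW_iZ_{i,j}\to \E[X_iW_iZ_{i,j}]$ almost surely, hence in probability.

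For the second (plug-in error) term I would bound its Euclidean norm by
\[
\left\|\frac{1}{n}\sum_{i=1}^n X_i(\hat W_i - W_i)Z_i\right\| \leq \frac{1}{n}\sum_{i=1}^n |X_i|\,|\hat W_i - W_i|\,\|Z_i\| \leq C^2\bar\theta\,\|\hat\beta_n-\beta\|\cdot\frac{1}{n}\sum_{i=1}^n|X_i|,
\]
using $\|Z_i\|\leq C$ (Assumption \ref{assu:boundedness_covariate}) and the corollary of Lemma \ref{lem:xi_r_diff} with $m=1$, namely $|\hat W_i - W_i|\leq C\bar\theta\|\hat\beta_n-\beta\|$. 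Now $\frac{1}{n}\sum_i|X_i|\to\E[|X_i|]\leq 1$ almost surely by Lemma \ref{lem:SLLN}, while $\|\hat\beta_n-\beta\|\to 0$ almost surely by Assumption \ref{assu:consistency_beta}, so the product tends to $0$ almost surely, hence in probability, and the claim follows.

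I do not expect any genuine obstacle: this is a routine plug-in-consistency statement whose proof mirrors those of Lemmas \ref{lem:unconditional_convergence_term2_1} and \ref{lem:inverse_matrix_convergence}. The only points requiring mild care are (i) verifying the integrability needed to invoke the strong law, which is where Assumption \ref{assu:boundedness_covariate} and the NB moment bound of Lemma \ref{lem:NB_model_moments_bound} enter through a Cauchy--Schwarz step, and (ii) reducing the vector statement to coordinatewise statements, as licensed by the remark at the end of Section \ref{sec:notation}.
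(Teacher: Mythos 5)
Your proposal is correct and follows essentially the same route as the paper's proof: decompose into the population term plus the plug-in error, control the error term via the bound $|\hat W_i - W_i|\leq C\bar\theta\|\hat\beta_n-\beta\|$ from Lemma \ref{lem:xi_r_diff} together with strong consistency of $\hat\beta_n$ and the law of large numbers for $(1/n)\sum_i|X_i|$, and handle the population term by a Cauchy--Schwarz moment bound plus the SLLN (the paper does this coordinatewise throughout, while you pass to the Euclidean norm for the error term, which is an immaterial difference). No gaps.
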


	\begin{lemma}[Consistency of observed information matrix evaluated at $\tilde{\beta}_n$]\label{lem:consistency_tilde_info_matrix} Let $\tilde{\beta}_n$ be a vector between $\hat\beta_n$ and $\beta$, i.e. $\tilde\beta_n = t\beta + (1-t)\hat{\beta}_n$ for some $t \in (0,1)$, almost surely. Define $$\tilde W_i = \frac{\tilde \mu_i}{1 + \tilde \mu_i / \bar \theta}; \quad \tilde \mu_i = \exp\left(\tilde \beta_n^\top Z_i \right); \quad \tilde W = \text{diag}\{\tilde W_1 ,\dots, \tilde W_n\}.$$ We have
		$$ \frac{1}{n} Z^\top  \tilde{W} Z \convp \E\left[ W_i Z_i Z_i^\top  \right].$$ As a corollary,
		$$ \left( \frac{1}{n} Z^\top  \tilde W Z \right)^{-1} \convp \left( \E\left[ W_i Z_i Z_i^\top  \right]  \right)^{-1}.$$
	\end{lemma}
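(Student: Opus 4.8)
The plan is to reduce Lemma~\ref{lem:consistency_tilde_info_matrix} to the argument already used for Lemma~\ref{lem:inverse_matrix_convergence}, exploiting the fact that the interpolated vector $\tilde\beta_n$ inherits the strong consistency of $\hat\beta_n$. Indeed, since $\tilde\beta_n - \beta = (1-t)(\hat\beta_n - \beta)$ for some (possibly random) $t \in (0,1)$, we have $\|\tilde\beta_n - \beta\| \le \|\hat\beta_n - \beta\| \to 0$ almost surely by Assumption~\ref{assu:consistency_beta}. Since $\tilde W_i = e^{\tilde\beta_n^\top Z_i}/(1 + e^{\tilde\beta_n^\top Z_i}/\bar\theta)$ is the same function of $\tilde\beta_n$ that $\hat W_i$ is of $\hat\beta_n$, every step of the proof of Lemma~\ref{lem:inverse_matrix_convergence} applies verbatim with $\tilde\beta_n$ in place of $\hat\beta_n$. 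For concreteness I would run the argument directly.

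First, write $W \equiv \mathrm{diag}\{W_1, \dots, W_n\}$ and split, via the triangle inequality for the spectral norm,
$$\left\| \frac{1}{n} Z^\top \tilde W Z - \E[W_i Z_i Z_i^\top] \right\| \;\le\; \left\| \frac{1}{n} Z^\top \tilde W Z - \frac{1}{n} Z^\top W Z \right\| + \left\| \frac{1}{n} Z^\top W Z - \E[W_i Z_i Z_i^\top] \right\|.$$
The second term converges to zero in probability by the entrywise law of large numbers (Lemma~\ref{lem:SLLN}): each coordinate of $W_i Z_i Z_i^\top$ is bounded in absolute value by $\bar\theta C^2$, since $W_i = \bar\theta e^{\beta^\top Z_i}/(\bar\theta + e^{\beta^\top Z_i}) < \bar\theta$ and $\|Z_i\| \le C$ almost surely (Assumption~\ref{assu:boundedness_covariate}), so its expectation is finite. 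For the first term,
$$\left\| \frac{1}{n} Z^\top \tilde W Z - \frac{1}{n} Z^\top W Z \right\| = \left\| \frac{1}{n}\sum_{i=1}^n (\tilde W_i - W_i) Z_i Z_i^\top \right\| \le \frac{C^2}{n}\sum_{i=1}^n |\tilde W_i - W_i|.$$
The weight function $x \mapsto e^x/(1 + e^x/\bar\theta)$ is globally Lipschitz with constant at most $\bar\theta$ (its derivative equals $e^x/(1+e^x/\bar\theta)^2 \le e^x/(1+e^x/\bar\theta) < \bar\theta$), so by the mean value theorem and Cauchy--Schwarz, $|\tilde W_i - W_i| \le \bar\theta\,|Z_i^\top(\tilde\beta_n - \beta)| \le \bar\theta C \|\tilde\beta_n - \beta\| \le \bar\theta C \|\hat\beta_n - \beta\|$; this is exactly the $m = 1$ instance of the bound in Lemma~\ref{lem:xi_r_diff}, whose argument (via Lemma~\ref{lem:glm_f_bound}) applies to any pair of vectors, in particular $\tilde\beta_n$ and $\beta$. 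Hence the first term is at most $\bar\theta C^3 \|\hat\beta_n - \beta\|$, which tends to zero almost surely, and therefore in probability, by Assumption~\ref{assu:consistency_beta}. Combining the two bounds gives $\frac1n Z^\top \tilde W Z \convp \E[W_i Z_i Z_i^\top]$.

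The corollary then follows from the continuous mapping theorem, since matrix inversion is continuous on the open set of invertible matrices and the limit $\E[W_i Z_i Z_i^\top]$ is invertible by Lemma~\ref{lem:invert_info_matrix}; the only subtlety is that $\frac1n Z^\top \tilde W Z$ need not be invertible at finite $n$, which is handled in the usual way by noting that the event on which it lies within a small fixed neighborhood of $\E[W_i Z_i Z_i^\top]$ has probability tending to one, and on that event the inverse exists and is close to $(\E[W_i Z_i Z_i^\top])^{-1}$. I do not expect a genuine obstacle here: the one nontrivial observation is $\|\tilde\beta_n - \beta\| \le \|\hat\beta_n - \beta\|$, which collapses the lemma onto machinery already in place (the Lipschitz bound of Lemmas~\ref{lem:glm_f_bound}--\ref{lem:xi_r_diff}, the law of large numbers, and the invertibility of Lemma~\ref{lem:invert_info_matrix}), so the work is essentially the bookkeeping of Lemma~\ref{lem:inverse_matrix_convergence}.
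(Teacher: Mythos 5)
Your proposal is correct and follows essentially the same route as the paper: the paper's proof simply notes that $\|\tilde\beta_n - \beta\| \to 0$ almost surely because $\tilde\beta_n$ lies between $\beta$ and $\hat\beta_n$, and then repeats the argument of Lemma~\ref{lem:inverse_matrix_convergence} (Lipschitz bound from Lemmas~\ref{lem:glm_f_bound}--\ref{lem:xi_r_diff} with $Y_i = 0$, law of large numbers, and continuous mapping with Lemma~\ref{lem:invert_info_matrix}) verbatim with $\tilde\beta_n$ in place of $\hat\beta_n$. Your write-up just makes that reduction explicit.
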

	
	\begin{lemma}[Convergence of $(1/n)Z^\top  \tilde{W} \tilde{R} Z$ to $0$]\label{lem:B_matrix_convergence} Let $\tilde W$ and $\tilde{\mu}_i$ be defined as in Lemma \ref{lem:consistency_tilde_info_matrix}. Also, define $$\tilde r_i = \frac{Y_i - \tilde \mu_i}{ 1 + \tilde \mu_i/\bar \theta}; \quad \tilde{R} = \text{diag}\{\tilde r_1, \dots, \tilde r_n\}.$$
		We have
		$$ \frac{1}{n} Z^\top  \tilde{W}\tilde{R} Z \convp 0.$$
	\end{lemma}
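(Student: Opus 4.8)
\textbf{Proof proposal for Lemma \ref{lem:B_matrix_convergence}.}

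The plan is to argue entrywise. Since entrywise convergence in probability implies convergence of a sequence of matrices in the sense of (\ref{eqn:def_matrix_conv}), it suffices to show that for each fixed pair $j, k \in [p]$ the $(j,k)$ entry $\tfrac{1}{n}\sum_{i=1}^n Z_{ij} Z_{ik}\, \tilde W_i \tilde r_i$ converges to $0$ in probability. I would decompose this entry around its analogue built from the true parameter $\beta$:
\begin{equation*}
\frac{1}{n}\sum_{i=1}^n Z_{ij} Z_{ik}\, \tilde W_i \tilde r_i \;=\; \frac{1}{n}\sum_{i=1}^n Z_{ij} Z_{ik}\, W_i \xi_i \;+\; \frac{1}{n}\sum_{i=1}^n Z_{ij} Z_{ik}\left( \tilde W_i \tilde r_i - W_i \xi_i \right),
\end{equation*}
and control the two pieces separately.

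For the first piece, the summands $Z_{ij} Z_{ik} W_i \xi_i$ are i.i.d.\ functions of $(Y_i, Z_i)$, and $\E\bigl[|Z_{ij} Z_{ik} W_i \xi_i|\bigr] \le C^2 \bar\theta\, \E[|\xi_i|] < \infty$ by Assumption \ref{assu:boundedness_covariate}, the bound $0 \le W_i \le \bar\theta$, and Lemma \ref{lem:NB_model_moments_bound}. So by the law of large numbers (Lemma \ref{lem:SLLN}) the first piece converges almost surely to $\E[Z_{ij} Z_{ik} W_i \xi_i]$. Conditioning on $Z_i$ and using that under the null hypothesis ($\gamma = 0$) the data-generating model gives $\E[Y_i \mid Z_i] = \mu_i$, we get $\E[\xi_i \mid Z_i] = (1 + \mu_i/\bar\theta)^{-1}\E[Y_i - \mu_i \mid Z_i] = 0$, hence $\E[Z_{ij} Z_{ik} W_i \xi_i] = \E\bigl[Z_{ij} Z_{ik} W_i\, \E[\xi_i \mid Z_i]\bigr] = 0$. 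So the first piece vanishes in the limit.

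For the second piece, the key estimate is a pointwise bound on $\tilde W_i \tilde r_i - W_i \xi_i$ with a constant independent of the location of $\tilde\beta_n$ on the segment between $\hat\beta_n$ and $\beta$. Writing $\tilde W_i \tilde r_i - W_i \xi_i = \tilde W_i(\tilde r_i - \xi_i) + \xi_i(\tilde W_i - W_i)$, I would combine $|\tilde W_i| \le \bar\theta$ and $|\xi_i| \le Y_i + \bar\theta$ (from Lemma \ref{lem:glm_f_bound}) with the mean-value-theorem estimates of Lemma \ref{lem:xi_r_diff}, which hold verbatim with $\hat\beta_n$ replaced by $\tilde\beta_n$ because $\|\tilde\beta_n - \beta\| \le \|\hat\beta_n - \beta\|$; this yields $|\tilde r_i - \xi_i| \le C(Y_i + \bar\theta)\|\tilde\beta_n - \beta\|$ and $|\tilde W_i - W_i| \le C\bar\theta\|\tilde\beta_n - \beta\|$, and hence $|\tilde W_i \tilde r_i - W_i \xi_i| \le 2C\bar\theta(Y_i + \bar\theta)\|\tilde\beta_n - \beta\|$. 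Using $|Z_{ij} Z_{ik}| \le C^2$,
\begin{equation*}
\left| \frac{1}{n}\sum_{i=1}^n Z_{ij} Z_{ik}\left( \tilde W_i \tilde r_i - W_i \xi_i \right) \right| \;\le\; 2C^3\bar\theta\, \|\tilde\beta_n - \beta\| \cdot \frac{1}{n}\sum_{i=1}^n (Y_i + \bar\theta).
\end{equation*}
By Lemma \ref{lem:NB_model_moments_bound} and the law of large numbers, $\tfrac{1}{n}\sum_{i=1}^n(Y_i + \bar\theta) \to \E[Y_i] + \bar\theta < \infty$ almost surely, while $\|\tilde\beta_n - \beta\| \le \|\hat\beta_n - \beta\| \to 0$ almost surely by Assumption \ref{assu:consistency_beta}. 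Hence the right-hand side, and therefore the second piece, converges to $0$ almost surely, hence in probability. Adding the two pieces gives entrywise convergence to $0$ in probability, and passing to (\ref{eqn:def_matrix_conv}) finishes the proof.

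I expect the only delicate point to be the handling of the random, data-dependent evaluation point $\tilde\beta_n$: one must make sure the pointwise bound on $|\tilde W_i \tilde r_i - W_i \xi_i|$ is uniform over the segment joining $\hat\beta_n$ and $\beta$. This is exactly what the \emph{global} (in $x$) bounds of Lemmas \ref{lem:glm_f_bound} and \ref{lem:xi_r_diff} deliver — they reduce the mean-value estimate to something controlled solely by $\|\tilde\beta_n - \beta\|$ (which we push to $0$ via Assumption \ref{assu:consistency_beta}) and an integrable envelope $Y_i + \bar\theta$ (which the law of large numbers tames). Everything else is a routine application of the law of large numbers and the conditional-mean-zero property of $\xi_i$ under the null.
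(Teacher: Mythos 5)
Your proposal is correct and follows essentially the same route as the paper's proof: an entrywise argument splitting $\tilde W_i \tilde r_i$ around $W_i \xi_i$, bounding the difference term via the uniform mean-value estimates of Lemmas \ref{lem:glm_f_bound} and \ref{lem:xi_r_diff} (valid at the intermediate point $\tilde\beta_n$) together with strong consistency and the law of large numbers, and killing the main term via LLN plus the conditional-mean-zero property $\E[\xi_i \mid Z_i] = 0$. No gaps.
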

	
	\begin{lemma}[Asymptotic linearity of $\hat{\beta}_n$]\label{lem:consistency_beta} We have
		\begin{align*}
			\sqrt{n}(\hat\beta_n-\beta)=\left(\E[W_iZ_iZ_i^\top ]\right)^{-1}\frac{1}{\sqrt{n}}\sum_{i=1}^n Z_i\frac{Y_i-\mu_i}{1+\mu_i/\bar \theta}+o_p(1).
		\end{align*}
	\end{lemma}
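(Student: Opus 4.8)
The plan is to run the classical M-estimator argument: Taylor-expand the estimating equation $U_{\bar\theta}(\hat\beta_n) = 0$ about the true value $\beta$, invert the (consistent) Jacobian, and identify the leading term with the asserted influence function. Since $\hat\beta_n \to \beta$ almost surely (Assumption \ref{assu:consistency_beta}) and $b \mapsto U_{\bar\theta}(b) = \tfrac1n\sum_i Z_i f_{Y_i,\bar\theta}(b^\top Z_i)$ is smooth (here $f_{y,\bar\theta}$ is as in Lemma \ref{lem:glm_f_bound}), a coordinate-wise mean value theorem produces, for each coordinate $k$, an intermediate point $\tilde\beta_n^{(k)}$ on the segment between $\hat\beta_n$ and $\beta$ with
\begin{equation*}
0 = U_{\bar\theta}(\hat\beta_n) = U_{\bar\theta}(\beta) + \tilde J_n(\hat\beta_n - \beta),
\end{equation*}
where $\tilde J_n$ has $(k,l)$ entry $\tfrac1n\sum_i (Z_i)_k (Z_i)_l\, f'_{Y_i,\bar\theta}\big((\tilde\beta_n^{(k)})^\top Z_i\big)$.

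The crux is the algebraic form of $f'_{y,\bar\theta}$. Differentiating directly gives
\begin{equation*}
f'_{y,\bar\theta}(x) = -\,\frac{e^x}{1 + e^x/\bar\theta} \;-\; \frac1{\bar\theta}\cdot\frac{e^x}{1 + e^x/\bar\theta}\cdot\frac{y - e^x}{1 + e^x/\bar\theta},
\end{equation*}
so that, in the notation of Lemmas \ref{lem:consistency_tilde_info_matrix}--\ref{lem:B_matrix_convergence}, $f'_{Y_i,\bar\theta}(\tilde\beta_n^\top Z_i) = -\tilde W_i - \tfrac1{\bar\theta}\tilde W_i\tilde r_i$. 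Hence each entry of $\tilde J_n$ is an entry of $-\tfrac1n Z^\top\tilde W Z - \tfrac1{\bar\theta}\cdot\tfrac1n Z^\top\tilde W\tilde R Z$ formed with $\tilde\beta_n = \tilde\beta_n^{(k)}$, a convex combination of $\hat\beta_n$ and $\beta$. Applying Lemma \ref{lem:consistency_tilde_info_matrix} (giving $\tfrac1n Z^\top\tilde W Z \convp \E[W_i Z_i Z_i^\top]$) and Lemma \ref{lem:B_matrix_convergence} (giving $\tfrac1n Z^\top\tilde W\tilde R Z \convp 0$) entrywise yields $\tilde J_n \convp -\E[W_i Z_i Z_i^\top]$. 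Because $\E[W_i Z_i Z_i^\top]$ is invertible (Lemma \ref{lem:invert_info_matrix}), $\tilde J_n$ is invertible with probability tending to one and, on that event, $\tilde J_n^{-1} \convp -\big(\E[W_i Z_i Z_i^\top]\big)^{-1}$ by continuity of matrix inversion.

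For the score term, write $\xi_i = (Y_i - \mu_i)/(1+\mu_i/\bar\theta)$, so $\sqrt n\, U_{\bar\theta}(\beta) = \tfrac1{\sqrt n}\sum_{i=1}^n Z_i\xi_i$. Since $\mu_i = \E[Y_i\mid Z_i]$ under the null, $\E[\xi_i\mid Z_i] = 0$ and therefore $\E[Z_i\xi_i] = 0$; since $Z_i$ is bounded (Assumption \ref{assu:boundedness_covariate}) and $\E[\xi_i^2] < \infty$ (Lemma \ref{lem:NB_model_moments_bound}), $Z_i\xi_i$ has finite second moment and $\tfrac1{\sqrt n}\sum_i Z_i\xi_i = O_p(1)$. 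On the invertibility event, rearranging the expansion gives $\sqrt n(\hat\beta_n - \beta) = -\tilde J_n^{-1}\cdot\tfrac1{\sqrt n}\sum_i Z_i\xi_i$; writing $-\tilde J_n^{-1} = \big(\E[W_i Z_i Z_i^\top]\big)^{-1} + o_p(1)$ and multiplying by the $O_p(1)$ score vector yields
\begin{equation*}
\sqrt n(\hat\beta_n - \beta) = \big(\E[W_i Z_i Z_i^\top]\big)^{-1}\frac1{\sqrt n}\sum_{i=1}^n Z_i\frac{Y_i - \mu_i}{1+\mu_i/\bar\theta} + o_p(1),
\end{equation*}
which is the claim (the complement of the invertibility event has vanishing probability and does not affect the $o_p(1)$ conclusion).

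I expect the only genuine subtlety to be bookkeeping rather than substance: the coordinate-wise mean value theorem furnishes a different intermediate point per coordinate, so Lemmas \ref{lem:consistency_tilde_info_matrix}--\ref{lem:B_matrix_convergence} must be invoked entrywise (and one should note that a measurable selection of these segment points is available), and every manipulation of $\tilde J_n^{-1}$ must be confined to the high-probability event on which $\tilde J_n$ is invertible. The remaining ingredients---the explicit formula for $f'_{y,\bar\theta}$, the moment bounds, and the continuous-mapping step---are routine given the preceding NB-model lemmas.
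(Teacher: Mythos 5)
Your proposal is correct and follows essentially the same route as the paper: a mean value expansion of $U_{\bar\theta}$ around $\beta$, identification of the Jacobian with $-\tfrac1n Z^\top\tilde W Z - \tfrac1{\bar\theta}\cdot\tfrac1n Z^\top\tilde W\tilde R Z$, convergence of these two pieces via Lemmas \ref{lem:consistency_tilde_info_matrix} and \ref{lem:B_matrix_convergence}, inversion via Lemma \ref{lem:invert_info_matrix} and continuous mapping, and an $O_p(1)$ bound on $\sqrt n\,U_{\bar\theta}(\beta)$ (you use a moment/Chebyshev argument where the paper cites the CLT, an immaterial difference). Your coordinate-wise handling of the mean value theorem, with a possibly different intermediate point per row of the Jacobian, is in fact a slightly more careful rendering of the step the paper writes with a single $\tilde\beta_n$.
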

	
	\begin{lemma}[Consistency of $\hat{r}_i^m$]\label{lem:second_moment_r_lower_bound} 
		For $m \in \{1, 2, 4\}$, we have
		$$
		\lim_{n\rightarrow\infty}\frac{1}{n}\sum_{i=1}^n \hat r_{i}^m = \E[\xi_{i}^m], \text{almost surely.}
		$$
		As corollaries,
		$$
			\liminf_{n\rightarrow\infty}\frac{1}{n}\sum_{i=1}^n \hat r_{i}^2>0,  \text{almost surely}; \quad \limsup_{n\to\infty} \frac{1}{n} \sum_{i=1}^n \hat{r}_i^4 < \infty,  \text{almost surely}.
			$$
	\end{lemma}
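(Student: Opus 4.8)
The plan is to establish the main identity by a triangle-inequality decomposition into a ``plug-in error'' term, controlled via the strong consistency of $\hat\beta_n$ and the perturbation bound of Lemma \ref{lem:xi_r_diff}, and a ``sampling error'' term, handled by the strong law of large numbers (Lemma \ref{lem:SLLN}). Concretely, I would write
$$
\left| \frac{1}{n}\sum_{i=1}^n \hat r_i^m - \E[\xi_i^m] \right| \;\le\; \frac{1}{n}\sum_{i=1}^n \left| \hat r_i^m - \xi_i^m \right| \;+\; \left| \frac{1}{n}\sum_{i=1}^n \xi_i^m - \E[\xi_i^m] \right|.
$$
The second term tends to $0$ almost surely: the $\xi_i$ are i.i.d.\ and $\E[|\xi_i|^m] < \infty$ by Lemma \ref{lem:NB_model_moments_bound}, so Lemma \ref{lem:SLLN} applies.

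For the first term, Lemma \ref{lem:xi_r_diff} applies since $m \in \{1,2,4\}$ and gives $|\hat r_i^m - \xi_i^m| \le m C (Y_i + \bar\theta)^m \|\hat\beta_n - \beta\|$, hence
$$
\frac{1}{n}\sum_{i=1}^n \left| \hat r_i^m - \xi_i^m \right| \;\le\; m C \,\|\hat\beta_n - \beta\| \cdot \frac{1}{n}\sum_{i=1}^n (Y_i + \bar\theta)^m.
$$
Expanding $(Y_i + \bar\theta)^m$ by the binomial theorem and invoking Lemma \ref{lem:NB_model_moments_bound} shows $\E[(Y_i+\bar\theta)^m] < \infty$, so $\tfrac{1}{n}\sum_i (Y_i+\bar\theta)^m$ converges almost surely to a finite constant (again by Lemma \ref{lem:SLLN}); meanwhile $\|\hat\beta_n-\beta\| \to 0$ almost surely by Assumption \ref{assu:consistency_beta}. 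The product therefore converges to $0$ almost surely, and intersecting the finitely many probability-one events on which these convergences hold yields $\tfrac{1}{n}\sum_i \hat r_i^m \to \E[\xi_i^m]$ almost surely.

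The corollaries follow by specializing $m$. For $m = 4$, the established limit gives $\limsup_n \tfrac{1}{n}\sum_i \hat r_i^4 = \E[\xi_i^4] < \infty$ by Lemma \ref{lem:NB_model_moments_bound}. For $m = 2$, $\liminf_n \tfrac{1}{n}\sum_i \hat r_i^2 = \E[\xi_i^2]$, and $\E[\xi_i^2] > 0$: since $\xi_i = (Y_i - \mu_i)/(1 + \mu_i/\bar\theta)$ with $\mu_i > 0$, and conditionally on $Z_i$ the NB response is non-degenerate (conditional variance $\mu_i + \phi\mu_i^2 > 0$), we get $\E[(Y_i-\mu_i)^2 \mid Z_i] > 0$ and hence $\E[\xi_i^2] > 0$ after taking expectations. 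The only mild subtlety — not a genuine obstacle — is careful bookkeeping of the almost-sure exceptional sets and the elementary NB non-degeneracy argument for the strict positivity in the $m=2$ corollary; the rest is a direct assembly of Lemma \ref{lem:xi_r_diff}, Lemma \ref{lem:NB_model_moments_bound}, Lemma \ref{lem:SLLN}, and Assumption \ref{assu:consistency_beta}.
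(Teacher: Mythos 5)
Your proposal is correct and follows essentially the same route as the paper's proof: the same triangle-inequality split into a plug-in term controlled by Lemma \ref{lem:xi_r_diff} together with Assumption \ref{assu:consistency_beta} and an i.i.d.\ sampling term handled by Lemma \ref{lem:SLLN} with the moment bounds of Lemma \ref{lem:NB_model_moments_bound}, followed by the same positivity argument for $\E[\xi_i^2]$ via the non-degenerate NB conditional variance. No gaps; your treatment of all $m \in \{1,2,4\}$ and the bookkeeping of almost-sure sets match the paper's argument in substance.
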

	
	\begin{lemma}[Truncation condition]\label{lem:truncation_condition_Hajek_CLT}
		We have 
		\begin{align*}
			\limsup_{d\rightarrow\infty}\limsup_{n\rightarrow\infty}\frac{1}{n}\sum_{i=1}^n \indicator\left(\left|\frac{\hat r_{i}}{\sqrt{\sum_{i=1}^n \hat r_{i}^2/n}}\right|\geq d\right)=0,\ \text{almost surely}.
		\end{align*}
	\end{lemma}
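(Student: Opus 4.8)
The plan is to reduce the statement to a one-line Markov inequality that is entirely deterministic once the self-normalizing denominator is known to be strictly positive. Write $\bar V_n \equiv \tfrac{1}{n}\sum_{i=1}^n \hat r_i^2$ for the quantity appearing inside the square root. The first step is to invoke the corollary to Lemma \ref{lem:second_moment_r_lower_bound}, which gives $\liminf_{n\to\infty}\bar V_n > 0$ almost surely. Hence on a probability-one event there is an (almost surely finite) index $n_0$ with $\bar V_n > 0$ for all $n \ge n_0$, so that $|\hat r_i|/\sqrt{\bar V_n}$ is well defined for those $n$; since $\limsup_{n\to\infty}$ ignores any finite prefix, the behaviour for $n < n_0$ is irrelevant.

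The second step is purely algebraic. For any $t \ge 0$ and $d > 0$ one has $\indicator(t \ge d) \le t^2/d^2$. Applying this termwise with $t = |\hat r_i|/\sqrt{\bar V_n}$ and summing, we get, for every $n \ge n_0$,
\begin{align*}
\frac{1}{n}\sum_{i=1}^n \indicator\!\left(\frac{|\hat r_i|}{\sqrt{\bar V_n}} \ge d\right)
&\le \frac{1}{n}\sum_{i=1}^n \frac{\hat r_i^2}{d^2 \bar V_n} \\
&= \frac{1}{d^2 \bar V_n}\cdot \frac{1}{n}\sum_{i=1}^n \hat r_i^2
\;=\; \frac{1}{d^2}.
\end{align*}
Taking $\limsup_{n\to\infty}$ yields $\limsup_{n\to\infty}\tfrac{1}{n}\sum_{i=1}^n \indicator(\,\cdot\,) \le 1/d^2$ almost surely for each fixed $d > 0$, and then $\limsup_{d\to\infty}$ of the right-hand side is $0$. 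Since the inner average is nonnegative, the iterated limit equals $0$ almost surely, which is the claim.

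I do not anticipate a genuine obstacle here: the only point requiring care is the a.s.\ positivity of $\bar V_n$, which is exactly what the corollary to Lemma \ref{lem:second_moment_r_lower_bound} supplies (it rules out the degenerate case $\hat r_1 = \dots = \hat r_n = 0$ along a subsequence of $n$), and everything else is deterministic. I would note that if one instead wanted the weighted form of the truncation condition demanded by Lemma \ref{lem:hajek_clt} — namely $\sum_{i=1}^n c_i^2\, \indicator(|\sqrt n c_i| \ge d) \to 0$ with $c_i \equiv \hat r_i/\sqrt{\sum_{j} \hat r_j^2}$ — the same inequality applied twice reduces it to $\limsup_{n\to\infty} d^{-2}\,(\tfrac1n\sum_i \hat r_i^4)/(\tfrac1n\sum_i \hat r_i^2)^2$, which tends to $0$ as $d \to \infty$ by the $m = 2$ and $m = 4$ cases of Lemma \ref{lem:second_moment_r_lower_bound}; but the statement as written follows from the single unweighted bound above.
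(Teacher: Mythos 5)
Your proof is correct, but it takes a genuinely different and substantially more elementary route than the paper. The paper's argument approximates $\hat r_i$ by $\xi_i$ via Lemma \ref{lem:xi_r_diff}, introduces the good events $\mathcal{E}_n$ (empirical second moment bounded below by $0.9\,\E[\xi_i^2]$) and $\mathcal{D}_n$ ($\|\hat\beta_n-\beta\|\le 1/C$), and then applies the SLLN and Markov's inequality to the i.i.d.\ dominating variables $Y_i+\bar\theta+|\xi_i|$, arriving at a bound of the form $\bigl(4\E[Y_i^2]+4\bar\theta^2+2\E[\xi_i^2]\bigr)/\bigl(0.9\,d^2\E[\xi_i^2]\bigr)$ that vanishes as $d\to\infty$. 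You instead exploit the self-normalized structure of the statement: since the normalizer inside the indicator is exactly $\bar V_n=\tfrac1n\sum_i\hat r_i^2$, the deterministic bound $\indicator(t\ge d)\le t^2/d^2$ collapses the average to $1/d^2$ uniformly in $n$, and the only probabilistic input needed is that $\bar V_n>0$ for all large $n$, which is precisely the corollary of Lemma \ref{lem:second_moment_r_lower_bound} that the paper already proves and uses elsewhere. Your bound is both shorter and sharper (an explicit $1/d^2$ free of moment constants), and your closing remark is also right: the same Chebyshev trick applied to the weighted sum $\sum_i c_i^2\indicator(|\sqrt n c_i|\ge d)$ bounds it by $d^{-2}\,(\tfrac1n\sum_i\hat r_i^4)/(\tfrac1n\sum_i\hat r_i^2)^2$, so the $m=2,4$ cases of Lemma \ref{lem:second_moment_r_lower_bound} would let one verify the H\'ajek truncation condition directly and skip the Cauchy--Schwarz step in the proof of Proposition \ref{thm:asy_permutation_distribution}. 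What the paper's longer route buys is mainly structural: it mirrors the analogous argument for the linear model (Lemma \ref{lem:lm_truncation_condition}) and ties the truncation condition back to population moments of the true errors, but for the statement as written none of that machinery is required.
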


	\begin{lemma}[Consistency of $\hat{W}_i$ and $\hat{W}_i^2$]\label{lem:consistency_W_i}
		We have 
		\begin{align}\label{eq:convergence_W}
			\frac{1}{n}\sum_{i=1}^n \hat W_{i} \convp \E[W_i]; \quad \frac{1}{n}\sum_{i=1}^n \hat W_{i}^2 \convp \E[W_i^2].
		\end{align}
		As a corollary, for given $j \in [p]$, we have
		\begin{align}\label{eq:stochastic_boundedness_WZ}
			 \frac{1}{n}\sum_{i=1}^n \hat W_i Z_{ij}=O_p(1);\quad\frac{1}{n}\sum_{i=1}^n \hat W_i^2Z_{ij}^2=O_p(1).
		\end{align}
	\end{lemma}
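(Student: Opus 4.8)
The plan is to write $\hat W_i$ as $W_i$ plus a vanishing perturbation and apply the strong law of large numbers (Lemma \ref{lem:SLLN}) to the leading term. For the first claim, decompose $\frac1n\sum_{i=1}^n \hat W_i = \frac1n\sum_{i=1}^n W_i + \frac1n\sum_{i=1}^n(\hat W_i - W_i)$. Since $W_i = \mu_i/(1+\mu_i/\bar\theta)\le\bar\theta$ is bounded (equivalently, $\E[W_i]<\infty$ by Lemma \ref{lem:NB_model_moments_bound}), the SLLN gives $\frac1n\sum_{i=1}^n W_i\to\E[W_i]$ almost surely, hence in probability. For the remainder, the corollary of Lemma \ref{lem:xi_r_diff} with $m=1$ gives the uniform-in-$i$ bound $|\hat W_i - W_i|\le C\bar\theta\,\|\hat\beta_n-\beta\|$, so that $\big|\frac1n\sum_{i=1}^n(\hat W_i - W_i)\big|\le C\bar\theta\,\|\hat\beta_n-\beta\|\to 0$ almost surely by Assumption \ref{assu:consistency_beta}. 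Summing the two pieces yields $\frac1n\sum_{i=1}^n\hat W_i\convp\E[W_i]$.

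The second claim is structurally identical: write $\frac1n\sum_{i=1}^n\hat W_i^2 = \frac1n\sum_{i=1}^n W_i^2 + \frac1n\sum_{i=1}^n(\hat W_i^2 - W_i^2)$, note $\E[W_i^2]<\infty$ (again by Lemma \ref{lem:NB_model_moments_bound}, or because $W_i^2\le\bar\theta^2$) so the first term converges almost surely to $\E[W_i^2]$ by the SLLN, and bound the second term in absolute value by $\frac1n\sum_{i=1}^n|\hat W_i^2 - W_i^2|\le 2C\bar\theta^2\,\|\hat\beta_n-\beta\|\to0$ almost surely, now invoking Lemma \ref{lem:xi_r_diff} with $m=2$.

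For the corollary, I would use Assumption \ref{assu:boundedness_covariate} to get $|Z_{ij}|\le\|Z_i\|\le C$ and $Z_{ij}^2\le C^2$ almost surely. Hence $\big|\frac1n\sum_{i=1}^n\hat W_i Z_{ij}\big|\le C\cdot\frac1n\sum_{i=1}^n\hat W_i$, and the right-hand side converges in probability to the finite constant $C\,\E[W_i]$ by the first part of the lemma; a sequence converging in probability to a constant is $O_p(1)$, so $\frac1n\sum_{i=1}^n\hat W_i Z_{ij}=O_p(1)$. The same argument, with $C^2$ replacing $C$ and $\hat W_i^2$ replacing $\hat W_i$, gives $\frac1n\sum_{i=1}^n\hat W_i^2 Z_{ij}^2=O_p(1)$.

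I do not anticipate a genuine obstacle here; the lemma is routine. The structural point that makes the argument clean is that the bound on $|\hat W_i^m - W_i^m|$ supplied by Lemma \ref{lem:xi_r_diff} is uniform in $i$ with a constant not involving $Y_i$ (unlike the residual bound $|\hat r_i^m - \xi_i^m|\le mC(Y_i+\bar\theta)^m\|\hat\beta_n-\beta\|$), so the perturbation term is controlled deterministically by $\|\hat\beta_n-\beta\|$ alone and no additional moment or law-of-large-numbers argument is needed for it. The only points requiring care are checking that the exponents used, $m=1$ and $m=2$, lie in the admissible set $\{1,2,4\}$ of Lemma \ref{lem:xi_r_diff}, and confirming finiteness of $\E[W_i]$ and $\E[W_i^2]$, both of which are immediate.
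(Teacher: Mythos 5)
Your proposal is correct and follows essentially the same route as the paper: decompose into the $W_i^m$ average (handled by the LLN of Lemma \ref{lem:SLLN} and the moment bound of Lemma \ref{lem:NB_model_moments_bound}) plus a perturbation controlled uniformly by the corollary of Lemma \ref{lem:xi_r_diff} with $m\in\{1,2\}$, and then deduce the corollary from $|Z_{ij}|\leq C$ (Assumption \ref{assu:boundedness_covariate}). No gaps; the argument matches the paper's proof step for step.
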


\subsection{Proofs of Lemmas}\label{sec:nb_lemma_proofs}

\textbf{Proof of Lemma \ref{lem:NB_model_moments_bound}.} Let $X$ be a negative binomial random variable with mean $\mu$ and size parameter $\theta$. We can express the $m$th moment of $X$ as a weighted sum of the powers of $\mu$ from $1$ to $m$. That is,
$$\E[X^m] = \sum_{k=1}^m l_k \mu^k,$$ where $l_1, \dots, l_m \in \R$ are combinatorial constants that depend on $\theta$. (See \cite{Simsek2024} for a proof of this statement and an explicit formula for the $l_k$s.) Turning our attention to $Y_i$, by the law of total expectation,
$$
\E[Y_i^m] = \E\left(\E[Y_i^m | Z_i \right]) =\sum_{k=1}^m \E\left[l_k \left(e^{\beta^\top Z_i}\right)^k \right] \leq \sum_{k=1}^m l_ke^{kC||\beta||} < \infty.
$$
Next, because $e^{\beta^\top Z_i}/ \bar \theta \geq 0,$ we have that
\begin{equation}\label{eq_lemma_nb_bounded_moments_1}
	\left(1 + e^{\beta^\top Z_i}/\bar \theta\right)^m \geq 1.    
\end{equation}
On the other hand, by the binomial theorem and the boundedness of $\E[Y_i^m]$, we observe
\begin{multline}\label{eq_lemma_nb_bounded_moments_2}
	\E\left[\left(Y_i - e^{\beta^\top  Z_i} \right)^m \right] \leq \sum_{k=0}^m \binom{m}{k}\E\left[Y_i^{m-k} e^{k\beta^\top Z_i}\right] \\ \leq \sum_{k=0}^m \binom{m}{k} \sqrt{\E\left[Y_i^{2m-2k}\right] e^{2kC||\beta||}} < \infty.
\end{multline}
Combining (\ref{eq_lemma_nb_bounded_moments_1}) and (\ref{eq_lemma_nb_bounded_moments_2}) yields
$$\E[\xi_i^m] = \E \left[\frac{(Y_i - e^{\beta^\top Z_i})^m}{(1 + e^{\beta^\top Z_i}/ \bar \theta)^m}\right] \leq \E\left[(Y_i - e^{\beta^\top Z_i})^m\right] < \infty.$$ Finally, we see
$$ \E\left[ W_i^m  \right] = \E \left[ \frac{e^{m\beta^\top Z_i }}{(1 + e^{\beta^\top  Z_i}/\bar \theta)^m} \right] \leq \E \left[ e^{m \beta^\top  Z_i} \right] \leq e^{mC|| \beta||} < \infty.$$
$\square$

\textbf{Proof of Lemma \ref{lem:invert_info_matrix}.}
By Assumption \ref{assu:lower_bound_eigenvalue}, the minimum eigenvalue of $\E\left[Z_i Z_i^\top  \right]$ is positive:
$$\lambda_\textrm{min}\left[\E\left[Z_i Z_i^\top \right]\right] =  \inf_{a \in \R^p, ||a|| = 1}  a^\top  \E[Z_i Z_i^\top ] a = \inf_{a \in \R^p, ||a|| = 1} \E\left[ (a^\top  Z_i)^2 \right] > 0.$$ Additionally, the random variable $W_i$ is almost surely greater than or equal to some positive constant $\delta$:
$$W_i = \frac{\exp( \beta^\top  Z_i)}{1 + \exp(\beta^\top  Z_i)/\bar \theta} \geq \frac{\exp(-C ||\beta||)}{ 1 + \exp(C ||\beta||)/\bar \theta }  \geq \delta > 0.$$ Thus, the minimum eigenvalue of $\E\left[ W_i Z_i Z_i^\top \right]$ also is positive:
$$\lambda_\textrm{min}\left[ \E\left[ W_i  Z_i Z_i^\top  \right] \right] =  \inf_{a \in \R^p, ||a|| = 1} \E \left[ W_i (a^\top  Z_i)^2 \right] \geq \inf_{a \in \R^p, ||a|| = 1} \delta \E\left[ (a^\top  Z_i)^2 \right] > 0.$$
Hence, $\E\left[ W_i Z_i Z_i^\top  \right]$ is invertible.
$\square$

\textbf{Proof of Lemma \ref{lem:glm_f_bound}.} We separate proofs of the two bounds.

\underline{Proof of the bound on $f_{y, \bar \theta}(x)$:}
The derivative $f'_{y,\bar \theta}$ of $f_{y,\bar \theta}$ is
\begin{align}\label{f_prime}
	f'_{y,\bar \theta}(x) = \frac{(-e^x)(1+e^x/\bar \theta) - (y - e^x)(e^x/\bar \theta)}{(1 + e^x/\bar \theta)^2} = \frac{-e^x(1 + y/\bar \theta)}{(1+e^x/\bar \theta)^2} = \frac{-\bar \theta e^x (\bar \theta + y)}{(\bar \theta + e^x)^2}. 
\end{align}
The function $f'_{y,\bar \theta}$ is strictly negative, as its numerator is strictly negative and its denominator is strictly positive. This implies that the function $f_{y,\bar \theta}$ is strictly decreasing. Next,
\begin{align*}
	\lim_{x\to-\infty} f_{y,\bar \theta}(x) = \frac{y-0}{1+0} = y,\ \lim_{x\to\infty} f_{y,\bar \theta} = \lim_{x\to\infty} \frac{ye^{-x} - 1}{e^{-x} + 1/\bar \theta} = \frac{0-1}{0+1/\bar \theta} = -\bar \theta.
\end{align*}
We conclude from the above statements that
$-\bar \theta \leq f_{y,\bar \theta}(x) \leq y$ for all $x \in \R.$ Finally,
$$ \sup_{x \in \R} | f_{y, \bar{\theta}}(x)| \leq \max\{ \bar \theta, y \} \leq \bar \theta + y.$$	
\underline{Proof of the bound on $f'_{y,\bar \theta}(x)$:} The derivative $f''_{y,\bar \theta}$ of $f'_{y,\bar \theta}$ is
\begin{align*}
	f''_{y,\bar \theta}(x) 
	&
	= \frac{-\bar \theta e^x (\bar \theta + y) (\bar \theta + e^x)^2 + \bar \theta e^x(\bar \theta + y)2(\bar \theta+e^x)e^x}{(\bar \theta + e^x)^4} \\ 
	&
	= \frac{-\bar \theta e^x (\bar \theta + y)(\bar \theta + e^x) + \bar \theta e^x (\bar \theta+y)2e^x}{(\bar \theta+e^x)^3} = \frac{\bar \theta e^x(\bar \theta + y)(e^x - \bar \theta)}{(\bar \theta + e^x)^3}.
\end{align*}
We find the critical value(s) of $f'_{y,\bar \theta}$ by setting $f''_{y,\bar \theta}$ to zero and solving for $x$:
$$
\frac{\bar \theta e^x (\bar \theta + y)(e^x - \bar \theta)}{(\bar \theta + e^x)^3} = 0 \iff e^x - \bar \theta = 0 \iff x = \log(\bar \theta).
$$
Thus, $f'_{y,\bar \theta}$ has a single critical value at $x = \log(\bar \theta)$, implying that the extreme values of $f'_{y,\bar \theta}$ occur at $x=\log(\bar \theta), x=-\infty,$ or $x=\infty$. We study these three cases one-by-one. First,
$$f'_{y,\bar \theta}(\log(\bar \theta)) = \frac{-\bar \theta (\bar \theta)(\bar \theta + y)}{ (\bar \theta + \bar \theta)^2}=-(1/4)(\bar \theta + y).$$
Next, observe that we can write
$$f'_{y,\bar \theta}(x) = \frac{-\bar \theta e^x(\bar \theta + y)}{\bar \theta^2 + 2\bar \theta e^x + e^{2x}} = \frac{-\bar \theta(\bar \theta + y)}{\bar \theta^2 e^{-x} + 2\bar \theta + e^x}.$$ We see that
$$ \lim_{x\to-\infty} f_{y,\bar \theta}'(x) = \lim_{x\to\infty}f_{y, \bar \theta}'(x) = 0.$$ It follows that
$-(1/4)(\bar  \theta + y) \leq f'_{y,\bar  \theta}(x) \leq 0$ for all $x \in \R$, or
$$|f'_{y,\bar  \theta}(x)| \leq (1/4)(\bar \theta+y) \leq \bar \theta + y$$ for all $x \in \R$.	
$\square$

\textbf{Proof of Lemma  \ref{lem:xi_r_diff}.} $f_{Y_{i}, \bar \theta}$ is everywhere differentiable. Thus, by the mean value theorem, there exists some $\gamma$ between $\hat{\beta}^\top_n Z_{i}$ and $\beta^\top Z_{i}$ such that
\begin{align*}
	|f_{Y_{i},\bar \theta}(\hat{\beta}^\top_n Z_{i}) - f_{Y_{i},\bar \theta}(\beta^\top _n Z_{i})| \leq |f_{Y_i,\bar \theta}'(\gamma)| \cdot |\hat{\beta}_n^\top Z_{i} - \beta^\top Z_{i}|.
\end{align*}
But by Lemma \ref{lem:glm_f_bound}, $|f_{Y_{i},\bar \theta}'(\gamma)| \leq Y_{i} + \bar \theta$ uniformly over $Z_{i},$ $\hat{\beta}_n$, and $\beta$.
Hence,
\begin{align*}
	|\hat{r}_{i} - \xi_{i}| = |f_{Y_{i},\bar \theta}(\hat{\beta}_n^\top Z_{i}) - f_{Y_{i}, \bar \theta}(\beta^\top Z_{i}) | \leq C (Y_{i}+\bar \theta)\|\hat{\beta}_n - \beta\|,
\end{align*}
where we have applied the Cauchy-Schwarz inequality. Next, by the above result and Lemma \ref{lem:glm_f_bound}, 
\begin{align*}
	|f_{Y_{i},\bar \theta}(\hat{\beta}_n^\top Z_{i})^2 &- f_{Y_{i},\bar \theta}(\beta^\top Z_{i})^2| \\ 
	&= |f_{Y_{i},\bar \theta}(\hat{\beta}_n^\top Z_{i}) - f_{Y_{i},\bar \theta}(\beta^\top Z_{i})| \cdot |f_{Y_i,\bar \theta}(\hat{\beta}_n^\top Z_i) + f_{Y_i,\bar \theta}(\beta^\top Z_i)| \\ 
	&\leq C(Y_{i} + \bar \theta)\|\hat{\beta}_n - \beta\| \left(|f_{Y_{i},\bar \theta}(\hat{\beta}_n^\top Z_{i})| + |f_{Y_{i},\bar \theta}(\beta^\top Z_i)|\right) \\ 
	&\leq C(Y_{i} + \bar \theta)\|\hat{\beta}_n - \beta\| \left(2 \sup_{x \in \R} |f_{Y_{i}, \bar \theta}(x) | \right) \leq 2C(Y_{i} + \bar \theta)^2\|\hat{\beta}_n - \beta\|. 
\end{align*}
Applying similar reasoning,
\begin{align*}
	|f_{Y_{i},\bar \theta}^4(\hat{\beta}_n^\top Z_{i}) &- f_{Y_{i},\bar \theta}^4(\beta^\top Z_{i})|\\
	&
	=|f_{Y_{i},\bar \theta}^2(\hat{\beta}_n^\top Z_{i}) - f_{Y_{i},\bar \theta}^2(\beta^\top Z_{i})| \cdot |f_{Y_{i},\bar \theta}^2(\hat{\beta}_n^\top Z_{i}) + f_{Y_{i},\bar \theta}^2(\beta^\top Z_{i})|\\
	&
	\leq 2C(Y_{i} + \bar \theta)^2\|\hat{\beta}_n - \beta\| \left(2 \sup_{x \in \R} |f_{Y_{i}, \bar \theta}(x) |^2 \right) \leq 4C(Y_{i} + \bar \theta)^4\|\hat{\beta}_n - \beta\|.
\end{align*}
As a corollary, for $m \in \{1, 2, 4\}$,
$$
|\hat{r}_{i}^m - \xi_{i}^m| = |f_{Y_{i},\bar \theta}^m(\hat{\beta}_n^\top Z_{i}) - f_{Y_{i},\bar \theta}^m(\beta^\top Z_{i})| \leq mC(Y_{i} + \bar \theta)^m\|\hat{\beta}_n - \beta\|
$$
and 
$$| \hat{W}_i^m - W_i^m| =  |f_{0,\bar \theta}^m(\beta^\top Z_{i}) - f_{0, \bar \theta}^m(\hat{\beta}_n^\top Z_{i})| \leq mC\bar \theta^m\|\hat{\beta}_n - \beta\|.$$ $\square$

\textbf{Proof of Lemma \ref{lem:inverse_matrix_convergence}.}  We show that the matrix $(1/n) Z^\top  \hat{W} Z$ converges in probability elementwise to $\E[W_i Z_i Z_i^\top ]$. Let $j, k \in \{ 1, \dots, p \}$ be given. By the triangle inequality, the boundedness of $Z_i$, and Lemma \ref{lem:xi_r_diff},
\begin{align*}
	\Bigg| \frac{1}{n} \sum_{i=1}^n Z_{ij} Z_{ik} \hat{W}_i - \frac{1}{n} \sum_{i=1}^n Z_{ij} Z_{ik} W_i \Bigg| &\leq 	
	\frac{1}{n} \sum_{i=1}^n |Z_{ij} Z_{ik}| |\hat{W}_i - W_i | \\
	&\leq \frac{C^2}{n} \sum_{i=1}^n | \hat{W}_i - W_i| \\
	&\leq C^3 \bar \theta ||\hat{\beta}_n - \beta||^2 \xrightarrow{p} 0.
\end{align*}
Next, by LLN (Lemma \ref{lem:SLLN}), the boundedness of $\E[W_i^2]$ (Lemma \ref{lem:NB_model_moments_bound}), and the boundedness of $Z_i$ (Assumption \ref{assu:boundedness_covariate}),  $$(1/n)\sum_{i=1}^n W_i Z_{ij} Z_{ik}  \xrightarrow{p} \E[W_i Z_{ij} Z_{ik}].$$ Hence, by the triangle inequality, and because $j,k \in [p]$ were chosen arbitrarily,
$$(1/n) Z^\top  \hat{W}Z \xrightarrow{p} \E[W_i  Z_i Z_i^\top ].$$ Finally, because matrix inverse is a continuous operation and the limit matrix is invertible (Lemma \ref{lem:invert_info_matrix}), the result holds by the continuous mapping theorem. $\square$

\textbf{Proof of Lemma  \ref{lem:unconditional_convergence_term2_1}}. Applying Lemma \ref{lem:xi_r_diff}, the boundedness of $\E[X_i^2]$ (Assumption \ref{assu:moment_treatment}), and LLN (Lemma \ref{lem:SLLN}), 
\begin{align*}
	\Bigg| \frac{1}{n} \sum_{i=1}^n X_i^2 \hat{W}_i - \frac{1}{n} \sum_{i=1}^n X_i^2 W_i \Bigg| \\ 
	&\leq  \frac{1}{n} \sum_{i=1}^n X_i^2 |\hat{W}_i - W_i| \\ 
	&\leq \frac{1}{n} \sum_{i=1}^n X_i^2 C \bar \theta || \hat{\beta}_n - \beta || = O_p(1) ||\hat{\beta}_n - \beta|| \convp 0. 
\end{align*}
Next, by LLN (Lemma \ref{lem:SLLN}), the boundedness of $\E[W_i^2]$ (Lemma \ref{lem:NB_model_moments_bound}), and the boundedness of $\E[X_i^4]$ (Assumption \ref{assu:moment_treatment}), $$(1/n) \sum_{i=1}^n X_i^2 W_i \xrightarrow{p} \E[X_i^2 W_i].$$ Finally, by the triangle inequality, $(1/n) \sum_{i=1}^n X_i^2 \hat{W}_i \to \E[X_i^2 W_i].$
$\square$

\textbf{Proof of Lemma \ref{lem:unconditional_convergence_term2_2}}. Let $j \in [p]$ be given. Applying Lemma \ref{lem:xi_r_diff}, the boundedness of $\E[|X_i|]$ (Assumption \ref{assu:moment_treatment}), and LLN (Lemma \ref{lem:SLLN}), we have
\begin{align*}
	\Bigg| \frac{1}{n} \sum_{i=1}^n X_i \hat{W}_i Z_{ij} - \frac{1}{n} \sum_{i=1}^n X_i &W_i Z_{ij} \Bigg| \\ 
	&\leq \left| \frac{1}{n} \sum_{i=1}^n (\hat{W}_i - W_i) X_i Z_{ij } \right|
	\\ & \leq \frac{1}{n} \sum_{i=1}^n C^2 \bar \theta||\hat{\beta}_n - \beta|| \cdot |X_i| = ||\hat{\beta}_n - \beta|| O_p(1) \convp 0.
\end{align*}
Next, by LLN (Lemma \ref{lem:SLLN}), the boundedness of $\E[X_i^2]$ (Assumption \ref{assu:moment_treatment}) and the boundedness of $\E[W_i^m]$ (Lemma \ref{lem:NB_model_moments_bound}), $(1/n) \sum_{i=1}^n X_i W_i Z_{ij} \convp \E[X_i W_i Z_{ij}].$ By triangle inequality, $(1/n) \sum_{i=1}^n X_i \hat{W}_i Z_{ij} \convp \E[X_i W_i Z_{ij}].$ Finally, the result holds because $j$ was chosen arbitrarily. $\square$

\textbf{Proof of Lemma \ref{lem:consistency_tilde_info_matrix}}. Because $\tilde{\beta}_n$ is between $\beta$ and $\hat{\beta}_n$, we have that $||\tilde{\beta}_n - \beta|| \convp 0$ by Assumption \ref{assu:consistency_beta}. The proof of Lemma \ref{lem:consistency_tilde_info_matrix} is exactly the same as the proof of Lemma \ref{lem:inverse_matrix_convergence}, but we replace $\hat{\beta}_n$ with $\tilde{\beta}_n$. $\square$

\textbf{Proof of Lemma \ref{lem:B_matrix_convergence}}. Let $j, k \in [p]$ be given. For real numbers $a, b, c, d, z \in \R$, the following inequality holds: $|zab - zcd| \leq |z|\cdot|a - c|\cdot |b| + |z|\cdot|b - d|\cdot|c|.$ Applying this inequality, the boundedness of $\E[Y_i]$, and LLN (Lemma \ref{lem:SLLN}),
\begin{align}\label{eqn:B_matrix_convergence_1}
	\Bigg| \frac{1}{n} \sum_{i=1}^n & \tilde{W}_i \tilde{r}_i Z_{ij} Z_{ik} - \frac{1}{n} \sum_{i=1}^n W_i \xi_i Z_{ij} Z_{ik} \Bigg|  \notag \\
	&\leq \frac{1}{n} \sum_{i=1}^n \left| Z_{ij} Z_{ik}  \right|  \cdot \left| \tilde{r}_i - \xi_i \right|  \cdot |\tilde{W}_i| + \frac{1}{n} \sum_{i=1}^n |Z_{ij} Z_{ik}| \cdot \left| \tilde{W}_i - W_i \right| \cdot |\xi_i| \notag \\ 
	&\leq \frac{1}{n} \sum_{i=1}^n C^2 \cdot C(\bar{\theta} + Y_i) || \tilde{\beta}_n - \beta || \cdot \bar \theta + \frac{1}{n} \sum_{i=1}^n C^2 \cdot  C \bar \theta ||\tilde{\beta}_n -\beta|| \cdot (Y_i + \bar \theta) \notag \\ 
	&= 2 C^3 \cdot ||\tilde{\beta}_n - \beta|| \cdot \bar \theta \cdot \frac{1}{n} \sum_{i=1}^n \left( Y_i + \bar{\theta} \right) = ||\tilde{\beta}_n - \beta|| O_p(1) \convp 0.
\end{align}
The second inequality above bounds $|\tilde{r}_i - \xi_i|$ and $| \tilde{W}_i - W_i|$ via Lemma \ref{lem:xi_r_diff} and $|\tilde{W}_i|$ and $|\xi_i|$ via Lemma \ref{lem:glm_f_bound}. Next, by LLN (Lemma \ref{lem:SLLN}) and the boundedness of $\E[W_i^m]$ and $\E[\xi^m_i]$ (Lemma \ref{lem:NB_model_moments_bound}),
\begin{equation}\label{eqn:B_matrix_convergence_2}
	\frac{1}{n}\sum_{i=1}^n W_i \xi_i Z_{ij} Z_{ik} \convp \E \left[ W_i \xi_i Z_{ij} Z_{ik} \right].
\end{equation}
Finally, 
\begin{equation}\label{eqn:B_matrix_convergence_3}
	\E\left[W_i \xi_i Z_{ij} Z_{ik} \right] = \E\left[\E\left[\frac{(Y_i - \mu_i)\mu_i}{(1 + \mu_i^2/\bar \theta)^2} Z_{ij} Z_{ik} \Big| Z_i \right] \right] = 0.
\end{equation}
Combining (\ref{eqn:B_matrix_convergence_1}), (\ref{eqn:B_matrix_convergence_2}), and (\ref{eqn:B_matrix_convergence_3}) via the triangle inequality, we obtain the result. $\square$

\textbf{Proof of Lemma \ref{lem:consistency_beta}}. 
Applying the multivariate mean value theorem to the score vector $U_{\bar \theta}$, we see
\begin{align*}
	U_{\bar\theta}(\hat\beta_n)-U_{\bar\theta}(\beta)=\nabla U_{\bar\theta}(\tilde \beta_n)(\hat\beta_n-\beta),
\end{align*}
where $\nabla U_{\bar \theta}$ denotes the Jacobian of the score vector, and $\tilde{\beta}_n$ is a vector between $\hat\beta_n$ and $\beta$ (i.e. $\tilde\beta_n = t\beta + (1-t)\hat{\beta}_n$ for some $t \in (0,1)$.) Because $U_{\bar \theta}(\hat{\beta}_n)=0,$ we can simplify the above as
\begin{align}\label{eq:u_taylor_expansion}
	U_{\bar\theta}(\beta)=-\nabla U_{\bar\theta}(\tilde \beta_n)(\hat\beta_n-\beta).
\end{align}
First, we establish the convergence
\begin{equation}\label{eqn:asy_invert_u_tilde_beta}
	\nabla U_{\bar \theta} (\tilde \beta_n) \convp -\E[W_i Z_i Z_i^\top].
\end{equation}
We can compute $\nabla U_{\bar \theta}(\beta)$ as follows:
$$ \nabla U_{\bar \theta}(\beta) = -\frac{1}{n}  \sum_{i=1}^n \frac{\mu_i}{1 + \mu_i/\bar \theta} Z_iZ_i^\top  - \frac{1}{n} \sum_{i=1}^n \frac{(Y_i - \mu_i) \mu_i/\bar \theta}{(1 + \mu_i/\bar \theta)^2} Z_i Z_i^\top .$$
Define 
\begin{align*}
	A_{n,\bar\theta}(\beta) &\equiv - \frac{1}{n}\sum_{i=1}^n \frac{\mu_i}{1+\mu_i/\bar\theta}Z_iZ_i^\top = -\frac{1}{n} \sum_{i=1}^n  W_i Z_iZ_i^\top  ,\\ 
	B_{n,\bar\theta}(\beta)&\equiv -\frac{1}{n}\sum_{i=1}^n \frac{(Y_i - \mu_i)\mu_i}{(1+ \mu_i^2/\bar\theta)^2}Z_iZ_i^\top = -\frac{1}{n} \sum_{i=1}^n \xi_i W_i Z_i Z_i^\top .
\end{align*}
We can express $\nabla U_{\bar \theta}(\beta)$ as $\nabla U_{\bar \theta} (\beta) = A_{n, \bar \theta}(\beta) + B_{n, \bar \theta}(\beta)/\bar \theta.$ Lemmas \ref{lem:consistency_tilde_info_matrix} and \ref{lem:B_matrix_convergence} imply that
$$A_{n, \bar \theta}(\tilde \beta_n) = - \frac{1}{n} Z^\top  \tilde{W} Z  \convp -\E\left[W_i Z_i Z_i^\top  \right]$$ and
$$B_{n, \bar \theta}(\tilde \beta_n) =  -\frac{1}{n} Z^\top  \tilde{R} \tilde{W} Z \convp 0,$$ respectively. Hence, (\ref{eqn:asy_invert_u_tilde_beta}) is proven. By the invertibility of $\E[W_i Z_i Z_i^\top ]$ (Lemma \ref{lem:invert_info_matrix}), the continuous mapping theorem, and the continuity of matrix inverse, $$\left(-\nabla U_{\bar \theta} (\tilde \beta_n)\right)^{-1} - \left(\E[W_i Z_i Z_i^\top ]\right)^{-1} = o_p(1).$$
Having established the asymptotic invertibility of $-\nabla U_{\bar \theta}(\tilde \beta_n)$, we can rewrite (\ref{eq:u_taylor_expansion}) as
$$ \left(- \nabla U_{\bar \theta} (\tilde{\beta}_n) \right)^{-1} \sqrt{n} U_{\bar \theta}(\beta) = \sqrt{n}\left( \hat{\beta}_n - \beta \right)$$ with high probability. The multivariate central limit theorem implies
$\sqrt{n}U_{\bar \theta} (\beta) = O_p(1).$ Thus,
\begin{multline*}
	\left(-\nabla U_{\bar \theta} (\tilde{\beta}_n) \right)^{-1} \sqrt{n} U_{\bar \theta}(\beta) = \left(\left( \E[W_i Z_i Z_i^\top ]\right)^{-1} + o_p(1) \right) \sqrt{n} U_{\bar \theta}(\beta) \\ = \left(\E[W_i Z_i Z_i^\top ]\right)^{-1} \sqrt{n} U_{\bar \theta}(\beta) + o_p(1) O_p(1),
\end{multline*}
from which we conclude 
$$\sqrt{n}(\hat\beta_n-\beta)=\left(\E[W_iZ_iZ_i^\top ]\right)^{-1}\frac{1}{\sqrt{n}}\sum_{i=1}^n Z_i\frac{Y_i-\mu_i}{1+\mu_i/\bar \theta}+o_p(1).$$
$\square$

\textbf{Proof of Lemma \ref{lem:second_moment_r_lower_bound}}. 
First, we prove that, for $m \in \{2, 4\}$,
\begin{align*}
	\lim_{n\rightarrow\infty}\left|\frac{1}{n}\sum_{i=1}^n \hat r_{i}^m-\frac{1}{n}\sum_{i=1}^n \xi_{i}^m\right|=0,\ \text{almost surely.}
\end{align*}
By Lemma \ref{lem:xi_r_diff} and the boundedness of $\E[Y_i^m]$ (Lemma \ref{lem:NB_model_moments_bound}), we have
\begin{multline*}
	\Bigg|\frac{1}{n} \sum_{i=1}^n \hat{r}_i^m - \frac{1}{n} \sum_{i=1}^n  \xi_i^m  \Bigg| 
	\leq \frac{1}{n} \sum_{i=1}^n | \hat{r}_i^m - \xi_i^m|
	\leq \frac{1}{n} \sum_{i=1}^n 2m(Y_i - \bar{\theta})^m|| \hat{\beta}_n - \beta || \\
	\\ = O_p(1) ||\hat{\beta}_n - \beta|| \to 0, \text{almost surely}.
\end{multline*}
Next, by the boundedness of $\E\left[\xi_i^m\right]$ (Lemma \ref{lem:NB_model_moments_bound}), it follows that $(1/n) \sum_{i=1}^n \xi_i^m  \to \E[\xi_i^m],$ almost surely. Hence, applying the triangle inequality once more, we conclude that $(1/n) \sum_{i=1}^n \hat{r}_i^m \to  \E[\xi_i^m],$ almost surely. Next, we establish that $\E[\xi_{i}^2]>0$. Observe 
\begin{align*}
	\E[\xi_{i}^2]&=\E\left[\frac{(Y_{i}-\mu_{i})^2}{(1+\mu_{i}/\bar\theta)^2}\right]
	\geq \E\left[(Y_{i}-\mu_{i})^2\right] = \E\left[ \E \left[ (Y_i - \mu_i)^2 | Z_i  \right] \right] = \E \left[\textrm{Var}(Y_i | Z_i)  \right] \\
	&=\E[\mu_i+\mu_i^2/\theta]
	\geq \E[\mu_i]=\E[\exp(\beta^\top Z_i)]\geq\exp(-\|\beta\|C) > 0.
\end{align*}
As a corollary,
$$ \liminf_{n\to\infty} \frac{1}{n} \sum_{i=1}^n \hat{r}_i^2 > 0, \text{almost surely}.$$ Finally, the boundedness  of $\E[\xi_i^4]$ implies the corollary
$$ \limsup_{n\to\infty} \frac{1}{n} \sum_{i=1}^n \hat{r}_i^4 < \infty, \text{almost surely}.$$
$\square$

\textbf{Proof of Lemma \ref{lem:truncation_condition_Hajek_CLT}.} Consider the sequence of events $\mathcal{E}_1, \mathcal{E}_2, \dots$ defined by
\begin{align*}
	\mathcal{E}_n\equiv \left\{\frac{1}{n}\sum_{i=1}^n \hat r_{i}^2\geq  0.9\E[\xi_{i}^2]\right\}.
\end{align*}
For arbitrary events $A$ and $B$, the following inequality holds:
\begin{equation}\label{eqn:set_inequality}
\indicator(A) \leq \indicator(A \cap B) + \indicator(B^c).
\end{equation}
Applying this inequality,
\begin{align}\label{eq:truncation_upper_bound_1}
	\frac{1}{n}\sum_{i=1}^n \indicator&\left[\left|\frac{\hat r_{i}}{\sqrt{\sum_{i=1}^n \hat r_{i}^2/n}}\right|\geq d\right] \nonumber \\	
	&\leq \frac{1}{n}\sum_{i=1}^n \indicator\left[\left|\frac{\hat r_{i}}{\sqrt{\sum_{i=1}^n \hat r_{i}^2/n}}\right|\geq d, \mathcal{E}_n\right]+\indicator[\mathcal{E}_n^c] \text{ (Inequality (\ref{eqn:set_inequality}))}  \nonumber \\
	&\leq \frac{1}{n}\sum_{i=1}^n \indicator\left[\left|\frac{\hat r_{i}}{\sqrt{0.9\E[\xi_{i}^2]}}\right|\geq d\right]+\indicator[\mathcal{E}_n^c] \text{ (Definition of $\mathcal{E}_n$)}  \nonumber \\
	&\leq \frac{1}{n}\sum_{i=1}^n \indicator\left[\frac{|\hat r_{i}-\xi_{i}|+\left|\xi_{i}\right|}{\sqrt{0.9\E[\xi_{i}^2]}}\geq d\right]+\indicator[\mathcal{E}_n^c]  \text{ (Triangle inequality)}  \nonumber \\
	&\leq \frac{1}{n}\sum_{i=1}^n \indicator\left[\frac{C(Y_i+\bar\theta)\|\hat\beta_n-\beta\|+\left|\xi_{i}\right|}{\sqrt{0.9\E[\xi_{i}^2]}}\geq d\right]+\indicator[\mathcal{E}_n^c] \text{ (Lemma \ref{lem:xi_r_diff})}.
\end{align}
Next, define the sequence of events $\mathcal{D}_1, \mathcal{D}_2, \dots$ by $\mathcal{D}_n\equiv \{\|\hat\beta_n-\beta\| \leq 1/C\}$. We can further decompose the left-hand summand of (\ref{eq:truncation_upper_bound_1}) as
\begin{align}\label{eq:truncation_upper_bound_2}
	\frac{1}{n}\sum_{i=1}^n \indicator&\left[\frac{C(Y_i+\bar\theta)\|\hat\beta_n-\beta\|+\left|\xi_{i}\right|}{\sqrt{0.9\E[\xi_{i}^2]}}\geq d\right] \nonumber \\
	&\leq 
	\frac{1}{n}\sum_{i=1}^n \indicator\left[\frac{C(Y_i+\bar\theta)\|\hat\beta_n-\beta\|+\left|\xi_{i}\right|}{\sqrt{0.9\E[\xi_{i}^2]}}\geq d,\mathcal{D}_n\right]+\indicator(\mathcal{D}_n^c) \text{ (Inequality (\ref{eqn:set_inequality}))} \nonumber \\
	&\leq \frac{1}{n}\sum_{i=1}^n \indicator\left[\frac{Y_i+\bar\theta+\left|\xi_{i}\right|}{\sqrt{0.9\E[\xi_{i}^2]}}\geq d\right]+\indicator(\mathcal{D}_n^c) \text{ (Definition of $\mathcal{D}_n$)}.
\end{align}
Taking the limit of (\ref{eq:truncation_upper_bound_2}) and applying SLLN and Markov's inequality, we obtain
\begin{align}\label{eqn:truncation_upper_bound_3}
	\limsup_{n\rightarrow\infty}\frac{1}{n}\sum_{i=1}^n \indicator&\left[\frac{Y_i+\bar\theta+\left|\xi_{i}\right|}{\sqrt{0.9\E[\xi_{i}^2]}}\geq d\right] \nonumber \\
	&=\P\left[\frac{Y_i+\bar\theta+\left|\xi_{i}\right|}{\sqrt{0.9\E[\xi_{i}^2]}}\geq d\right] \textrm{ (SLLN)} \nonumber \\ &=\P\left[\frac{(Y_i+\bar\theta+|\xi_{i}| )^2}{0.9\E[\xi_{i}^2]}\geq d^2\right] \nonumber \\
	&\leq\frac{\E\left[\left(Y_i+\bar\theta+\left|\xi_{i}\right|\right)^2\right]}{d^2\cdot 0.9\E[\xi_{i}^2]} \text{ (Markov's)} \nonumber \\
	&\leq\frac{4\E[Y_i^2]+4\bar\theta^2+2\E[\xi_{i}^2]}{d^2\cdot 0.9\E[\xi_{i}^2]},
\end{align}
almost surely. Additionally, by Lemma \ref{lem:second_moment_r_lower_bound} and Assumption \ref{assu:consistency_beta},
\begin{equation}\label{eqn:truncation_upper_bound_4}
\limsup_{n\to\infty} \indicator(\mathcal{E}^c_n) = \limsup_{n\to\infty} \indicator(\mathcal{D}_n^c) = 0,
\end{equation}
 almost surely. Combining \eqref{eq:truncation_upper_bound_1}, \eqref{eq:truncation_upper_bound_2}, \eqref{eqn:truncation_upper_bound_3}, and \eqref{eqn:truncation_upper_bound_4} we conclude that
\begin{align*}
	&
	\limsup_{n\rightarrow\infty}\frac{1}{n}\sum_{i=1}^n \indicator\left[\left|\frac{\hat r_{i}}{\sqrt{\sum_{i=1}^n \hat r_{i}^2/n}}\right|\geq d\right]\\
	&
	\leq \frac{4\E[Y_i^2]+4\bar\theta^2+2\E[\xi_{i}^2]}{d^2\cdot 0.9\E[\xi_{i}^2]}+\limsup_{n\rightarrow\infty}\indicator(\mathcal{E}_n^c)+\limsup_{n\rightarrow\infty}\indicator(\mathcal{D}_n^c)\\
	&
	=\frac{4\E[Y_i^2]+4\bar\theta^2+2\E[\xi_{i}^2]}{d^2\cdot 0.9\E[\xi_{i}^2]},
\end{align*}
almost surely. Finally, by Lemma \ref{lem:NB_model_moments_bound}, $\E[Y_i^2]$ and $\E[\xi_{i}^2]$ are finite; thus, sending $d$ to $\infty$, we obtain the result.
$\square$

\textbf{Proof of Lemma \ref{lem:consistency_W_i}}. By Lemma \ref{lem:xi_r_diff}, for $m \in \{1, 2\}$, we have
$$ \Bigg| \frac{1}{n} \sum_{i=1}^n \hat{W}_i^m - \frac{1}{n} \sum_{i=1}^n W^m_i  \Bigg| \leq \frac{1}{n} \sum_{i=1}^n | \hat{W}^m_i - W^m_i| \leq  m C \bar{\theta}^m || \hat{\beta}_n - \beta|| \xrightarrow{p} 0.$$ Additionally, by LLN (Lemma \ref{lem:SLLN}) and the boundedness of $\E[W_i^m ]$ (Lemma \ref{lem:NB_model_moments_bound}), $$(1/n) \sum_{i=1}^n W^m_i \xrightarrow{p} \E[W_i^m ].$$ Finally, by the triangle inequality, $(1/n) \sum_{i=1}^n \hat{W}_i^m  \xrightarrow{p} \E[W_i^m ]$.
The corollary follows by the boundedness of $Z_{ij}$:
$$\frac{1}{n} \sum_{i=1}^n Z_{ij}^m \hat{W}_i^m \leq C^m \frac{1}{n} \sum_{i=1}^n \hat{W}^m_i = O_p(1).$$
$\square$

\section{Proof of Proposition \ref{thm:asy_distribution_misspecification}}\label{sec:prop_asy_proof}
We decompose
\begin{align*}
	T_n(X,Y,Z)
	&
	=\frac{\frac{1}{\sqrt{n}}\sum_{i=1}^n X_i\hat r_i}{\sqrt{\frac{1}{n}X^\top \hat W X-\left(\frac{1}{n}X^\top \hat W Z\right)\left(\frac{1}{n}Z^\top \hat W Z\right)^{-1}\left(\frac{1}{n}Z^\top \hat W X\right)}}\\
	&
	=\frac{\frac{1}{\sqrt{n}}\sum_{i=1}^n \frac{X_i (Y_i-\mu_i)}{1+\mu_i/\bar\theta}+\frac{1}{\sqrt{n}}\sum_{i=1}^n \frac{X_i (\mu_i-\hat \mu_i)}{1+\mu_i/\bar\theta}}{\sqrt{\frac{1}{n}X^\top \hat W  X-\left(\frac{1}{n}X^\top \hat W  Z\right)\left(\frac{1}{n}Z^\top \hat W Z\right)^{-1}\left(\frac{1}{n}Z^\top \hat W  X\right)}}.
\end{align*}
The proof consists of the following three steps:
\begin{enumerate}
	\item Show that
	\begin{align}
		&\nonumber
		\frac{1}{\sqrt{n}}\sum_{i=1}^n \frac{X_i (\mu_i-\hat \mu_i)}{1+\mu_i/\bar\theta}\\
		&\label{eq:sampling_convergence_term1}
		=-\E[Z_i^\top X_i W_i]\left(\E[W_iZ_i Z_i^\top]\right)^{-1}\frac{1}{\sqrt{n}}\sum_{i=1}^n Z_i\xi_i+o_p(1) \equiv \star.
	\end{align}
	\item Show that
	\begin{align}\label{eq:sampling_convergence_term2}
		\frac{1}{\sqrt{n}}\sum_{i=1}^n X_i \xi_i + \star \convd N\left(0, \E\left[ R_i^2 S_i/W_i \right] \right).
	\end{align}
	\item Show that 
	\begin{align}\label{eq:sampling_convergence_term3}
		\frac{1}{n}X \hat W X-\left(\frac{1}{n}Z^\top  X \hat W \right)\left(\frac{1}{n}Z^\top \hat W Z\right)^{-1}\left(\frac{1}{n} \hat W X Z \right) \convp \E\left[R_i^2\right].
	\end{align}
	
	We obtain the result by combining (\ref{eq:sampling_convergence_term1}), (\ref{eq:sampling_convergence_term2}), and (\ref{eq:sampling_convergence_term3}) via Slutsky's theorem.
\end{enumerate}

\underline{Proof of (\ref{eq:sampling_convergence_term1})}. 
By a second-order Taylor expansion,
$$\mu_i - \hat{\mu}_ i = \exp(\beta^\top  Z_i) - \exp(\hat{\beta}_n^\top  Z_i) = \exp(\beta^\top  Z_i) (\beta - \hat{\beta}_n)^\top  Z_i + \frac{\exp(c_i)}{2}((\beta - \hat{\beta}_n)^\top  Z_i)^2,$$ where $c_i$ is between $\hat{\beta}^\top Z_i$ and $\beta^\top Z_i$. The $c_i$s satisfy
\begin{equation}\label{eqn:max_ci}
	\max_{i \in [n]} c_i = O_p(1).
\end{equation}
Indeed,
$\max_{i \in [n]} |\beta^\top  Z_i| \leq C \left\| \beta \right\| = O_p(1),$ and
\begin{align*}
	\max_{i \in [n]} | \hat{\beta}^\top  Z_i | \leq \max_{i \in [n]} |(\hat{\beta}_n - \beta)^\top  Z_i | &+ \max_{i \in [n]}  | \beta^\top  Z_i | \\ &\leq C ||\hat{\beta}_n - \beta || + C ||\beta|| = o_p(1) + O_p(1) = O_p(1).
\end{align*}
Claim (\ref{eqn:max_ci}) follows from the fact that $c_i$ is between $\beta^\top Z_i$ and $\hat{\beta}_n^\top  Z_i$. Using the Taylor expansion, we can write
\begin{align*}
	\frac{1}{\sqrt{n}}\sum_{i=1}^n \frac{X_i (\mu_i-\hat \mu_i)}{1+\mu_i/\bar\theta} = &\frac{1}{\sqrt{n}} \sum_{i=1}^n X_i W_i (\beta - \hat{\beta}_n)^\top  Z_i \\ &+ O_p(1) \frac{1}{\sqrt{n}} \sum_{i=1}^n \frac{X_i ((\hat{\beta}_n - \beta)^\top  Z_i)^2}{1 + \mu_i/\bar \theta} \equiv I + II.
\end{align*}
Focusing first on term I, we see
\begin{align*}
	\frac{1}{\sqrt{n}} &\sum_{i=1}^n X_i W_iZ_i^\top (\beta - \hat{\beta}_n) \\ 
	&= \frac{1}{n} \sum_{i=1}^n Z_i^\top X_i W_i  ( \sqrt{n}(\beta - \hat{\beta}_n ))  \\
	&= - \frac{1}{n} \sum_{i=1}^n Z_i^\top X_i W_i \left[\left(\E[W_i Z_i Z_i^\top ]\right)^{-1} \frac{1}{\sqrt{n}} \sum_{j=1}^n Z_j \xi_j + o_p(1) \right] \text{ (Lemma \ref{lem:consistency_beta})} \\
	&= -\left(\E \left[Z_i^\top X_i W_i\right] + o_p(1) \right) \left(\E\left[W_i Z_i Z_i^\top  \right]\right)^{-1} \frac{1}{\sqrt{n}} \sum_{i=1}^n Z_i \xi_i + o_p(1) \text{ (LLN)}  \\
	&= -\E \left[Z_i^\top X_i W_i\right] \left(\E\left[W_i Z_i Z_i^\top  \right]\right)^{-1} \frac{1}{\sqrt{n}} \sum_{j=1}^n Z_j \xi_j + o_p(1). \text{ (CLT)}
\end{align*}
Next, turning our attention to term II, we observe
\begin{align*}
	\left| O_p(1) \frac{1}{\sqrt{n}} \sum_{i=1}^n \frac{X_i ((\hat{\beta}_n - \beta)^\top  Z_i)^2}{1 + \mu_i/\bar \theta} \right| &\leq O_p(1) \frac{1}{\sqrt{n}} \sum_{i=1}^n \frac{|X_i| \cdot ||\hat{\beta}_n - \beta||^2 \cdot ||Z_i||^2}{1 + \mu_i/\bar \theta} \\ &\leq O_p(1) ||\sqrt{n}(\hat{\beta}_n - \beta)||\cdot ||\hat{\beta}_n - \beta|| \cdot \frac{1}{n} \sum_{i=1}^n \frac{|X_i| C^2}{1 + \mu_i/\bar \theta} \\ &= O_p(1) O_p(1) o_p(1) O_p(1) = o_p(1),
\end{align*}
where we have applied Lemma \ref{lem:consistency_beta}, CLT, and the boundedness of $\E[X_i^4]$ (Assumption \ref{assu:moment_treatment}). Combining terms I and II yields claim (\ref{eq:sampling_convergence_term1}).

\underline{Proof of (\ref{eq:sampling_convergence_term2})}.
We can express (\ref{eq:sampling_convergence_term2}) as
\begin{align*}
	\frac{1}{\sqrt{n}}&\sum_{i=1}^n X_i \xi_i- \E[Z_i^\top X_i W_i]\left(\E[W_iZ_i Z_i^\top]\right)^{-1}\frac{1}{\sqrt{n}}\sum_{i=1}^n Z_i \xi_i + o_p(1) \\
	&=\frac{1}{\sqrt{n}}\sum_{i=1}^n\left(X_i-\E[Z_i^\top X_i W_i]\left(\E[W_iZ_i Z_i^\top]\right)^{-1}Z_i\right) \xi_i + o_p(1) \\  &\equiv \frac{1}{\sqrt{n}} \sum_{i=1}^n a_i \xi_i + o_p(1).
\end{align*}
We observe that
$$\E\left[ a_i \xi_i \right] = \E\left[ a_i  \E\left[\xi_i | (X_i , Z_i) \right] \right] = \E\left[a_i \frac{\mu_i - \mu_i}{1 + \mu_i/\bar \theta} \right] = 0.$$  Furthermore, as we showed in claim (\ref{eqn:exp_xi_i_squared}),
$$
\textrm{Var}\left[a_i \xi_i \right] = \E\left[ a_i^2  \E\left[\frac{(Y_i - \mu_i)^2}{(1 + \mu_i/\bar \theta)^2} \Big| (X_i, Z_i) \right] \right] = \E\left[a_i^2 S_i \right].
$$
Thus, by CLT,
$$ \frac{1}{\sqrt{n}} \sum_{i=1}^n a_i \xi_i + o_p(1) \convd N \left(0, \E[a_i^2 S_i] \right).$$
Finally, we can express $\E\left[ a_i^2 S_i \right]$ as
\begin{align*}
	\E&[a_i^2S_i] \\ 
	&= \E\left[\left(\sqrt{W}_i X_i  - \E\left[Z_i^\top X_i W_i \right] \left(\E\left[ W_i Z_i Z_i^\top  \right] \right)^{-1} \sqrt{W}_i Z_i \right)^2 S_i/W_i \right] \\ 
	&= \E\left[R_i^2 S_i/W_i \right].
\end{align*}

\underline{Proof of (\ref{eq:sampling_convergence_term3})}. By Lemmas \ref{lem:inverse_matrix_convergence},  \ref{lem:unconditional_convergence_term2_1}, and  \ref{lem:unconditional_convergence_term2_2}, and the continuous mapping theorem,
\begin{multline}\label{eqn:sampling_convergence_term3_limit}
	\frac{1}{n}X \hat W X-\left(\frac{1}{n} Z^\top  \hat W X \right)\left(\frac{1}{n}Z^\top \hat WZ\right)^{-1}\left(\frac{1}{n} X \hat W Z \right) \\ \convp \E[X_i^2W_i] - \E\left[ Z_i^\top  W_i X_i \right] \left(\E\left[ W_iZ_iZ_i^\top \right] \right)^{-1} \E\left[ X_i W_i Z_i \right].
\end{multline}
Let
$$a = \sqrt{W_i} X_i; \quad B = \E\left[Z_i^\top X_i W_i \right]; \quad C = \left( \E\left[W_i Z_iZ_i^\top  \right] \right)^{-1}; \quad D = \sqrt{W_i} Z_i.$$ We can express (\ref{eqn:sampling_convergence_term3_limit}) as
\begin{multline*}
	\E[X_i^2W_i] - \E\left[Z_i^\top X_i W_i \right] \left(\E\left[ W_iZ_iZ_i^\top \right] \right)^{-1} \E\left[W_i X_i Z_i \right] \\ = \E\left[ a^2\right] - 2 \E \left[aBCD\right] + \E\left[(BCD)^2\right]  = \E\left[\left( a - BCD \right)^2  \right]  = \E\left[R_i^2  \right].
\end{multline*}
Indeed, $\E[(\sqrt{W_i} X_i)^2] = \E[W_iX_i^2] = \E[a_i^2].$ Additionally, because $a$ is a scalar,
\begin{multline*}
	-2 \E\left[Z_i^\top X_i W_i \right]  \left(\E\left[W_i Z_iZ_i^\top \right] \right)^{-1} \E\left[W_i X_i Z_i \right] \\ = -2BC \E\left[aD\right] = -2\E\left[BCaD \right] = -2\E\left[ aBCD \right].
\end{multline*}
Finally, because $BCD$ is a scalar,
\begin{multline*}
	\E\left[Z_i^\top X_i W_i \right] \left(\E\left[W_iZ_iZ_i^\top \right] \right)^{-1} \E\left[W_i X_i Z_i \right] \\
	= \E\left[ \E\left[Z_i^\top X_i W_i \right]  \left( \E\left[W_i Z_iZ_i^\top  \right] \right)^{-1} \left(W_i Z_i Z_i^\top  \right)  \left( \E\left[W_i Z_iZ_i^\top  \right] \right)^{-1} \E\left[W_i X_i Z_i \right]\right] \\ 
	=\E\left[BCDD^\top  C^\top  B^\top  \right] = \E\left[ BCD (BCD)^\top  \right] = \E\left[(BCD)^2\right].
\end{multline*}
$\square$

\section{Proof of Proposition \ref{thm:asy_permutation_distribution}}\label{sec:asy_permutation_distribution_proof}

We write the permuted test statistic $T_n(X_{\pi},Y,Z)$ as 
\begin{align*}
	\frac{\sum_{i=1}^n X_{\pi(i)}\hat r_{i}}{\sqrt{\sum_{i=1}^n \hat r_{i}^2}}\sqrt{\frac{\frac{1}{n}\sum_{i=1}^n \hat r_{i}^2}{\frac{1}{n}X_\pi^\top \hat W X_{\pi}-\left(\frac{1}{n}X_\pi^\top \hat W Z\right)\left(\frac{1}{n}Z^\top \hat W Z\right)^{-1}\left(\frac{1}{n}Z^\top \hat W X_{\pi}\right)}}.
\end{align*}
The proof consists of the following four steps:
\begin{enumerate}
	\item Show that 
	\begin{align}\label{eq:conditional_normality_NB}
		\frac{\sum_{i=1}^n X_{\pi(i)}\hat r_{i}}{\sqrt{\sum_{i=1}^n \hat r_{i}^2}}\mid\mathcal{F}_n\convdp N(0,1).
	\end{align}
	\item Show that 
	\begin{align}\label{eq:second_moment_r_empirical_convergence}
		\frac{1}{n}\sum_{i=1}^n \hat r_{i}^2 \convp \E\left[\frac{(Y_{i}-\mu_{i})^2}{(1+\mu_{i}/\bar\theta)^2}\right]; \quad \E\left[\frac{(Y_{i}-\mu_{i})^2}{(1+\mu_{i}/\bar\theta)^2}\right]=\E[S_{i}].
	\end{align}
	\item Show that
	\begin{align}\label{eq:NB_denominator_permuting_1}
			\frac{1}{n}X_{\pi}^\top \hat W X_{\pi }  \convp \E[W_{i}].
	\end{align}
	
	\item Show that 
	\begin{align}\label{eq:NB_denominator_permuting_2}
		\left(\frac{1}{n}X_\pi^\top \hat W Z\right)\left(\frac{1}{n}Z^\top \hat W Z\right)^{-1}\left(\frac{1}{n}Z^\top \hat W X_\pi\right)\convp0 .
	\end{align}
\end{enumerate}
Combining the convergence results (\ref{eq:conditional_normality_NB}), (\ref{eq:second_moment_r_empirical_convergence}), (\ref{eq:NB_denominator_permuting_1}), and (\ref{eq:NB_denominator_permuting_2}) via continuous mapping theorem and conditional Slutsky's theorem (Lemma \ref{lem:cond_slutsky}) proves the proposition.
\\ \\
\noindent
\textbf{Proof of the conditional convergence (\ref{eq:conditional_normality_NB}).} We apply the H\'ajek CLT (Lemma \ref{lem:hajek_clt}) to establish this result. Define
\begin{align*}
	c_{i}\equiv \frac{\hat r_{i}}{\sqrt{\sum_{i=1}^n \hat r_{i}^2}}.
\end{align*}
We verify that the $c_i$s satisfy the four conditions of the H\'ajek central limit theorem.

\underline{Sum zero}. The score vector $U_{\bar \theta}(\beta)$ is zero when evaluated at the MLE $\hat{\beta}_n$:
$$U_{\bar \theta}(\hat{\beta}_n) = 0 \iff Z^\top  \hat{r} = 0.$$ The covariate matrix $Z$ contains a column of ones (corresponding to the intercept term; Assumption \ref{assu:intercept}). Thus, the $\hat{r}_i$s sum to zero, implying that the $c_i$s sum to zero.

\underline{Sum one}. The squared $c_i$s sum to one:
$$ \sum_{j=1}^n c^2_j = \frac{1}{ \sum_{i=1}^n \hat{r}_i^2} \sum_{j=1}^n \hat{r}_j^2 = 1.$$

\underline{Truncation condition}. We seek to show that
\begin{multline*}
	\lim_{d\to\infty} \limsup_{n\to\infty} \sum_{i=1}^n c_i^2 \indicator(|\sqrt{n} c_i| \geq d|) =  \\
	\lim_{d\rightarrow\infty}\limsup_{n\rightarrow\infty}\frac{1}{(1/n) \sum_{i=1}^n \hat r_{i}^2}\frac{1}{n}\sum_{i=1}^n \hat r_{i}^2\indicator\left(\left|\frac{\hat r_{i}}{\sqrt{\sum_{i=1}^n \hat r_{i}^2/n}}\right|\geq d\right) = 0,\ \text{almost surely}.
\end{multline*}
The Cauchy-Schwarz inequality implies
\begin{multline}\label{eqn:truncation_decomp}
	\frac{1}{(1/n) \sum_{i=1}^n \hat{r}_i^2} \frac{1}{n}\sum_{i=1}^n \hat r_{i}^2\indicator\left(\left|\frac{\hat r_{i}}{\sqrt{\sum_{i=1}^n \hat r_{i}^2/n}}\right|\geq d\right) \\ \leq \frac{1}{(1/n) \sum_{i=1}^n \hat{r}_i^2} \sqrt{\frac{1}{n}\sum_{i=1}^n \hat r_{i}^4}\sqrt{\frac{1}{n}\sum_{i=1}^n \indicator\left(\left|\frac{\hat r_{i}}{\sqrt{\sum_{i=1}^n \hat r_{i}^2/n}}\right|\geq d\right)}.
\end{multline}
By Lemma \ref{lem:second_moment_r_lower_bound}, 
$$ \liminf_{n\rightarrow\infty}\frac{1}{n}\sum_{i=1}^n \hat r_i^2>0, \text{almost surely}; \quad  \limsup_{n\rightarrow\infty}\frac{1}{n} \sum_{i=1}^n \hat{r}_i^4<\infty, \text{almost surely}.$$
Additionally, by Lemma \ref{lem:truncation_condition_Hajek_CLT},
$$ \limsup_{d\rightarrow\infty}\limsup_{n\rightarrow\infty}\frac{1}{n}\sum_{i=1}^n \indicator\left(\left|\frac{\hat r_{i}}{\sqrt{\sum_{i=1}^n \hat r_{i}^2/n}}\right|\geq d\right)=0, \text{almost surely}.
$$
Hence, the result holds.

\underline{Tail assumption}. Let $\ep > 0$ be given. We seek to show that
$$\lim_{n\to\infty} \P\left(\max_{i \in [n]}  \frac{|X_i|}{\sqrt{n}} > \ep \right) = 0.$$ Applying the union bound, Markov's inequality, and Assumption \ref{assu:moment_treatment} (i.e., the boundedness of $\E[X_i^4]$), we see that
\begin{equation}\label{eqn:tail_assumption}
\P\left( \max_{i \in [n]}  \frac{|X_i|}{\sqrt{n}} > \ep \right) \leq \sum_{i=1}^n \P\left( \frac{|X_i|}{\sqrt{n}} > \ep \right) = \sum_{i=1}^n \P\left( \frac{X_i^4}{n^2} > \ep^4 \right)  \leq \frac{\E[X_i^4] n}{ \ep^4n^2} = \frac{\E[X_i^4]}{\ep^4n}.
\end{equation}
Sending $n \to \infty$, we obtain the result.
\\ \\ \noindent
\textbf{Proof of the convergence (\ref{eq:second_moment_r_empirical_convergence}).} Lemma \ref{lem:second_moment_r_lower_bound} implies that
\begin{align*}
	\frac{1}{n}\sum_{i=1}^n \hat r_{i}^2 \convp \E[\xi_{i}^2].
\end{align*}
Also, observe that
\begin{multline}\label{eqn:exp_xi_i_squared}
	\E[ \xi_i^2] =	\E\left[\frac{(Y_{i}-\mu_{i})^2}{(1+\mu_{i}/\bar\theta)^2}\right] =\E\left[\frac{1}{(1+\mu_{i}/\bar\theta)^2}\E[(Y_{i}-\mu_{i})^2|Z_i]\right] \\ =
	\E\left[ \frac{1}{(1 + \mu_i/\bar \theta)^2} \textrm{Var}(Y_i | Z_i) \right] = \E\left[\frac{\mu_i(1+\mu_i/\theta)}{(1+\mu_{i}/\bar\theta)^2}\right] = \E\left[ S_i \right],
\end{multline}
completing the proof.
\\ \\
\textbf{Proof of the convergence (\ref{eq:NB_denominator_permuting_1}).} 
We seek to show
\begin{align}
	&\label{eq:quadratic_mean_convergence}
	\E\left[\frac{1}{n}\sum_{i=1}^n X_{\pi(i)}^2\hat W_{i}|\mathcal{F}_n\right]\convp \E[W_i],\\
	&\label{eq:quadratic_variance_convergence}
	\textrm{Var} \left[\frac{1}{n}\sum_{i=1}^n X_{\pi(i)}^2\hat W_{i}|\mathcal{F}_n\right]\convp 0.
\end{align}
First, we establish (\ref{eq:quadratic_mean_convergence}). By Lemma \ref{lem:hoeffding_identity} (i.e., Hoeffding's identity for randomly permuted inner products), we have
\begin{align*}
	\E\left[\frac{1}{n}\sum_{i=1}^n X_{\pi(i)}^2\hat W_{i}|\mathcal{F}_n\right]=\frac{1}{n}\sum_{i=1}^n \hat W_{i}\frac{1}{n}\sum_{i=1}^n X_{i}^2.
\end{align*}
Next, by Lemma \ref{lem:consistency_W_i}, $(1/n) \sum_{i=1}^n \hat{W}_i  \convp \E[W_i].$ Additionally, by Assumption \ref{assu:moment_treatment} (i.e., $\E[X_i^2] = 1$) and Lemma \ref{lem:SLLN} (i.e., LLN), $(1/n)\sum_{i=1}^n X_i^2 \convp 1$. Thus,
$$ \frac{1}{n}\sum_{i=1}^n \hat W_{i}\frac{1}{n}\sum_{i=1}^n X_{i}^2 \convp \E[W_i],$$ proving claim (\ref{eq:quadratic_mean_convergence}). Next, we establish (\ref{eq:quadratic_variance_convergence}). By Lemma \ref{lem:hoeffding_identity} ,
\begin{align*}
	\textrm{Var}\left[\frac{1}{n}\sum_{i=1}^n X_{\pi(i)}^2\hat W_{i}|\mathcal{F}_n\right] 
	\leq \frac{1}{n-1}\left(\frac{1}{n}\sum_{i=1}^n \hat W_{i}^2\right)\left(\frac{1}{n}\sum_{i=1}^n X_{i}^4\right).
\end{align*}
By Lemma \ref{lem:consistency_W_i}, $(1/n) \sum_{i=1}^n \hat{W}_i^2 \convp \E[W_i^2].$ Additionally, by Assumption 1 (i.e., $E[X_i^4] < \infty$) and Lemma \ref{lem:SLLN} (i.e., LLN), $(1/n) \sum_{i=1}^n X_i^4 < \infty$. Hence,
$$ \frac{1}{n-1}\left(\frac{1}{n}\sum_{i=1}^n \hat W_{i}^2\right)\left(\frac{1}{n}\sum_{i=1}^n X_{i}^4\right) = \frac{1}{n-1}O_p(1) O_p(1) \convp 0,$$ proving claim (\ref{eq:quadratic_variance_convergence}). Finally, combining (\ref{eq:quadratic_mean_convergence}) and (\ref{eq:quadratic_variance_convergence}) via Lemma \ref{lem:conditional_convergence} yields the convergence (\ref{eq:NB_denominator_permuting_1}).
\\ \\ \noindent
\textbf{Proof of the convergence (\ref{eq:NB_denominator_permuting_2}).} To prove the convergence (\ref{eq:NB_denominator_permuting_2}), it is sufficient to prove 

$$\left\|\frac{1}{n} X_\pi^\top  \hat{W} Z\right\| \cdot \left\| \left(\frac{1}{n} Z^\top  \hat{W}Z \right)^{-1} \right\| \cdot \left\|\frac{1}{n} Z^\top  \hat{W} X_\pi \right\| \convp 0.$$ We will demonstrate
\begin{equation}\label{eq:sandwitch_permute_term1}
\left\| \frac{1}{n} X^\top _\pi \hat{W} Z \right\|  = o_p(1)
\end{equation}
and
\begin{equation}\label{eq:sandwitch_permute_term2}
\left\| \left(\frac{1}{n} Z^\top  \hat{W} Z\right)^{-1} \right\| = O_p(1),
\end{equation}
from which the result will follow.

\underline{Proving claim (\ref{eq:sandwitch_permute_term1})}. Let $j \in [p]$ be given. We seek to show 
\begin{equation}\label{eq:sandwitch_permute_term1_j}
\frac{1}{n} \sum_{i=1}^n X_{\pi(i)}\hat{W}_iZ_{ij} \convp 0. 
\end{equation}
To establish this result, we will demonstrate
\begin{align}
	\label{eq:sandwitch_mean_convergence}
	\E\left[\frac{1}{n}\sum_{i=1}^n X_{\pi(i)}\hat W_iZ_{ij}|\mathcal{F}_n\right]&\convp 0,\\
	\label{eq:sandwitch_variance_convergence}
	\textrm{Var}\left[\frac{1}{n}\sum_{i=1}^n X_{\pi(i)}\hat W_iZ_{ij}|\mathcal{F}_n\right]&\convp 0.
\end{align}
First, by Lemma \ref{lem:hoeffding_identity} (i.e., Hoeffding's identity), Lemma \ref{lem:consistency_W_i} (i.e., boundedness in probability of $\hat{W}_i Z_{ij}$), Assumption \ref{assu:moment_treatment} (i.e., $\E[X_i] = 0$), and LLN (i.e., Lemma \ref{lem:SLLN}), 
$$
\E\left[\frac{1}{n}\sum_{i=1}^n X_{\pi(i)}\hat W_iZ_{ij}|\mathcal{F}_n\right]=\left( \frac{1}{n}\sum_{i=1}^n \hat W_iZ_{ij}\right) \left(\frac{1}{n}\sum_{i=1}^n X_{i} \right) = O_p(1) o_p(1) \convp 0,
$$
proving (\ref{eq:sandwitch_mean_convergence}). Next, by Lemma \ref{lem:hoeffding_identity} (i.e., Hoeffding's identity), Lemma \ref{lem:consistency_W_i} (i.e., boundedness in probability of $\hat{W}_i^2 Z_{ij}$), Assumption \ref{assu:moment_treatment} (i.e., $\E[X_i^2] = 1$), and LLN (Lemma \ref{lem:SLLN}),
\begin{align*}
	&
	\textrm{Var}\left[\frac{1}{n}\sum_{i=1}^n X_{\pi(i)}\hat W_iZ_{ij}|\mathcal{F}_n\right]\leq \frac{1}{n-1}\left(\frac{1}{n}\sum_{i=1}^n \hat W_i^2Z_{ij}^2\right)\left(\frac{1}{n}\sum_{i=1}^n X_{i}^2\right) \\ &= \frac{1}{n-1} O_p(1) O_p(1) \convp 0,
\end{align*}
proving (\ref{eq:sandwitch_variance_convergence}). Combining (\ref{eq:sandwitch_mean_convergence}) and (\ref{eq:sandwitch_variance_convergence}) via Lemma  \ref{lem:conditional_convergence} yields claim (\ref{eq:sandwitch_permute_term1_j}). Finally, because $j \in [p]$ was chosen arbitrarily, claim (\ref{eq:sandwitch_permute_term1}) is proven.

\underline{Proving claim (\ref{eq:sandwitch_permute_term2})}. By Lemma \ref{lem:inverse_matrix_convergence}, the continuity of the operator norm, and the continuous mapping theorem,
$$ \Bigg\| \left( \frac{1}{n} Z^\top  \hat{W} Z \right)^{-1} \Bigg\| \convp \left\| \left( \E[W_i Z_i Z_i^\top ] \right)^{-1} \right\|,$$ implying claim (\ref{eq:sandwitch_permute_term2}). Finally, combining claims (\ref{eq:sandwitch_permute_term1}) and (\ref{eq:sandwitch_permute_term2}) yields the convergence result (\ref{eq:NB_denominator_permuting_2}). $\square$

\section{Proof of Theorem \ref{thm:camp}}\label{sec:camp_proof}

Define the test functions $\phi_n^p$, $\phi_n^{p'}$, and $\phi_n^{s'}$ by
\begin{align*}
	\phi_n^p(X,Y,Z) &\equiv \indicator\left(T_n(X,Y,Z) \geq \mathbb{Q}_{1-\alpha} \left[ T_n(X_\pi, Y, Z) | \mathcal{F}_n \right] \right), \\
	\phi_n^{p'}(X,Y,Z) &\equiv \indicator\left(T_n(X,Y,Z)/\sigma_p \geq \mathbb{Q}_{1-\alpha} \left[ T_n(X_\pi, Y, Z)/\sigma_p | \mathcal{F}_n \right] \right). \\
	\phi^{s'}_n &\equiv \indicator\left( T_n(X,Y,Z) / \sigma_p \geq z_{1-\alpha} \right).
\end{align*}
Suppose that the data are generated by the NB GLM (\ref{eqn:nb_glm}) and that the distribution $\mathcal{L}$ lies within $\mathcal{R} \cap \mathcal{N}$. By Proposition \ref{thm:asy_permutation_distribution} and Lemma \ref{lem:conditional-convergence-to-quantile}, the $1-\alpha$ quantile of the scaled permutation distribution converges in probability to the $1-\alpha$ quantile of the standard Gaussian i.e., 
$$\mathbb{Q}_{1-\alpha}\left[ T_n(X_\pi, Y, Z) / \sigma_p | \mathcal{F}_n \right] \convp z_{1-\alpha}.$$ Additionally, by Proposition $\ref{thm:asy_distribution_misspecification}$ and the continuous mapping theorem, $T_n(X,Y,Z)/\sigma_p$ converges weakly to $N(0, \sigma_s^2/\sigma_p^2)$, which does not accumulate near the critical value $z_{1-\alpha}$. Thus, we can apply Lemma \ref{lem:equivalence-lemma}, setting $S_n(X,Y,Z)$ to $T_n(X,Y,Z)/\sigma_p$, $C_n(X,Y,Z)$ to $\mathbb{Q}_{1-\alpha}[T_n(X_\pi, Y, Z)/\sigma_p | \mathcal{F}_n]$, $\phi_n^1$ to $\phi_n^{p'}$, and $\phi_n^2$ to $\phi_n^{s'}$. We obtain
\begin{equation}\label{eqn:test_equiv_1}
\lim_{n\to\infty}\P \left[\phi_n^{p'}(X,Y,Z) = \phi_n^{s'}(X,Y,Z) \right] = 1.
\end{equation}
Because $T_n(X,Y,Z)/\sigma_s$ converges weakly to the standard Gaussian (Theorem \ref{thm:asy_distribution_misspecification}), we can express the asymptotic type-I error of $\phi^{s'}_n$ as
\begin{equation}\label{eqn:test_equiv_2}
\lim_{n\to\infty} \E\left[ \phi^{s'}_n(X,Y,Z) \right] = \lim_{n\to\infty}\P\left[ \frac{T_n(X,Y,Z)}{\sigma_p} \cdot \frac{\sigma_p}{\sigma_s} \geq \frac{z_{1-\alpha} \sigma_p}{\sigma_s} \right] = 1 - \Phi\left(\frac{z_{1-\alpha} \sigma_p}{\sigma_s}\right).
\end{equation}
On the other hand,
\begin{equation}\label{eqn:test_equiv_3}
	\phi_n^{p'}(X,Y,Z) = \phi_n^p(X,Y,Z), \text{ almost surely}.
\end{equation}
Combining (\ref{eqn:test_equiv_1}), (\ref{eqn:test_equiv_2}), and (\ref{eqn:test_equiv_3}) yields the result:
$$ \lim_{n\to\infty}\E\left[\phi_n^p(X,Y,Z)\right] =  1 - \Phi\left(\frac{z_{1-\alpha} \sigma_p}{\sigma_s}\right) \equiv \textnormal{Err}_\mathcal{R}.$$ 
We can bound the excess asymptotic type-I error  $\textnormal{Err}_\mathcal{R} - \alpha$ of $\phi_n^p$ via the mean value theorem as follows:
\begin{align*}
\textnormal{Err}_{\mathcal{R}} - \alpha &= 1 - \alpha - \Phi(z_{1-\alpha}\sigma_p/\sigma_s) \\ 
&= \Phi(\Phi^{-1}(1 - \alpha)) - \Phi(\Phi^{-1}(1-\alpha) \sigma_p/\sigma_s) \\
&= f(c)\left(\Phi^{-1}(1-\alpha) - \Phi^{-1}(1-\alpha)\sigma_p/\sigma_s \right) \\ 
&\leq \frac{1}{\sqrt{2\pi}}z_{1-\alpha}\left( 1 - \sigma_p/\sigma_s \right).  
\end{align*}
Above, $f$ is the density of the standard Gaussian (i.e., the derivative of the CDF $\Phi$); $c$ is a number between $\Phi^{-1}(1-\alpha)$ and $\Phi^{-1}(1 - \alpha)\sigma_p/\sigma_s$; and $\Phi^{-1}(1-\alpha) = z_{1-\alpha}$ is the $1-\alpha$ quantile of the standard Gaussian. We also have used the fact that $\sup_{x \in \R} f(x) \leq 1/\sqrt{2\pi}.$

Next, suppose that the distribution $\mathcal{L}$ lies within $\mathcal{K} \cap \mathcal{N}$. Then $$X_i \indep Z_i; \quad X_i \indep Y_i | Z_i.$$ By the contraction property of conditional independence, $X_i \indep (Y_i, Z_i)$. The permutation test $\phi_n^p(X,Y,Z)$ is a finite-sample valid test of independence between $X_i$ and $(Y_i, Z_i)$. Thus, 
$$ \E\left[ \phi^p_n(X,Y,Z) \right] \leq \alpha.$$ 

\section{Results for the linear model}\label{sec:lin_model}
Suppose we observe i.i.d.\ data $(X_1, Y_1, Z_1), \dots, (X_n, Y_n, Z_n)$  from the linear model
	\begin{equation}\label{eqn:random_design_linear_model}
		\begin{cases}
		(X_i, Z_i) \sim \mathcal{P} \\
		Y_i = \gamma X_i + \beta^T Z_i + \ep_i, \\
	\end{cases}
	\end{equation}
where $\ep_i$ is a mean-zero error term and $\gamma\in\R$ and $\beta \in\R^p$ are unknown regression coefficients. Let $\hat{\beta}_n = (Z^TZ)^{-1}Z^TY$ be the OLS estimator for $\beta$ under the null hypothesis of $\gamma = 0$, and let $\hat{r} = Y - Z\hat{\beta}_n$ denote the vector of residuals from this regression model. The score test statistic $S_n(X,Y,Z)$ for testing $\gamma = 0$ is given by 
\begin{equation}\label{eq_lm_score_test_statistic}
	S_n(X, Y, Z) = \frac{X^T \hat{r}}{\sqrt{X^TX - X^TZ(Z^TZ)^{-1}Z^T X}}.    
\end{equation}
We derive the asymptotic permutation distribution of the score test statistic in the linear model. We make the following assumption about the error term $\ep_i$.

\begin{assumption}[Bounded moments of error term]\label{assu:lm_error}
	We assume $\E[\ep_i] = 0$ and $\E[\ep_i^2] = 1$, without loss of generality. Additionally, we assume $\E[\ep_i^4] < \infty$.
\end{assumption}

Note that we do not necessarily assume that $\ep_i$ is Gaussian. Our main theorem for the linear model is as follows.

\begin{proposition}[Asymptotic permutation distribution of score test statistic in linear model]\label{thm:lm_asy_perm_distribution}
Suppose that Assumptions \ref{assu:moment_treatment} -- \ref{assu:lm_error} hold. Let $\mathcal{F}_n$ be the sequence of $\sigma$-algebras generated by the data. Then
$$S_n(X, Y, Z) | \mathcal{F}_n \convdp N(0,1).$$
\end{proposition}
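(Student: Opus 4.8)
The plan is to mirror the proof of Proposition~\ref{thm:asy_permutation_distribution}, taking advantage of the fact that the linear model is a weight-free simplification of the NB GLM setting: here $\hat W = I$ and the residual vector has the explicit form $\hat r = (I-P_Z)\ep$ under the null, where $P_Z \equiv Z(Z^\top Z)^{-1}Z^\top$ and we use that $\E[\ep_i Z_i]=0$ under the model~(\ref{eqn:random_design_linear_model}). First I would rewrite the permuted statistic in product form,
\begin{equation*}
S_n(X_\pi,Y,Z)=\left(\sum_{i=1}^n X_{\pi(i)} c_i\right)\sqrt{\frac{\frac{1}{n}\sum_{i=1}^n \hat r_i^2}{\frac{1}{n}X_\pi^\top X_\pi-\left(\frac{1}{n}X_\pi^\top Z\right)\left(\frac{1}{n}Z^\top Z\right)^{-1}\left(\frac{1}{n}Z^\top X_\pi\right)}},\qquad c_i\equiv\frac{\hat r_i}{\sqrt{\sum_{j=1}^n \hat r_j^2}},
\end{equation*}
and show that the first factor converges to $N(0,1)$ conditionally on $\mathcal F_n$ while the square-root factor converges in probability to $1$; the conclusion then follows from the continuous mapping theorem and conditional Slutsky (Lemma~\ref{lem:cond_slutsky}).

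For the first factor I would apply the H\'ajek CLT (Lemma~\ref{lem:hajek_clt}) with the permuted treatments in place of the i.i.d.\ sequence and the $c_i$ as weights. ``Sum zero'' holds because the OLS normal equations give $Z^\top\hat r=0$ and $Z$ has an intercept column (Assumption~\ref{assu:intercept}), so $\sum_i \hat r_i=0$; ``sum one'' is immediate from the definition of $c_i$; and the ``tail assumption'' $\max_{i\le n}|X_i|/\sqrt n\convp 0$ follows from $\E[X_i^4]<\infty$ (Assumption~\ref{assu:moment_treatment}) by a union bound and Markov's inequality, exactly as in~(\ref{eqn:tail_assumption}). The ``truncation condition'' is the only substantial step. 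Here I would use $\hat r_i-\ep_i=-Z_i^\top(\hat\beta_n-\beta)$ together with $\|Z_i\|\le C$ (Assumption~\ref{assu:boundedness_covariate}) and the almost-sure OLS consistency $\hat\beta_n\to\beta$ (which follows from Lemma~\ref{lem:SLLN} and Assumption~\ref{assu:lower_bound_eigenvalue}, and replaces Assumption~\ref{assu:consistency_beta}) to conclude that the $\hat r_i$ are uniformly close to the $\ep_i$. This reduces the truncation condition, via Cauchy--Schwarz, to a lower bound on $\frac{1}{n}\sum_i\hat r_i^2$, boundedness of $\frac{1}{n}\sum_i\hat r_i^4$, and $\frac{1}{n}\sum_i\indicator(|\ep_i|\ge cd)\to\P(|\ep_i|\ge cd)\to 0$ as $d\to\infty$ for a constant $c>0$ --- all handled by SLLN, Markov's inequality, and Assumption~\ref{assu:lm_error}, in close parallel to Lemma~\ref{lem:truncation_condition_Hajek_CLT}.

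For the square-root factor I would show that numerator and denominator each converge in probability to $1$. In the denominator, $\frac{1}{n}X_\pi^\top X_\pi=\frac{1}{n}\sum_i X_i^2\convp\E[X_i^2]=1$ (Lemma~\ref{lem:SLLN}, Assumption~\ref{assu:moment_treatment}); each coordinate of $\frac{1}{n}X_\pi^\top Z$ has, by Hoeffding's identity (Lemma~\ref{lem:hoeffding_identity}), conditional mean $\big(\frac{1}{n}\sum_i X_i\big)\big(\frac{1}{n}\sum_i Z_{ij}\big)\convp 0$ (using $\E[X_i]=0$) and conditional variance $O_p(1/n)$, so $\frac{1}{n}X_\pi^\top Z\convp 0$ by Lemma~\ref{lem:conditional_convergence}, while $\big(\frac{1}{n}Z^\top Z\big)^{-1}=O_p(1)$; hence the subtracted quadratic form is $o_p(1)$ and the denominator $\convp 1$. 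For the numerator, $\frac{1}{n}\sum_i\hat r_i^2=\frac{1}{n}\|\ep\|^2-\frac{1}{n}\ep^\top P_Z\ep$, the first term $\convp\E[\ep_i^2]=1$ and the second equals $\frac{1}{n}\big(\frac{1}{\sqrt n}Z^\top\ep\big)^\top\big(\frac{1}{n}Z^\top Z\big)^{-1}\big(\frac{1}{\sqrt n}Z^\top\ep\big)=O_p(1/n)=o_p(1)$ by the CLT, so the numerator $\convp 1$.

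The main obstacle is the truncation condition of the H\'ajek CLT, as in all such arguments, but it is genuinely easier here than in Lemma~\ref{lem:truncation_condition_Hajek_CLT} thanks to the explicit form $\hat r=(I-P_Z)\ep$. A subtle point worth flagging is that the limiting variance comes out to exactly $1$ --- rather than $\E[\ep_i^2]/\E[X_i^2]$ --- only because of the without-loss-of-generality normalizations $\E[X_i^2]=\E[\ep_i^2]=1$ in Assumptions~\ref{assu:moment_treatment} and~\ref{assu:lm_error}, combined with $\frac{1}{n}\ep^\top P_Z\ep\convp 0$; this is the linear-model analogue of $\sigma_p^2=\E[S_i]/\E[W_i]=1$ holding under correct specification in Proposition~\ref{thm:asy_permutation_distribution}.
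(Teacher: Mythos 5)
Your proposal is correct and follows essentially the same route as the paper's proof: the same product decomposition, the H\'ajek CLT (Lemma~\ref{lem:hajek_clt}) for the studentized numerator with the same verification of the sum-zero, sum-one, truncation, and tail conditions, and Hoeffding's identity plus Lemma~\ref{lem:conditional_convergence} for the cross term, concluded via conditional Slutsky. The only (harmless) deviations are cosmetic: you exploit the exact identity $X_\pi^\top X_\pi = X^\top X$ rather than Hoeffding's identity for that term, and you prove $\frac{1}{n}\sum_i \hat r_i^2 \convp 1$ directly from $\hat r = (I-P_Z)\ep$ instead of via Lemma~\ref{lem:lm_resid_consistency}.
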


\subsection{Lemmas for linear model}

Both Lemmas assume that Assumptions \ref{assu:moment_treatment} -- \ref{assu:lm_error} hold. We also leverage the fact that $\hat{\beta}_n$ is a strongly consistent estimator for $\beta$ in the random-design linear model (\ref{eqn:random_design_linear_model}).

\begin{lemma}[Consistency of residuals in the linear model]\label{lem:lm_resid_consistency}
For $m \in \{1, 2, 4\}$,
$$\lim_{n\to\infty} \frac{1}{n} \sum_{i=1}^n \hat{r}_i^m = \E[\ep_i^m] , \text{ almost surely.}$$
\end{lemma}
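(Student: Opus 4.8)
The plan is to follow the structure of the proof of Lemma \ref{lem:second_moment_r_lower_bound} (the NB analogue), but the linear setting is considerably simpler because the error $\ep_i$ enters additively. Working under the null hypothesis $\gamma = 0$, so that $Y_i = \beta^\top Z_i + \ep_i$, the residual decomposes cleanly:
\begin{equation*}
\hat r_i = Y_i - \hat\beta_n^\top Z_i = \ep_i - \Delta_n^\top Z_i, \qquad \Delta_n \equiv \hat\beta_n - \beta,
\end{equation*}
where $\|\Delta_n\| \to 0$ almost surely by the strong consistency of the OLS estimator in the random-design model (\ref{eqn:random_design_linear_model}) noted above.

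First I would expand $\hat r_i^m$ by the binomial theorem and average:
\begin{equation*}
\frac{1}{n}\sum_{i=1}^n \hat r_i^m = \frac{1}{n}\sum_{i=1}^n \ep_i^m + \sum_{k=1}^m \binom{m}{k}(-1)^k \frac{1}{n}\sum_{i=1}^n \ep_i^{m-k}(\Delta_n^\top Z_i)^k.
\end{equation*}
The leading term tends to $\E[\ep_i^m]$ almost surely by the strong law of large numbers (Lemma \ref{lem:SLLN}), which applies because $\E[\ep_i^4] < \infty$ (Assumption \ref{assu:lm_error}) forces $\E[|\ep_i|^m] < \infty$ for every $m \le 4$ via Lyapunov's inequality. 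For each $k \ge 1$, Cauchy--Schwarz and the bound $\|Z_i\| \le C$ (Assumption \ref{assu:boundedness_covariate}) give $|\frac{1}{n}\sum_i \ep_i^{m-k}(\Delta_n^\top Z_i)^k| \le C^k \|\Delta_n\|^k \cdot \frac{1}{n}\sum_i |\ep_i|^{m-k}$; here $\frac{1}{n}\sum_i |\ep_i|^{m-k}$ converges almost surely to the finite constant $\E[|\ep_i|^{m-k}]$ (again Lemma \ref{lem:SLLN}, using $m-k \le 4$), while $\|\Delta_n\|^k \to 0$ almost surely, so each remainder term vanishes almost surely. Combining the two displays yields $\frac{1}{n}\sum_i \hat r_i^m \to \E[\ep_i^m]$ almost surely.

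I do not expect a genuine obstacle here; the argument is essentially bookkeeping. The only point needing slight care --- and the one place the linear argument differs from its NB counterpart, where boundedness of $f_{y,\bar\theta}$ in Lemma \ref{lem:glm_f_bound} absorbed this issue --- is the unboundedness of $\ep_i$, handled by invoking Lyapunov's inequality to control the lower-order absolute moments $\E[|\ep_i|^{m-k}]$. One should also note that the relevant almost-sure events (strong consistency of $\hat\beta_n$ together with the finitely many SLLN limits) intersect in a probability-one event, which is immediate.
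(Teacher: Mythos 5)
Your proposal is correct and follows essentially the same route as the paper's proof: the same decomposition $\hat r_i = \ep_i - (\hat\beta_n-\beta)^\top Z_i$, binomial expansion, the bound $|(\hat\beta_n-\beta)^\top Z_i| \leq C\|\hat\beta_n-\beta\|$, strong consistency of $\hat\beta_n$, and the SLLN applied to the averages of $|\ep_i|^{m-k}$ (the paper phrases this as a bound on $|\hat r_i^m - \ep_i^m|$ rather than expanding the averaged sum directly, which is an immaterial difference). Your explicit mention of Lyapunov's inequality for the lower-order moments is a slightly more careful statement of a step the paper takes for granted.
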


\begin{lemma}[Truncation condition in the linear model]\label{lem:lm_truncation_condition}
We have that
$$\lim_{d \to \infty} \limsup_{n \to \infty} \frac{1}{n} \sum_{i=1}^n \indicator\left( \left| \frac{\hat{r}_i}{ \sqrt{\sum_{i=1}^n \hat{r}^2_i/n}} \right| \geq d \right) = 0, \text{ almost surely}.$$
\end{lemma}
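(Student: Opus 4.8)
The plan is to follow the proof of Lemma~\ref{lem:truncation_condition_Hajek_CLT}, the analogous statement for the NB GLM, which specializes cleanly to the linear model: here the residual discrepancy is $\hat r_i - \ep_i = (\beta - \hat\beta_n)^\top Z_i$, so by Cauchy--Schwarz and Assumption~\ref{assu:boundedness_covariate} one has the uniform bound $|\hat r_i - \ep_i| \leq C\|\hat\beta_n - \beta\|$. The obstacle is that the indicator in the sum depends on $n$ both through $\hat r_i$ (via $\hat\beta_n$) and through the empirical normalization $\sqrt{(1/n)\sum_j \hat r_j^2}$, so the strong law of large numbers cannot be applied directly; the fix is to ``freeze'' both quantities on high-probability events and absorb the complement into a vanishing remainder.

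Concretely, I would introduce the events $\mathcal E_n \equiv \{(1/n)\sum_{i=1}^n \hat r_i^2 \geq 0.9\}$ and $\mathcal D_n \equiv \{\|\hat\beta_n - \beta\| \leq 1\}$. By Lemma~\ref{lem:lm_resid_consistency} with $m = 2$ we have $(1/n)\sum_i \hat r_i^2 \to \E[\ep_i^2] = 1$ almost surely, and $\hat\beta_n \to \beta$ almost surely; hence $\limsup_n \indicator(\mathcal E_n^c) = \limsup_n \indicator(\mathcal D_n^c) = 0$ almost surely. Applying the elementary inequality $\indicator(A) \leq \indicator(A\cap B) + \indicator(B^c)$ first with $B = \mathcal E_n$ and then with $B = \mathcal D_n$, and using the triangle inequality $|\hat r_i| \leq |\hat r_i - \ep_i| + |\ep_i|$ together with the uniform bound above, on $\mathcal E_n \cap \mathcal D_n$ one obtains
\[
\frac{1}{n}\sum_{i=1}^n \indicator\!\left(\left|\frac{\hat r_i}{\sqrt{\sum_j \hat r_j^2/n}}\right| \geq d\right) \;\leq\; \frac{1}{n}\sum_{i=1}^n \indicator\!\left(\frac{C + |\ep_i|}{\sqrt{0.9}} \geq d\right) + \indicator(\mathcal E_n^c) + \indicator(\mathcal D_n^c).
\]
The summand on the right is now a fixed function of the i.i.d.\ errors $\ep_i$, so by the strong law the $\limsup_n$ of the first term equals $\P\!\left(C + |\ep_i| \geq d\sqrt{0.9}\right)$ almost surely, which by Markov's inequality is at most $(2C^2 + 2\E[\ep_i^2])/(0.9\,d^2)$; this tends to $0$ as $d \to \infty$ since $\E[\ep_i^2] < \infty$. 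Taking $\limsup_n$ and then $\lim_{d\to\infty}$, and invoking the almost-sure vanishing of the two remainder indicators, yields the claim.

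I expect the only delicate point to be the double ``freezing'' step and the insistence on almost-sure (rather than merely in-probability) vanishing of $\indicator(\mathcal E_n^c)$ and $\indicator(\mathcal D_n^c)$, which is precisely where the strong consistency of $\hat\beta_n$ and the almost-sure convergence in Lemma~\ref{lem:lm_resid_consistency} enter; everything else is a routine application of SLLN and Markov's inequality, exactly mirroring the NB case. Note that, unlike the NB setting, no moment control beyond $\E[\ep_i^2] < \infty$ is needed for this particular lemma, since the residual discrepancy is bounded by a constant on $\mathcal D_n$.
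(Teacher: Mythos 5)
Your proposal is correct and follows essentially the same route as the paper's proof: freezing the empirical normalization on $\mathcal{E}_n$ and the estimation error on $\mathcal{D}_n$, bounding $|\hat r_i - \ep_i| \leq C\|\hat\beta_n - \beta\|$, and then applying the SLLN together with a Markov/Chebyshev bound before sending $d \to \infty$, with the almost-sure vanishing of $\indicator(\mathcal{E}_n^c)$ and $\indicator(\mathcal{D}_n^c)$ supplied by Lemma \ref{lem:lm_resid_consistency} and strong consistency of $\hat\beta_n$. The only differences are cosmetic (your $\mathcal{D}_n$ uses radius $1$ where the paper uses $1/C$, and you keep the additive constant inside the Markov bound, which is if anything slightly more careful).
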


\subsection{Proofs of lemmas for linear model}

\textbf{Proof of Lemma \ref{lem:lm_resid_consistency}}. We can express the $i$th residual $\hat{r}_i$ as $\hat{r}_i = \ep_i - Z_i^T(\hat{\beta}_n - \beta )$. Let $m \in \{1, 2, 4\}$ be given. By the binomial theorem,
$$ \hat{r}_i^m = \ep_i^m + \sum_{k=1}^m (-1)^k\binom{m}{k} \ep_i^{m-k}(Z_i^T(\hat{\beta}_n - \beta))^k.$$ Applying the triangle inequality,
$$
\left| \hat{r}_i^m - \ep_i^m \right|\leq \sum_{k=1}^m \binom{m}{k}|\ep_i|^{m-k} C^k ||\hat{\beta}_n - \beta||^k.
$$
By Assumption \ref{assu:lm_error}, $\E\left[\ep_i^4\right] < \infty$, implying that $\E[|\ep_i|^{m-k}] < \infty$ for all $m \in \{1, 2, 4\}$ and $k \in [m]$.  Additionally, $||\hat{\beta}_n - \beta||^k$ converges almost surely to zero. Thus, by SLLN (Lemma \ref{lem:SLLN}),
\begin{align*}
\lim_{n\to\infty} \left| \frac{1}{n} \sum_{i=1}^n \hat{r}_i^m - \frac{1}{n} \sum_{i=1}^n \ep_i^n \right| \\ &\leq \lim_{n\to\infty}  \frac{1}{n} \sum_{i=1}^n \left|\hat{r}_i^m - \ep_i^m  \right| \\ 
&\leq \lim_{n\to\infty}  \sum_{k=1}^m \binom{m}{k} C^k ||\hat{\beta}_n - \beta||^k  \frac{1}{n} \sum_{i=1}^n |\ep_i|^{m-k} = 0,
\end{align*}
almost surely. Moreover, $\lim_{n\to\infty} (1/n) \sum_{i=1}^n \ep_i^m = \E[\ep_i^m],$ almost surely. We conclude by the triangle inequality that
$\lim_{n\to\infty} (1/n) \sum_{i=1}^n \hat{r}_i^m = \E[\ep_i^m],$ almost surely.

\textbf{Proof of Lemma \ref{lem:lm_truncation_condition}}. We proceed in a manner similar to the proof of Lemma \ref{lem:truncation_condition_Hajek_CLT}. Let the sequence of events $\mathcal{E}_1, \mathcal{E}_2, \dots$ be defined by
$$\mathcal{E}_n \equiv \left\{ \frac{1}{n} \sum_{i=1}^n \hat{r}_i^2 \geq 0.9 \right\}.$$ We see that
\begin{align*}
\frac{1}{n} \sum_{i=1}^n \indicator\left( \left| \frac{\hat{r}_i}{ \sqrt{\sum_{i=1}^n \hat{r}^2_i/n}} \right| \geq d \right) \leq \frac{1}{n} \sum_{i=1}^n \indicator \left[ \frac{C||\hat{\beta}_n - \beta|| + |\ep_i|}{\sqrt{0.9}} \geq d \right] + \indicator[\mathcal{E}_n^c].
\end{align*}
Next, define the sequence of events $\mathcal{D}_1, \mathcal{D}_2, \dots$ by $\mathcal{D}_n \equiv \{||\hat{\beta}_n - \beta|| \leq 1/C \}$. We observe
$$\frac{1}{n} \sum_{i=1}^n \indicator \left[ \frac{C||\hat{\beta}_n - \beta|| + |\ep_i|}{\sqrt{0.9}} \geq d \right] \leq \frac{1}{n} \sum_{i=1}^n \indicator\left[ \frac{|\ep_i|}{\sqrt{0.9}} \geq d \right] + \indicator[\mathcal{D}^c_n].$$
By SLLN (Lemma \ref{lem:SLLN}) and Chebyshev's inequality,
$$\limsup_{n\to\infty} \frac{1}{n} \sum_{i=1}^n \indicator\left[\frac{|\ep_i|}{\sqrt{0.9}} \geq d \right] = \P\left( \frac{|\ep_i|}{\sqrt{0.9}} \geq d \right) \leq \frac{1}{0.9 \cdot d^2}, \textrm{ almost surely}.$$
Additionally, by Lemma \ref{lem:lm_resid_consistency} and the strong consistency of $\hat{\beta}_n$, $$\limsup_{n\to\infty} \indicator[\mathcal{E}_n^c] = \limsup_{n\to\infty} \indicator[\mathcal{D}^c_n] = 0, \text{ almost surely.}$$ Combining these results, we conclude that
$$\limsup_{n \to \infty} \frac{1}{n} \sum_{i=1}^n \indicator\left( \left| \frac{\hat{r}_i}{ \sqrt{\sum_{i=1}^n \hat{r}^2_i/n}} \right| \geq d \right) \leq \frac{1}{0.9 \cdot d^2}, \text { almost surely}.$$ Sending $d$ to $\infty$ completes the proof.

\subsection{Proof of Proposition \ref{thm:lm_asy_perm_distribution}}

We write the permuted test statistic $S_n(X_\pi, Y, Z)$ as
$$
S_n(X_\pi, Y, Z) = \frac{\sum_{i=1}^n X_{\pi(i)} \hat{r}_i}{ \sqrt{\sum_{i=1}^n \hat{r}_i^2 }} \sqrt{\frac{\frac{1}{n} \sum_{i=1}^n \hat{r}_i^2 }{\frac{1}{n} X_\pi^T X_\pi - \left(\frac{1}{n} X_\pi^TZ\right) \left(\frac{1}{n} Z^TZ\right)^{-1}\left(\frac{1}{n} Z^TX_\pi \right)}}. 
$$
The proof consists of the following three steps:
\begin{enumerate}
	\item Show that
	\begin{equation}\label{eqn:lm_num_conv}
 \frac{\sum_{i=1}^n X_{\pi(i)} \hat{r}_i}{ \sqrt{\sum_{i=1}^n \hat{r}_i^2 }} \Big| \mathcal{F}_n \convdp N(0,1).
	\end{equation}
	\item Show that
	\begin{equation}\label{lm:denom_1_conv}
		\frac{1}{n} X^\top_\pi X_\pi \convp 1.
	\end{equation}
	\item Show that
	\begin{equation}\label{lm:denom_2_conv}
		\left(\frac{1}{n} X_\pi^TZ\right) \left(\frac{1}{n} Z^TZ\right)^{-1}\left(\frac{1}{n} Z^TX_\pi \right) \convp 0.
	\end{equation}
\end{enumerate}
By Lemma \ref{lem:lm_resid_consistency},
\begin{equation}\label{lm:num_conv}
	\frac{1}{n} \sum_{i=1}^n \hat{r}_i^2 \convp 1.
\end{equation}
The result follows by combining (\ref{eqn:lm_num_conv}), (\ref{lm:denom_1_conv}), (\ref{lm:denom_2_conv}), and (\ref{lm:num_conv}) via the continuous mapping theorem and conditional Slutsky's theorem (Lemma \ref{lem:cond_slutsky}).
\\ \\ \noindent
\textbf{Proof of the conditional convergence (\ref{eqn:lm_num_conv}).} We apply the H\'ajek CLT (Lemma \ref{lem:hajek_clt}) to establish this result. Define $c_i \equiv \hat{r}_i/(\sum_{i=1}^n \hat{r}_i^2).$
We show that the $c_i$s satisfy the four conditions of the H\'ajek CLT.

\underline{Sum zero}. The linear model contains an intercept term (Assumption \ref{assu:intercept}), implying that the residuals and thus the $c_i$s sum to zero.

\underline{Sum one}. The squared $c_i$s sum to one by definition.

\underline{Truncation condition}. The Cauchy-Schwarz inequality implies the decomposition (\ref{eqn:truncation_decomp}). By Lemma \ref{lem:lm_resid_consistency} and Assumption \ref{assu:lm_error},
$$\liminf_{n\to\infty} \frac{1}{n} \sum_{i=1}^n \hat{r}_i^2 = 1 > 0 \text{ almost surely}; \quad \limsup_{n\to\infty} \frac{1}{n} \sum_{i=1}^n \hat{r}_i^4 < \infty, \text{ almost surely}.$$ Combining these results with Lemma \ref{lem:lm_truncation_condition} yields the truncation condition.

\underline{Tail assumption}. The proof is the same as in the NB regression case; see (\ref{eqn:tail_assumption}).
\\ \\ \noindent
\textbf{Proof of (\ref{lm:denom_1_conv}).} By Lemma \ref{lem:hoeffding_identity} (i.e., Hoeffding's identity),
$$\E \left[\frac{1}{n} \sum_{i=1}^n X_{\pi(i)}^2 \Big| \mathcal{F}_n \right] = \frac{1}{n}\sum_{i=1}^n X_i^2 \convp 1.$$
$$\textrm{Var}\left[\frac{1}{n} \sum_{i=1}^n X_{\pi(i)}^2 \Big| \mathcal{F}_n \right] \leq \frac{1}{n-1}\left(\frac{1}{n} \sum_{i=1}^n X_i^4 \right) = \frac{1}{n-1} O_p(1) \convp 0.$$
Applying Lemma \ref{lem:conditional_convergence} to the above two statements yields claim (\ref{lm:denom_1_conv}).
\\ \\ \noindent
\textbf{Proof of (\ref{lm:denom_2_conv}).} Let $j \in [p]$ be given. Again by Lemma \ref{lem:hoeffding_identity},
$$\left|\E\left[\frac{1}{n} \sum_{i=1}^n X_{\pi(i)} Z_{ij} \Big| \mathcal{F}_n \right] \right| = \left| \frac{1}{n} \sum_{i=1}^n Z_{ij} \right| \cdot \left|\frac{1}{n} \sum_{i=1}^n X_i \right| \leq Co_p(1) \convp 0.$$
$$\textrm{Var}\left[ \frac{1}{n} \sum_{i=1}^n X_{\pi(i)} Z_{ij} \Big| \mathcal{F}_n \right] \leq \frac{1}{n-1} \left(\frac{1}{n} \sum_{i=1}^n X^2_{i} \right) \left(\frac{1}{n} \sum_{i=1}^n Z^2_{ij} \right) \leq \frac{1}{n-1} C^2 O_p(1) \convp 0.$$
Combining the above two statements via Lemma \ref{lem:conditional_convergence} yields $(1/n) X_\pi^\top Z \convp 0.$  Additionally, by the invertibility of $\E[Z_iZ_i^T]$ (Assumption \ref{assu:lower_bound_eigenvalue}), the continuity of matrix inverse, and the continuous mapping theorem, $((1/n) Z^\top Z)^{-1} \convp (\E[Z_iZ_i^\top])^{-1}.$ Claim (\ref{lm:denom_2_conv}) follows by another application of the continuous mapping theorem.

\section{Additional simulation studies}\label{sec:additional_simulation_studies}

This section reports the results of supplementary simulation studies S1-S4.

\paragraph{Simulation S1 (Low signal-to-noise ratio; Figure \ref{fig:sim_studies_s1_s3}S1).} We sought to explore the impact of a low signal-to-noise ratio on method performance. We generated data from the NB GLM (\ref{eqn:nb_glm}) with an unconfounded treatment and a large sample size ($n = 1,000$). We varied the effect size of the nuisance covariate vector, ranging from a high SNR ($\left\|\beta\right\|_\infty = 0.1$) to a low SNR ($\left\|\beta\right\|_\infty = 1.2$). The methods performed similarly when the SNR was high; on the other hand, when the SNR was low, \texttt{MASS} and robust \texttt{MASS} exhibited substantially greater power than the MW test and the residual permutation test. We concluded on the basis of this simulation that there are at least two settings in which robust \texttt{MASS} is considerably more powerful than the residual permutation test: low SNR settings and confounded settings (Figure \ref{fig:sim_2}a-c).

\begin{figure}
	\centering
	\includegraphics[width=1.0\linewidth]{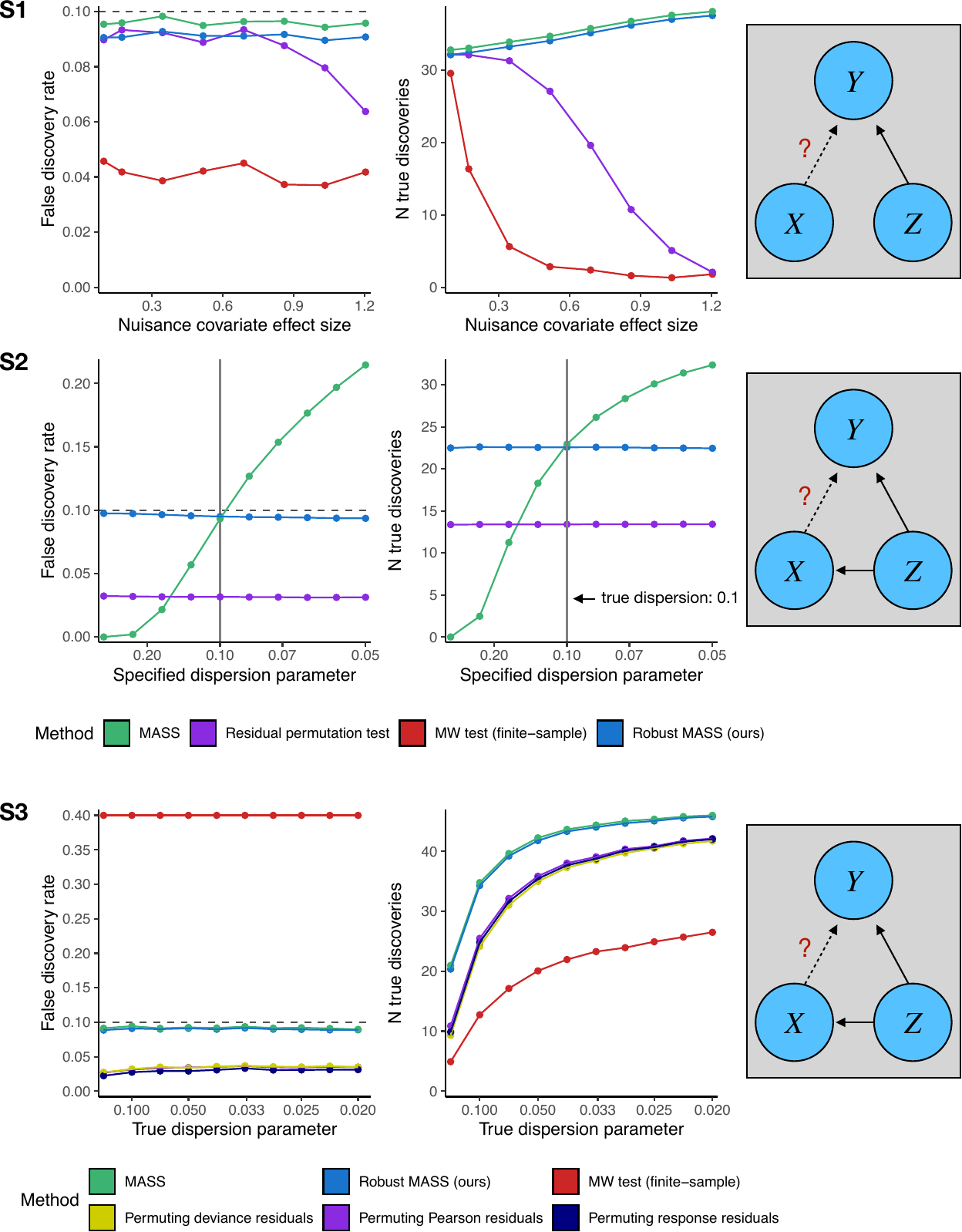}
	\caption{Additional simulation studies S1--S3.}\label{fig:sim_studies_s1_s3}
\end{figure}

\paragraph{Simulation S2 (Misspecified NB dispersion parameter; Figure \ref{fig:sim_studies_s1_s3}S2).} We sought to more thoroughly explore the impact of NB dispersion parameter misspecification on method performance. We generated data from the NB GLM (\ref{eqn:nb_glm}) with a confounded treatment and a large sample size ($n = 1,000$). We set the true dispersion $\phi$ to $0.1$. Instead of estimating $\phi$ (as we did in the main text simulation studies), we set $\phi$ to a specified value $\bar \phi$, which we varied over the grid $\{1/2, 1/2, \dots, 1/20\}.$ When the dispersion was correctly specified (i.e., $\bar \phi = \phi = 0.1$), \texttt{MASS} controlled type-I error and made a large number of true discoveries. However, when the specified dispersion was greater than (resp., less than) the true dispersion, \texttt{MASS} exhibited an inflation of false negative (resp., false positive) discoveries. Robust \texttt{MASS} and the residual permutation test controlled type-I error over all values of $\bar\phi$, and robust \texttt{MASS} achieved much higher power than the residual permutation test. These results highlight the utility of the permuted score test in protecting against dispersion parameter misspecification.

\paragraph{Simulation S3 (Pearson and deviance residual-based permutation test; Figure \ref{fig:sim_studies_s1_s3}S3).} The residual permutation test (Algorithm \ref{algo:permuting_residuals}) tests for association between the residuals $e$ from an NB regression of $Y$ onto $Z$ and the treatment vector $X$ via a permutation test. Our main text simulation studies set $e$ to the response residuals (i.e., $e_i = Y_i - \hat{\mu}_i$). However, we instead could set $e$ to the Pearson or deviance residuals. To explore whether this change might improve performance, we conducted a simulation study in which we evaluated six methods: (i) \texttt{MASS}, (ii) robust \texttt{MASS}, (iii) the MW test, and (iv)-(vi) the residual permutation test based on the response residuals, Pearson residuals, and deviance residuals. We generated the data from the NB model (\ref{eqn:nb_glm}) with a confounded treatment and a large sample size ($n = 1,000$). Reasoning that the dispersion parameter might impact relative method performance, we varied the true dispersion parameter $\phi$ over the grid $\{1/5, 1/10, \dots, 1/50\}.$ All methods controlled type-I error, save the MW test, which suffered confounding. \texttt{MASS} and robust \texttt{MASS} exhibited the highest power, the MW test exhibited the lowest power, and all three variants of the residual permutation test were intermediate. We concluded that swapping the response residuals for Pearson or deviance residuals did not improve performance of the residual permutation test.

\paragraph{Simulation S4 (DESeq2).} We evaluated \texttt{DESeq2}, robust \texttt{DESeq2}, \texttt{MASS}, and robust \texttt{MASS} on simulated data generated from the \texttt{DESeq2} model. We briefly describe the \texttt{DESeq2} model here. Let $Y \in \mathbb{N}^{n \times m}$ be the matrix of gene expression counts on $m$ genes across $n$ samples. Let $Z \in \mathbb{R}^{n \times p}$ be the matrix of nuisance covariates, and let $X \in \{0,1\}^n$ be the treatment vector. The expression level of gene $j \in \{1, \dots, p\}$ in sample $i \in \{1, \dots, n\}$ is modeled as
\begin{equation}\label{eqn_deseq_2_model}
Y_{ij} \sim \textrm{NB}_{\phi_j}(\mu_{ij}); \quad   \log(\mu_{ij}) = \beta^j_{0} + \gamma^j X_i + \beta^jZ_i + \log(s_i).
\end{equation}
Here, $\phi_j$ is the gene-specific NB size parameter; $\beta^j_0$, $\gamma^j$, and $\beta^j$ are gene-specific regression coefficients; and $s_i$ is the ``size factor'' of the $i$th sample. $s_i$ is an unobserved variable meant to capture the sequencing depth of sample $i$. \texttt{DESeq2} proceeds in four main steps. First, \texttt{DEseq2} estimates the sample-specific size factors via the ``median of ratios'' method. Let $$m_{j} = \left(\prod_{i=1}^n Y_{ij} \right)^{1/n}$$ be the geometric mean of the $j$th gene. The $i$th size factor $s_i$ is estimated as
$$\hat{s}_i = \textrm{median}_{j \in \{1 ,\dots, p\}} \left( \frac{Y_{ij}}{m_j} \right).$$
Second, \texttt{DESeq2} fits the NB GLM (\ref{eqn_deseq_2_model}) to each gene using an algorithm similar to that of \texttt{MASS}. Specifically, \texttt{DESeq2} plugs in the estimated size factors $\hat{s}_1, \dots, \hat{s}_n$ for the unobserved size factors $s_1, \dots, s_n$, treating these quantities as fixed and known offset terms. This step yields initial estimates $\hat{\phi}^\textrm{init}_1, \dots, \hat{\phi}^\textrm{init}_p$ of the gene-wise NB size parameters. Third, \texttt{DESeq2} regularizes the estimates $\hat{\phi}^\textrm{init}_1, \dots, \hat{\phi}^\textrm{init}_p$ via an empirical Bayes procedure that involves sharing information across genes, yielding improved estimates $\hat{\phi}^\textrm{final}_1, \dots, \hat{\phi}^\textrm{final}_p$. Finally, \texttt{DESeq2} refits the NB GLM (\ref{eqn_deseq_2_model}) for each gene, plugging in $\hat{\phi}_j^\textrm{final}$ for $\phi_j$ and $\hat{s}_1, \dots, \hat{s}_n$ for $s_1, \dots, s_n$, again treating these quantities as fixed and known. A $p$-value for the null hypothesis $\gamma^j = 0$ is computed via a Wald test.

\begin{figure}
	\centering
	\includegraphics[width=0.7\linewidth]{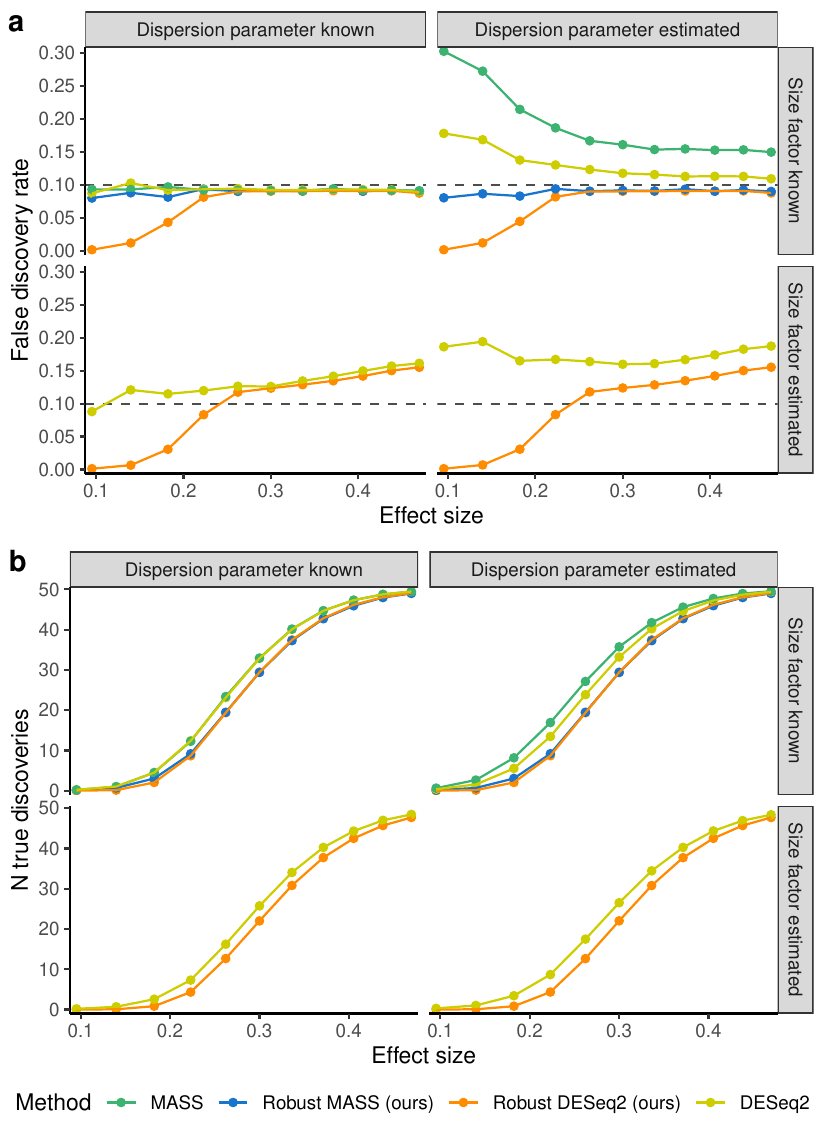}
	\caption{Additional simulation study S4.}\label{fig:sim_studies_s4}
\end{figure}

\begin{figure}
	\centering
	\includegraphics[width=0.2\linewidth]{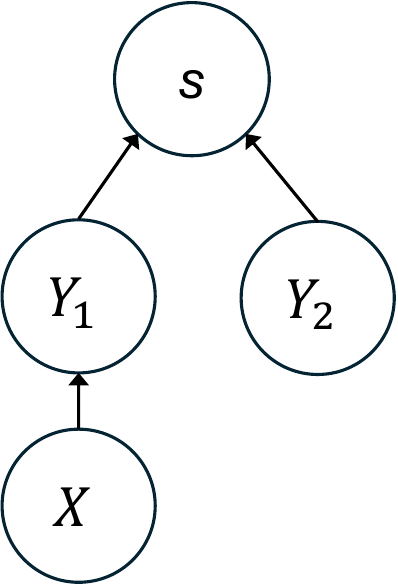}
	\caption{An example of collider bias in differential expression analysis. Gene $Y_1$ is marginally dependent with the treatment $X$, while gene $Y_2$ is marginally independent of $X$. Conditioning on the size factor $s$ --- a collider of $X$ and $Y_2$ --- induces dependence between $X$ and $Y_2$.}
	\label{fig:collider_bias}
\end{figure}

We simulated data from the \texttt{DESeq2} model. We set the sample size $n$ to a moderate value (i.e., $n = 50$) and the ambient gene expression level $\beta_0^j$ to a large number (i.e., $\beta_0^j = \log(80)$) to model a standard bulk RNA-seq experiment. We sampled the size factors $s_1 \dots, s_n$ from a $\textrm{Unif}(0.5, 1.5)$ distribution. We generated $X_i$ and $Z_i$ such that $X_i$ and $Z_i$ were independent. We varied the effect size under the alternative hypothesis $\gamma^j$ over the grid $\{0.10, 0.15, \dots, 0.55, 0.60\}.$ We applied \texttt{DESeq2}, robust \texttt{Deseq2}, \texttt{MASS}, and robust \texttt{MASS} to analyze the data. We either revealed the true size factors $s_1, \dots, s_n$ to the methods or required the methods to estimate these factors; similarly, we either revealed the true dispersion parameters $\phi_1, \dots, \phi_p$ the the methods or required the methods to estimate these parameters. Given that \texttt{MASS} does not contain a module for estimating size factors, we assessed \texttt{MASS} and robust \texttt{MASS} only in the settings where we supplied the size factors.

The results of this simulation study are displayed in Figure \ref{fig:sim_studies_s4}. Plot \textit{a} (resp., \textit{b}) shows the results for false discovery rate control (resp., power). Both plots contain four panels. The left (resp., right) column displays the results for known (resp., estimated) dispersion parameter, while the top (resp., bottom) row displays the results for known (resp., estimated) size factors. In the setting where the NB dispersion parameter and size factors were both known (i.e., top-left panel), all methods controlled the false discovery rate at the nominal level of 10\%. \texttt{MASS} and \texttt{DESeq2} exhibited slightly higher power than robust \texttt{MASS} and robust \texttt{DESeq2}. Next, in the setting where the size factors were known the NB size parameter was estimated (i.e., top-right panel), \texttt{MASS} and \texttt{DESeq2} failed to control type-I error, likely due to inaccurate estimation of the NB size parameter.

Next, in the setting where the dispersion was known and the size factors were estimated (lower-left panel), \texttt{DESeq2} and robust \texttt{DESeq2} exhibited mild type-I error inflation as the effect size under the alternative hypothesis increased. This result was surprising: we would expect robust \texttt{DESeq2} to control type-I error, as it is a permutation test, and the data were generated such that $X_i$ was unconfounded by $Z_i$. We reasoned that collider bias was responsible for the observed mild type-I error inflation. As a brief illustration, consider a toy RNA-seq dataset consisting of two genes, $Y_1$ and $Y_2$, and a treatment $X$ (Figure \ref{fig:collider_bias}). Suppose that $Y_1$ and $Y_2$ are independent. Additionally, suppose that $X$ impacts $Y_1$ but not $Y_2$. The size factor $s$ is a function of both $Y_1$ and $Y_2$. \texttt{DESeq2} and robust \texttt{DESeq2} condition on the size factor $s$ by including it as an offset in the model. This introduces a subtle problem: $s$ is a collider of $X$ and $Y_2$; thus, conditional on $s$, $X$ and $Y_2$ are dependent, resulting in a loss of type-I error control. To the best of our knowledge, collider bias is not widely recognized as a challenge in differential expression analysis. Moreover, collider bias is largely orthogonal to the issues we aim to tackle in this work (i.e., failure of asymptotic and parametric assumptions in the NB regression model). Thus, we delay a more careful study of collider bias to a followup work. Finally, in the setting where the dispersion parameter and size factors were both estimated (lower right panel), \texttt{DESeq2} exhibited more severe type-I error inflation than robust \texttt{DESeq2}, likely due to inaccurate dispersion parameter estimation.

\section{Efficiently computing GLM score tests}\label{sec:computing_glm_score_tests}

This section aims to compare Algorithm $R$ and Algorithm $Q$ for computing GLM score tests (Section \ref{sec:computational_accelerations}). We work within the context of an arbitrary (i.e., not necessarily NB) GLM. Suppose that we observe data $(X_1, Z_1, Y_1), \dots, (X_n, Z_n, Y_n) \in (\R, \R^p, \R)$ from a GLM
$$
\begin{cases}
Y_i | Z_i  \sim \textrm{expfam}(\mu_i) \\ 
g(\mu_i) = \gamma X_i + \beta^T Z_i,
\end{cases}
$$
where $\textrm{expfam}(\mu_i)$ is an exponential family-distributed random variable with mean $\mu_i$, $(\gamma, \beta) \in (\R, \R^p)$ are unknown regression coefficients, and $g : \R \to \R$ is a differentiable link function. Let $\eta_i = \beta^T Z_i$ denote the $i$th linear component of the model under the null hypothesis. Let $d \eta_i / d\mu_i$ denote the derivative of the linear component of the model with respect to the mean, and let $M = \textrm{diag}\{ d \eta_i / d \mu_i\}$ denote the diagonal matrix of these derivatives. Next, let $V(\mu_i)$ denote the variance of $Y_i$ given mean $\mu_i$. Define $$ W_i = \frac{1}{V(\mu_i) (d\eta_i/d\mu_i)^2},$$ and let $W = \textrm{diag}\{W_i\}_{i=1}^n$ denote the $W_i$s assembled into a diagonal matrix. Let $\hat{\beta}_n$ denote the estimator for $\beta$ under the null hypothesis of $\gamma = 0$.  Let $\hat{W}$ and $\hat{M}$ denote $W$ and $M$ evaluated at $\beta = \hat{\beta}_n$, respectively. Finally, let $e = \hat{M}(Y - \hat{\mu})$ denote the vector of ``working residuals.'' The GLM score test statistic for testing $\gamma = 0$ is given by $$z_\textrm{score} = \frac{X^T \hat{W} e}{X^T \hat{W}X - X^T \hat{W} Z (Z^T \hat{W} Z)^{-1} Z^T \hat{W}X}.$$ Defining $\hat{r} = \hat{W}e$, we can express the numerator as $X^T \hat{r}$. 

The GLM fitting procedure (i.e., iteratively reweighted least squares) computes the estimate $\hat{\beta}_n$, the fitted values $\hat{\mu}$, the weight matrix $\hat{W}$, the working residual vector $e$, and a QR decomposition of the matrix $(\hat{W})^{1/2}Z$. Let \texttt{fit} be a fitted GLM object in R. We can access the fitted values, weight matrix, working residual vector, and QR decomposition via \texttt{fit\$fitted.values}, \texttt{fit\$weights}, \texttt{fit\$residuals}, and \texttt{fit\$qr}, respectively. Both Algorithm $R$ and Algorithm $Q$ use these pieces to compute the GLM score test but in different ways. We present each of these algorithms and compare their theoretical efficiency.

\subsection{Algorithm \textit{R}}

Algorithm $R$ involves directly decomposing the information matrix $Z^\top \hat{W}Z$ via the QR decomposition of $\sqrt{W} Z$. Observe that
$$ Z^\top \hat{W} Z  = \left(\sqrt{\hat{W}}Z\right)^\top \sqrt{\hat{W}}Z = (QR)^\top QR = R^\top Q^\top QR = R^\top R.$$ Thus,
$$ X^\top \hat{W} Z (Z^\top \hat{W}Z)^{-1} Z^\top \hat{W}X = X^\top \hat{W}Z R^{-1} (R^{-1})^\top Z^\top \hat{W}X = X^\top D^\top D X = ||DX||^2,$$ where $D = (R^{-1})^\top Z^\top \hat{W}.$ We quickly and stably can compute $R^{-1}$ via backsubstitution, as $R$ is upper triangular. The terms $X^\top \hat{W}X$ and $X^\top \hat{W} e$ likewise are simple to compute. These observations motivate Algorithm R (Algorithm \ref{algo:algo_r}). If $X$ is binary, we can leverage its binary structure to accelerate step 5 of the algorithm via sparse matrix multiplication.

\begin{algorithm}
	\caption{Algorithm $R$ for computing GLM score tests}
	\KwIn{vectors $X_1, \dots, X_B \in \R^n$ to test for inclusion in the fitted model; weights $\hat{w} = [\hat{W}_1, \dots, \hat{W}_n]$; working residual vector $e$; and QR decomposition of the matrix $\sqrt{\hat{W}}Z.$}\label{algo:algo_r}

	Compute the inverse $(R^\top)^{-1}$ of the matrix $R^\top$ via backsubstitution.

	Compute $D = (R^\top)^{-1} Z^\top \hat{W}$
		
	Compute $\hat{r} = \hat{W} e.$
	
	\For{$X \in \{X_1, \dots, X_B\}$}{
		Compute $$\begin{cases}
			\texttt{top} = \hat{r}^\top X\\
			\texttt{bottom\_left} = w^\top (X \odot X) \\
			\texttt{bottom\_right} = D X.
		\end{cases}
		$$
        
		Compute $z_\textrm{curr} = \texttt{top}/\sqrt{\texttt{bottom\_left} - ||\texttt{bottom\_right}||^2}$
	}
\end{algorithm}

\paragraph{Analysis of Algorithm $R$.} We calculate the number of additions (or subtractions) and multiplications (or divisions) that Algorithm $R$ requires. First, consider the precomputation, i.e.\ steps 1--3. Computing the inverse of $R^\top$ via backsubstitution requires $p^3 + p^2/2$ multiplications and $p^3 - p^2/2$ additions. Next, computing $Z^\top \hat{W}$ requires $np$ multiplications. Leveraging the upper triangular structure of $R$, computing $(R^\top)^{-1} Z^\top \hat{W}$ requires $np(p+1)/2$ multiplications and $np(p-1)/2$ additions. Finally, computing $\hat{r} = \hat{W} e$ requires $n$ multiplications. Thus, the total number of additions and multiplications required in the precomputation portion (i.e., steps 1 -- 3) is  $p^3 - p^2/2 + np^2/2 - np/2$ and $p^3 + p^2/2 + np^2/2 + (3/2)np + n$, respectively.

Next, we analyze the iterative part of the algorithm (i.e., steps 4--5). Consider the standard (i.e., sparsity-unaware) version of the algorithm. Computing $\texttt{top} = \hat{r}^\top X$ requires $n$ multiplications and $n-1$ additions. Next, computing $\texttt{bottom\_left} = w^\top (X \odot X)$ requires $2n$ multiplications and $n-1$ additions. Computing $\texttt{bottom\_right} = DX$ requires $np$ multiplications and $(n-1)p$ additions. Next, computing $|| \texttt{bottom\_right}||^2$ requires $p$ multiplications and $p-1$ additions. Finally, computing $z_\textrm{curr}$ requires a single addition and a single multiplication (ignoring the square root for simplicity). Thus, the iterative part of the algorithm requires $2n + np - 2$ additions and $3n + np + p + 1$ multiplications.

Next, consider the sparsity-exploiting version of the algorithm. Suppose that $X$ is binary and contains $s$ nonzero entries, i.e.\ $s = \sum_{i=1}^n X_i$. Computing $\texttt{top} = \sum_{i:X_i = 1}\hat{r}_i$ requries $s-1$ additions. Likewise, computing $\texttt{bottom\_left} = \sum_{i:X_i = 1} \hat{W}_i$ requires $s-1$ additions. Computing \texttt{bottom\_right} requires $p(s-1)$ additions. Finally, following the logic from above, computing the test statistic using these ingredients requires an additional $p - 1$ additions and $p+1$ multiplications. Hence, the iterative part of the algorithm requires $ps + 2s - 2$ additions and $p+1$ multiplications.

\subsection{Algorithm \textit{Q}}
Algorithm $Q$ is based on an algebraic manipulation of the score test statistic. Let $E = X - Z(Z^\top \hat{W} Z)^{-1} Z^\top \hat{W}X$ be the vector of residuals from an OLS regression of $X$ onto $Z$ using weight matrix $\hat{W}$. We can express the denominator of the score test statistic as
\begin{equation}\label{eqn:score_test_denom_rewrite}
X^\top \hat{W} X - X^\top\hat{W}Z (Z^\top \hat{W} Z)^{-1} Z^\top \hat{W}X = E^\top \hat{W} E.
\end{equation}
Additionally, we can express the numerator of the score test statistic as
$$
X^\top \hat{W} e = E^\top \hat{W} e + X^\top \hat{W} Z(Z^\top \hat{W} Z)^{-1} Z^\top \hat{W}e.
$$
Note that $Z^\top W e$ is the score vector of the GLM. Evaluating the score vector at the MLE causes it to vanish, i.e.\ $Z^\top \hat{W} e = 0.$ Hence, 
\begin{equation}\label{eqn:score_test_num_rewrite}
	X^\top \hat{W}e = E^\top\hat{W} e.
\end{equation}
Combining our expressions for denominator (\ref{eqn:score_test_denom_rewrite}) and numerator (\ref{eqn:score_test_num_rewrite}), we see that
$$z_\textrm{score} = \frac{E^\top \hat{W} e}{\sqrt{E^\top \hat{W}E}} = \frac{\left(\sqrt{\hat W}E\right)^\top \left( \sqrt{\hat{W}}e\right)}{\sqrt{\left(\sqrt{\hat W} E \right)^\top \left( \sqrt{\hat W} E \right)}}.$$ The vector $\hat{W}^{1/2}e$ is immediately computable, and we can compute $\hat{W}^{1/2}E$ via the QR decomposition:
$$\sqrt{\hat{W}} E = \sqrt{\hat{W}}X - QR((QR)^\top QR)^{-1} (QR)^\top X = \sqrt{\hat{W}} X - Q Q^\top \sqrt{\hat{W}} X,$$
where we have used the fact that $R(R^\top R)^{-1}R^\top = I$ for a full-rank, square matrix $R$. These observations motivate Algorithm Q (Algorithm \ref{algo:algo_q}). If $X$ is binary, we can leverage sparse matrix multiplication in steps 3 and 4 to accelerate the algorithm slightly. We are not aware of any use of the sparsity-exploiting version of Algorithm $Q$ in the literature, but we analyze it here for completeness.

\begin{algorithm}
	\caption{Algorithm $Q$ for computing GLM score tests.}\label{algo:algo_q}
	\KwIn{vectors $X_1, \dots, X_B \in \R^n$ to test for inclusion in the fitted model; weights $\hat{w} = [\hat{W}_1, \dots, \hat{W}_n]$; working residual vector $e$; and QR decomposition of the matrix $\sqrt{\hat{W}}Z.$}
	Compute $e_{\texttt{sqrt\_w}} = \sqrt{\hat{W}} e$
	
	\For{$X \in \{X_1, \dots, X_B\}$}{
		Compute $X_\texttt{sqrt\_w} = \sqrt{\hat{W}} X$
		
		Compute $y = Q^\top X_\texttt{sqrt\_w}$
		
		Compute $z = Q y$
		
		Compute $E_{\texttt{sqrt\_w}} = X_\texttt{sqrt\_w} - z$
		
		Compute $\texttt{top} = (E_{\texttt{sqrt\_w}})^\top (e_{\texttt{sqrt\_w}})$
		
		Compute $\texttt{bottom} = \sqrt{(E_{\texttt{sqrt\_w}})^\top (E_{\texttt{sqrt\_w}})} $
		
		Compute $z = \texttt{top}/\texttt{bottom}$ 
		
	}
\end{algorithm}

\paragraph{Analysis of Algorithm $Q$.} The precomputation $\hat{W}^{1/2}e$ requires $n$ multiplications. We handle the standard (i.e., sparsity-unaware) and sparsity-exploiting algorithms separately, starting with the former. Computing $X_\texttt{sqrt\_w} = \hat{W}^{1/2} X$ requires $n$ multiplications. Next, computing $y= Q^\top X_\texttt{sqrt\_w}$ requires $pn$ multiplications and $p(n-1)$ additions. Likewise, computing $z = Qy$ requires $pn$ multiplications and $p(n-1)$ additions. Additionally, computing $E_\texttt{sqrt\_w} = X_\texttt{sqrt\_w} - z$ requires $n$ additions. Computing $\texttt{top} = (E_{\texttt{sqrt\_w}})^\top (e_{\texttt{sqrt\_w}})$ requires $n$ multiplications and $n-1$ additions, as does computing $\texttt{bottom} =  (E_{\texttt{sqrt\_w}})^\top (E_{\texttt{sqrt\_w}})$. Finally, computing $z = \texttt{top}/\texttt{bottom}$ involves a single division. Hence, the total number of additions and multiplications required by the iterative portion of the algorithm is $2np + 3n - 2p - 2$ and $2np + 3n + 1$, respectively.

Next, consider the sparsity-exploiting version of Algorithm $Q$. Computing $X_\texttt{sqrt\_w} = \hat{W}X$ requires zero multiplications or additions. Computing $y = Q^\top X_\texttt{sqrt\_w}$ requires $sp$ multiplications and $p(s-1)$ additions. The vector $y \in \R^p$ is (in general) dense. Thus, steps 5-9 are the same as in the sparsity-unaware version of the algorithm. In total, the iterative portion of the algorithm requires $np + ps + 3n - 2p - 2$ additions and $np + sp + 2n + 1$ multiplications.

\begin{table}[]
    \centering
    \begin{tabular}{p{4.0cm}|p{6.0cm}|p{4cm}}
         & \textbf{Algorithm $R$} & \textbf{Algorithm $Q$}  \\
         \hline
        \textbf{Sparsity-unaware} & $2p^3 + np^2 + (1 + 2B)np + (1 + 5B)n + Bp - B$ & $n + 4Bnp + 6Bn - 2Bp - B$ \\
         \hline
        \textbf{Sparsity-exploiting} & $2p^3 + np^2 + np + n + Bp\pi n + Bp + 2B\pi n - B$ & $n + 2Bnp + 2Bp\pi n + 5Bn - 2Bp + B$
    \end{tabular}
    \caption{The total number of floating point operations required of the sparsity-unaware and sparsity-exploiting versions of Algorithm $R$ and Algorithm $Q$ as a function of $n, p, \pi$, and $B$.}\label{tab:algo_qr_tab}
\end{table}

\subsection{Comparing Algorithm $R$ and Algorithm $Q$}

\begin{figure}
	\centering
	\includegraphics[width=0.8\linewidth]{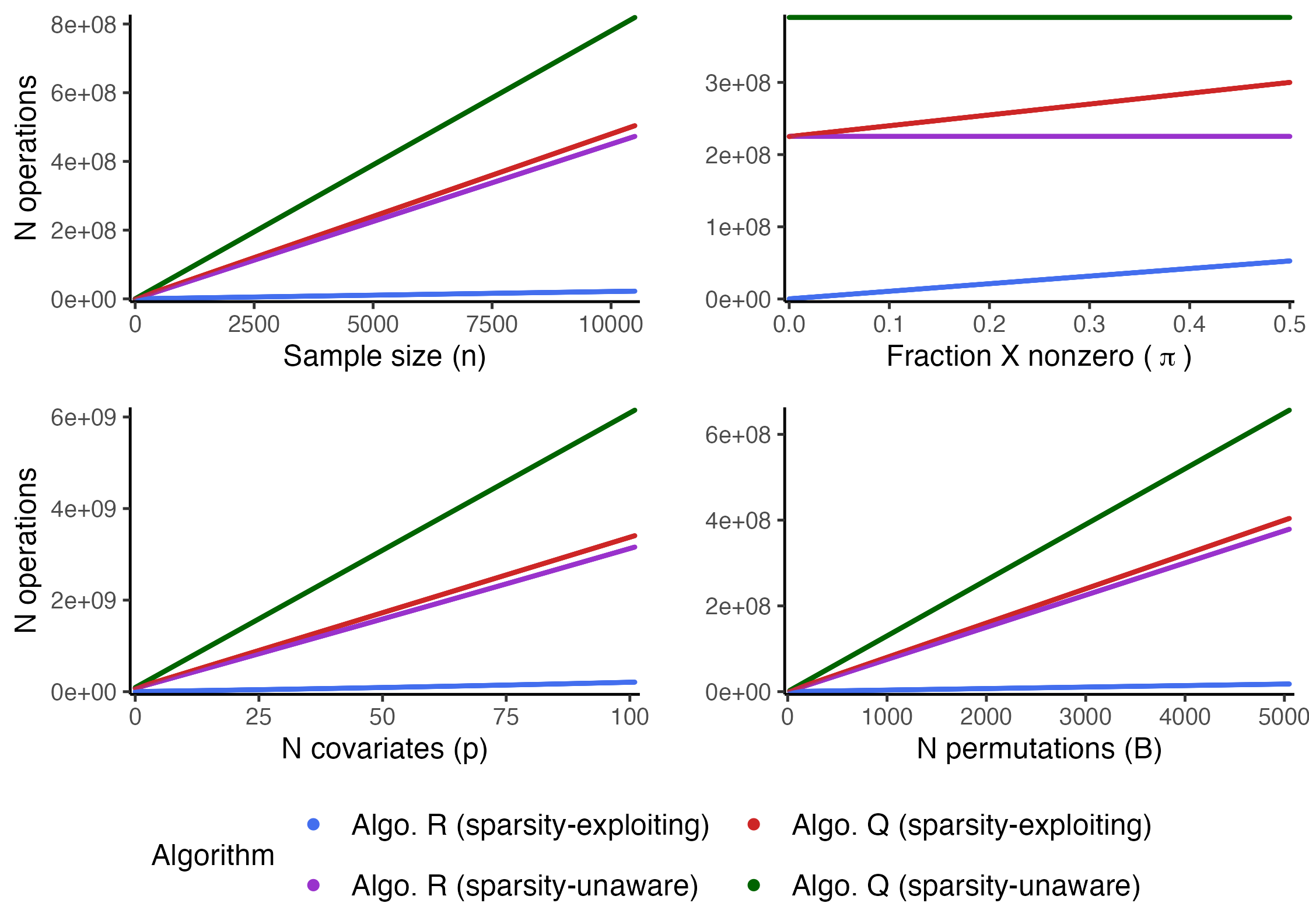}
	\caption{The floating point operation count of the sparsity-unaware and sparsity-exploiting versions of Algorithms $R$ and $Q$ as a function of $n$, $p$, $\pi$, and $B$.}
	\label{fig:lin_alg_sup}
\end{figure}

We can tabulate the total number of floating operations required of the sparsity-unaware and sparsity-exploiting versions of Algorithm $R$ and Algorithm $Q$ using the calculations above. Let $n$ denote the sample size, $p$ the number of covariates, $\pi \equiv s/n$ the fraction of nonzero entries in the treatment vector, and $B$ the number of treatment vectors to test for inclusion in the fitted model. The total floating point operation count of each algorithm as a function $n,p,\pi,$ and $B$ is shown in Table \ref{tab:algo_qr_tab}. To visualize these expressions, we plot (in Figure \ref{fig:lin_alg_sup}) the floating point operation count of each algorithm as a function of $n$ (upper left), $\pi$ (upper right), $p$ (lower left), and $B$ (lower right), while holding fixed the remaining parameters (at $n = 5,000$, $p = 5$, $\pi = 0.1$, $B = 3,000$). We see that (i) the sparsity-exploiting algorithms are more efficient than their sparsity-unaware counterparts, and (ii) both variants of Algorithm $R$ are more efficient than both variants of Algorithm $Q$. Note that Figure \ref{fig:lin_alg_main} (in the main text) compares the sparsity-exploiting version of Algorithm $R$ to the sparsity-unaware version of Algorithm $Q$.

\section{A primer on adaptive permutation testing}\label{sec:adaptive_permutation_testing}

We provide a brief, self-contained overview of the adaptive permutation testing scheme of \citet{Fischer2024b}, which is a crucial to the computational efficiency of the permuted score test (Algorithm \ref{algo:permuting_score_stats}). The intuition behind this procedure is as follows. Let $z_\textrm{orig}$ be the test statistic computed on the original (i.e., unpermuted) data, and let $z_1, z_2, z_3, \dots$ be the ``null'' test statistics computed (sequentially) on the permuted data. The null statistics implicitly give rise to a sequence of $p$-values $p_1, p_2, p_3, \dots$. If the $p$-value $p_t$ is unpromising (e.g., $p_t = 0.6$) after some large number of permutations, we can stop the permutation test early, as it is unlikely that we will reject the null hypothesis, even after computing additional null statistics. On the other hand, we also can stop the permutation test early if the $p$-value $p_t$ drops below the rejection threshold at some time $t \in \N$ . The former (resp., latter) form of stopping is called ``stopping due to futility'' (resp., ``stopping due to rejection'').

Central to the framework of \citet{Fischer2024a} is the notion of an ``anytime-valid $p$-value'' \citep{Ramdas2023}. Intuitively, an anytime-valid $p$-value is a $p$-value that is uniformly valid over time. Formally, an anytime-valid $p$-value $\{ p_t \}_{t=1}^\infty$ is a sequence of $[0,1]$-valued random variables such that, under the null hypothesis and for all $u \in [0,1]$,
\begin{equation}\label{eqn_anytime_valid_p_value}
\P\left(\inf_{t \in \N} p_t \leq u \right) \leq u.
\end{equation}
In other words, the sequence $\{p_t\}_{t=1}^\infty$ never drops below $u$ with probability at least $1 - u$. The key property of an anytime-valid $p$-value $\{ p_t\}_{t=1}^\infty$ is that $p_\tau$ is a valid $p$-value for \textit{any} (possibly data-dependent) stopping time $\tau \in \N$.

We can leverage the notion of an anytime-valid $p$-value to construct a method for adaptive permutation testing. Consider a given hypothesis. Let $$K_t = \sum_{i=1}^t \mathbb{I}(z_i \geq z_\textrm{orig})$$ denote the number of ``losses'' incurred at round $t$, where a ``loss'' is defined as a null statistic equaling or exceeding the original statistic. The number of losses is monotonically increasing in the number of rounds. Let $h \in \N$ be a user-specified tuning parameter (typically, $h$ is set to an integer in the range $[10, 30]$). The parameter $h$ trades off computation and power, with larger values conferring greater power at the cost of increased computation. Let $t^*$ denote the round at which the number of losses $K_t$ hits $h$, i.e.
$$t^* = \inf \{t \in \mathbb{N} :  K_t = h\}.$$ \citet{Fischer2024b} proposed the following (right-tailed) anytime-valid (AV) $p$-value:
$$p^\textrm{AV}_t = \begin{cases}  \frac{h}{t + h - K_t}, \textrm{ if } t < t^* \\ \frac{h}{t^*}, \textrm{ if } t \geq t^*. \end{cases}$$ Notice that $p^\textrm{AV}_t$ is fixed at $h/t^*$ after $h$ losses have accrued. The sequence $\{p_t^\textrm{AV} \}_{t=1}^\infty$ is an anytime-valid $p$-value, i.e.\ it satisfies the condition (\ref{eqn_anytime_valid_p_value}). We can prove this fact in two steps. First, we will show that $p_{t^*} = h/t^*$ is a valid $p$-value, and second, we will show that $p_t^\textrm{AV}$ is lower-bounded by $p_{t^*}$ for all $t \in \N$.

\underline{Step 1: $p_{t^*}$ is a valid $p$-value}. Under the null hypothesis, the rank of $z_\textrm{orig}$ is uniformly distributed among $\{z_\textrm{orig}, z_1, z_2, \dots, z_{t^*}\}$. Thus, for $l \in \{h, h+1, h+2, \dots\}$, the stopping time $t^*$ is greater than or equal to $l$ if and only if $h$ null statistics (among the entire set of $l$ null statistics) exceeds the original statistic. This latter event occurs with probability $l/h$, implying that $\P(t^* \geq l) = h/l$. Hence,
$$\P\left( p_{t^*} \leq h/l \right) \leq \P\left( h/t^* \leq h/l \right) = \P(t^* \geq l) = h/l,$$
i.e.\ $p_{t^*}$ is a valid $p$-value. \citet{Besag1991} introduced the $p$-value $p_{t^*}$ in their early work on sequential permutation testing. The anytime-valid $p$-value $p_t^\textrm{AV}$ of \citet{Fischer2024b} can be seen as an anytime-valid generalization of $p_{t^*}$.

\underline{Step 2: $p_t^\textrm{AV}$ is lower-bounded by $h/t^*$ for all $t \in \mathbb{N}$}. Consider round $t$. The number of losses incurred at round $t$ is $K_t$. Thus, we must compute \textit{at least} $K_t - h$ additional null statistics for the number of losses $K_t$ to hit $h$. This implies that $t^* \geq t + K_t - h.$ Taking the reciprocal of this inequality and multiplying by $h$, we see that
$$\frac{h}{t^*} \leq  \frac{h}{t + K_t -h} = p_t^\textrm{AV} .$$ But $t$ was selected arbitrarily, implying that the above inequality holds for all $t \in \N$, i.e.
$$\frac{h}{t^*} \leq \inf_{t \in \N} \frac{h}{t + K_t -h} = \inf_{t \in \N} p_t^\textrm{AV}.$$ Finally, combining the above result with the validity of the $p$-value $h/t^*$, we conclude
$$ \P\left( \inf_{t \in \N} p_t^\textrm{AV} \leq u \right) \leq \P\left( \frac{h}{t^*} \leq u \right) \leq u,$$ i.e., $\{ p_t^\textrm{AV}\}_{t=1}^\infty$ is an anytime-valid $p$-value.  $\square$

Equipped with a suitable anytime-valid $p$-value $\{p_t^\textrm{AV}\},$ we can design a simple procedure for adaptive permutation testing of multiple hypotheses. Suppose that we seek to test the hypotheses $H_0^1, \dots, H_0^m,$ which correspond to ``datasets'' $X^1, \dots, X^m$. (In the context of differential expression testing, each ``dataset'' is a tuple $\{(X_i, Y_i, Z_i)_{i=1}^n\}$ of treatments, gene expressions, and nuisance covariates.) Our objective is to produce a discovery set that controls the FDR. Define the ``active set'' $\mathcal{A}$ as the set of hypotheses currently under consideration (i.e., the set of hypotheses for which we have not yet stopped due to rejection or futility). We initialize the active set $\mathcal{A}$ to the set of all hypotheses, i.e.\ $\mathcal{A} = \{1, \dots, m\}$. Next, define the ``futility set'' $\mathcal{F}$ and the ``rejection set'' $\mathcal{R}$ as the sets of hypotheses for which we have stopped due to futility and rejection, respectively. We initialize $\mathcal{F}$ and $\mathcal{R}$ to the empty set, i.e. $\mathcal{F} = \mathcal{R} = \emptyset.$ Let $p^i_t$ (resp., $K^i_t$) denote the anytime-valid $p$-value (resp., the number of losses) of the $i$th hypothesis at time $t$. Finally, let $X_i$ denote the dataset corresponding to the $i$th hypothesis, and let $\tilde{X}^i$ signify a randomly permuted version of $X^i$.

We proceed as follows (Algorithm \ref{algo:adaptive_permutation_test}). First, we compute the original statistic $z^i_\textrm{orig} = T(X_i)$ for all hypotheses $i \in \{1, \dots, m\}$. While the active set is nonempty, we iterate through the hypotheses in the active set, computing the null statistic $z^t_i = T(\tilde{X}^i)$. If $z_i^t > z_\textrm{orig}^t$, we increment the number of losses $K_t^i$ by 1. (Otherwise, we keep $K_t^i = K_{t-1}^i$.) If the number of losses $K_t^i$ hits $h$, we move hypothesis $i$ from the active set into the futility set. Next, we update the $p$-value via $p^t_i = h/(t + h - K_t^i)$. (If hypothesis $i$ is not in the active set at time $t$, we simply set the $p$-value $p_t^i$ to its previous value, i.e.\ $p_{t}^i = p_{t-1}^i$.) We then run the BH procedure on $\{p_t^1, \dots, p_t^m\},$ yielding a rejection threshold $R_t$. Finally, we iterate through the hypotheses in the active set once more, moving hypothesis $i$ from the active set into the rejection set if $p_t^i$ falls below the rejection threshold $R_t$. We stop this process when the active set is empty.

Let $\tau_i$ denote the stopping time of the $i$th hypothesis, i.e.\ $\tau_i$ is the round at which we move hypothesis $i$ from the active set into either the futility set or rejection set. The key insight is that $p_{\tau_i}$ is a valid $p$-value. Thus, the discovery set --- which we obtain by applying a BH correction to $p_{\tau_1}, \dots, p_{\tau_m}$ --- controls the FDR by virtue of the BH procedure. One can show that, if the number of losses $K^i_t$ for hypothesis $i$ hits $h$ in round $\tau_i$, then hypothesis $i$ cannot be rejected in round $\tau_i$ or in any subsequent round. Thus, hypothesis $i$ remains in the futility set for all rounds $t \geq \tau_i$. Additionally, because the rejection threshold $R_t$ is monotonically increasing and the anytime-valid $p$-value $p^t_i$ is monotonically decreasing in $t$, if hypothesis $i$ is rejected in round $\tau_i$, then hypothesis $i$ remains in the rejection set for all rounds $t \geq \tau_i$.

\begin{algorithm}
	
	\caption{Adaptive permutation test for controlling FDR.}\label{algo:adaptive_permutation_test}
	
	\textbf{Input}: Datasets $X^1, \dots, X^m$, test statistic $T$, tuning parameter $h$ 
	
	Set $\mathcal{A} \leftarrow \{1, \dots, m\}$, $\mathcal{F} \leftarrow \emptyset,$ $\mathcal{R} \leftarrow \emptyset, t \leftarrow 0, \{K^i \leftarrow 0\}_{i=1}^m$
	
	\For{$i \in \{1, \dots, m\}$}{
		Compute $z_\textrm{orig}^i = T(X^i)$ 
	}
	\While{$\mathcal{A}$ is nonempty}{
		$t \leftarrow t + 1$
		
		\For{$i \in \mathcal{A}$}{
			$z^i \leftarrow T(\tilde{X}^i)$
			
			\If{$z^i > z^i_\textrm{orig}$}{
				$K^i \leftarrow K^i + 1$
				
				\If{$K^i == h$}{
					Move $i$ from $\mathcal{A}$ into $\mathcal{F}$
				}
			}
			
			$p^i \leftarrow h/(t + h - K^i)$
			
		}
		
		$R \leftarrow \textrm{Compute the BH rejection threshold on }\{p^1, p^2, \dots, p^m\}$
		
		\For{$i \in \mathcal{A}$}{
			\If{$p^i \leq R$}{
				Move $i$ from $\mathcal{A}$ into $\mathcal{R}$
			}
		}
	}
	\textbf{Return}: $\mathcal{R}$
\end{algorithm}

\end{document}